\documentclass[runningheads]{llncs}
\usepackage{graphicx}
\usepackage{amssymb,amsmath}

\usepackage{amsthm}
\usepackage{xcolor}
\usepackage[inline]{enumitem}
\usepackage{pifont}
\usepackage{listings}
\usepackage{graphicx}
\usepackage{wrapfig}
\usepackage{tikz}
\usetikzlibrary{shapes.misc,arrows.meta}

\usepackage{float}
\usepackage{mathtools}
\usepackage{etoolbox}
\usepackage{comment}
\usepackage{multirow}
\usepackage{stmaryrd}


\makeatletter
\newtheorem*{rep@theorem}{\rep@title}
\newcommand{\newreptheorem}[2]{%
\newenvironment{rep#1}[1]{%
 \def\rep@title{#2 \ref{##1}}%
 \begin{rep@theorem}}%
 {\end{rep@theorem}}}
\makeatother

\newreptheorem{theorem}{Theorem}
\newreptheorem{lemma}{Lemma}

\newcommand{\mpcomment}[1]{}

\newcommand{\figlabel}[1]{\label{fig:#1}}
\newcommand{\figref}[1]{Fig.~\ref{fig:#1}}
\newcommand{\seclabel}[1]{\label{sec:#1}}
\newcommand{\secref}[1]{Section~\ref{sec:#1}}
\newcommand{\exlabel}[1]{\label{ex:#1}}
\newcommand{\exref}[1]{Example~\ref{ex:#1}}
\newcommand{\deflabel}[1]{\label{def:#1}}
\newcommand{\defref}[1]{Definition~\ref{def:#1}}
\newcommand{\thmlabel}[1]{\label{thm:#1}}
\newcommand{\thmref}[1]{Theorem~\ref{thm:#1}}
\newcommand{\proplabel}[1]{\label{prop:#1}}
\newcommand{\propref}[1]{Proposition~\ref{prop:#1}}
\newcommand{\lemlabel}[1]{\label{lem:#1}}
\newcommand{\lemref}[1]{Lemma~\ref{lem:#1}}

\newcommand{\corlabel}[1]{\label{cor:#1}}

\newcommand{\applabel}[1]{\label{app:#1}}
\newcommand{\appref}[1]{Appendix~\ref{app:#1}}


\newcommand{\Cc}{\mathcal{C}}

\newcommand{\Ff}{\mathcal{F}}
\newcommand{\Rr}{\mathcal{R}}

\newcommand{\Aa}{\mathcal{A}}
\newcommand{\Mm}{\mathcal{M}}
\newcommand{\Uu}{\mathcal{U}}

\newcommand{\Vv}{\mathcal{V}}

\newcommand{\Ll}{\mathcal{L}}
\newcommand{\Tt}{\mathcal{T}}

\renewcommand{\vec}[1]{{\bf #1}}
\newcommand{\set}[1]{\{#1\}}
\newcommand{\setpred}[2]{\{#1 \mid #2\}}
\newcommand{\sem}[1]{\llbracket #1 \rrbracket}
\newcommand{\angular}[1]{\langle #1 \rangle}

\renewcommand{\emptyset}{\varnothing}

\newcommand{\xdownarrow}[1]{%
  {\left\downarrow\vbox to #1{}\right.\kern-\nulldelimiterspace}
}
\newcommand{\longproj}[2]{{#1}\xdownarrow{0.27cm}_{#2}}

\newcommand{\stmt}{\angular{\textrm{stmt}}}
\newcommand{\cond}{\angular{\textrm{cond}}}


\newcommand{\code}[1]{ \texttt{#1}}
\newcommand{\codekey}[1]{{\textbf{#1}}}
\newcommand{\cd}[1]{\code{#1}}
\newcommand{\cdm}[1]{ \mathtt{#1}}
\newcommand{\pskip}{\codekey{skip}}
\newcommand{\passume}{\codekey{assume}}
\newcommand{\pif}{\codekey{if}}
\newcommand{\pthen}{\codekey{then}}
\newcommand{\pelse}{\codekey{else}}
\newcommand{\pwhile}{\codekey{while}}
\newcommand{\passign}{:=}

\newcommand{\exec}{\textsf{ Exec}}

\newcommand{\comp}{\textsf{ TEval}}
\newcommand{\Terms}{\textsf{ Terms}}
\newcommand{\cohexec}{\textsf{\small CoherentExecs}}

\newcommand{\init}[1]{\widehat{#1}}

\newcommand{\dblqt}[1]{\text{``}#1\text{''}}

\newcommand{\fun}{\textsf{fun}}
\newcommand{\rel}{\textsf{rel}}



\newcommand{\congcl}[1]{\cong_{#1}}
\newcommand{\ncongcl}[1]{{\ncong}_{#1}}
\newcommand{\eqcl}[2]{[#1]_{#2}}

\newcommand{\undf}{\mathsf{undef}}

\newcommand{\reject}{q_\mathsf{reject}}

\newcommand{\trans}{\textsf{trans}}
\newcommand{\sym}{\textsf{symm}}
\newcommand{\refl}{\textsf{refl}}
\newcommand{\irrefl}{\textsf{irref}}

\newcommand{\comm}{\textsf{comm}}
\newcommand{\idem}{\textsf{idem}}
\newcommand{\spo}{\textsf{SPO}}
\newcommand{\sto}{\textsf{STO}}
\newcommand{\assoc}{\textsf{assoc}}

\newcommand{\pspc}{\mathsf{PSPACE}}

\newcommand{\drawdirectedline}{\raisebox{0pt}{\scalebox{0.7}{\tikz{\draw[-{Latex[length=2mm, width=2mm]}, thick](0,0) -- (7mm,0);}}}}
\newcommand{\drawdirecteddash}{\raisebox{0pt}{\scalebox{0.7}{\tikz{\draw[-{Latex[length=2mm, width=2mm]}, thick, dashed](0,0) -- (7mm,0);}}}}

\newcommand{\epr}{\textsf{EPR}}
\newcommand{\scc}{\textsf{SCC}}

\newcommand{\minimal}{\textsf{min}}

\newcommand{\valplain}{\textsf{eval}}
\newcommand{\val}[3]{\valplain_{#1}(#2, #3)}

\newcommand{\oneext}{\mathsf{1Ext}}
\newcommand{\closureext}[2]{\mathsf{ClosureExt}^{#1}_{#2}}
\newcommand{\transclosureext}[2]{\mathsf{Trans}\closureext{#1}{#2}}
\newcommand{\execrestrict}{\mathsf{ExecRestr}}

\setlength{\belowcaptionskip}{-10pt}

\begin{document}
\title{What's Decidable About Program Verification Modulo Axioms?} 

\author{Umang Mathur\and
P. Madhusudan \and
Mahesh Viswanathan}
\authorrunning{U. Mathur et al.}
\institute{University of Illinois, Urbana Champaign, USA \\
}
\maketitle
%


\begin{abstract}
%
We consider the decidability of the verification problem of programs \emph{modulo axioms} --- automatically verifying whether programs satisfy their
assertions, when the function and relation symbols are interpreted as arbitrary functions and relations that satisfy a set of first-order axioms.
Though verification of uninterpreted programs (with no axioms) is already undecidable, a recent work introduced a subclass of \emph{coherent} uninterpreted programs, 
and showed that they admit decidable verification~\cite{coherence2019}. 
We undertake a systematic study of various natural axioms for relations and functions, 
and study the decidability of the coherent verification problem. 
%
Axioms include relations being reflexive, 
symmetric, transitive, or total order relations, 
functions restricted to being associative, idempotent or commutative, and combinations of such axioms as well. 
Our comprehensive results unearth a rich landscape that shows that though several axiom classes admit decidability for coherent programs, coherence is not a panacea as several others continue to be undecidable.
\end{abstract}


\section{Introduction}
\seclabel{intro}
Programs are proved correct against safety specifications typically by induction--- the induction hypothesis is specified using \emph{inductive invariants} of the program, and one proves that the reachable states of the program stays within the region defined by the invariants, inductively. Though there has been tremendous progress in the field of \emph{decidable logics} for proving that invariants are inductive, finding inductive invariants is almost never fully automatic. 
And completely automated verification of programs is almost always undecidable.

Programs can be viewed as working over a data-domain, with variables storing values over this domain and being updated using constants, functions and relations defined over that domain. Apart from the notable exception of finite data domains, program verification is typically undecidable when the data domain is infinite. In a recent paper, Mathur et. al.~\cite{coherence2019} establish new decidability results when the data domain is infinite. Two crucial restrictions are imposed --- data domain functions and relations are assumed to be \emph{uninterpreted} and programs are assumed to be \emph{coherent} (the meaning of coherence is discussed later in this introduction). The theory of uninterpreted functions is an important theory in SMT solvers that is often used (in conjunction with other theories) to solve feasibility of loop-free program snippets, in bounded model-checking, and to validate verification conditions. The salient aspect of~\cite{coherence2019} is to show that entire program verification is decidable for the class of coherent programs, without any user-provided inductive invariants (like loop invariants). While the results of~\cite{coherence2019} were mainly theoretical, there has been recent work on applying this theory to verifying memory-safety of heap-manipulating programs~\cite{memsafeMMKMV19}.

Data domain functions and relations used in a program usually satisfy special properties and are not, of course, entirely uninterpreted. The results of~\cite{coherence2019} can be seen as an approximate/abstraction-based verification method in practice --- if the program verifies assuming functions and relations to be uninterpreted, then the program is correct for \emph{any} data domain. However, properties of the data domain are often critical in establishing correctness. For example, in order to prove that a sorting program 
results in sorted arrays, it is important that the binary relation $<$ used to compare
elements of the array is a total ordering on the underlying data sort. Consequently, constraining the data domain to satisfy certain axioms results in more accurate modeling for verification.


\emph{In this paper, we undertake a systematic study of the verification of uninterpreted programs when the data-domains are constrained using theories specified by (universally quantified) axioms.} The choice of the axioms we study are guided by two principles. First, we study natural mathematical properties of functions and relations. 
Second, we choose to study axioms that have a decidable \emph{quantifier-free} fragment of first order logic. 
The reason is that even single program executions can easily encode quantifier-free formulae (by computing the terms in variables, and assert Boolean combinations of atomic relations and equality on them). Since we are seeeking decidable  verification for programs \emph{with loops/iteration}, it makes little sense to examine axioms where even verification of single exectutions is
undecidable.

\subsubsection*{Coherence modulo theories:}
Mathur et. al.~\cite{coherence2019} define a subclass of programs, called \emph{coherent programs}, 
for which program verification on uninterpreted domains is decidable; 
without the restriction of \emph{coherence}, program verification on uninterpreted domains is undecidable. Since our framework is strictly more powerful, we adapt the notion of coherence to incorporate theories. 
A coherent program~\cite{coherence2019} is one where all executions satisfy two properties --- memoizing and early-assumes. The memoizing property demands that the program computes any term, modulo congruence induced by the equality assumes in the execution, only once. More precisely, if an execution recomputes a term, the term should be stored in a current variable. The early-assumes restriction demands, intuitively, that whenever the program assumes two terms to be equal, it should do so early, before computing superterms of them.

We adapt the above notion to \emph{coherence modulo theories}\footnote{We adapt the definition in a way that preserves the spirit of the definition of coherence. Moreover, if we do not adapt the definition, essentially all axioms classes we study in this paper would be undecidable.}. The memoizing and early-assumes property are now required modulo the equalities that are entailed by the axioms. More precisely, if the theory is characterized by a set of axioms $\Aa$, the memoizing property demands that if a program computes a term $t$ and there was another term $t'$ that it had computed earlier which is equivalent to $t$ modulo the assumptions made thus far \emph{and the axioms $\Aa$},  then $t'$ must be currently stored in a variable. Similarly, the early-assumes condition is also with respect to the axioms --- if the program execution observes a new assumption of equality or a relation holding between terms, then we require that any equality entailed newly by it, the previous assumptions \emph{and the axioms $\Aa$}  do not involve a dropped term. This is a smooth extension of the notion of coherence from~\cite{coherence2019}; when $\Aa = \emptyset$, we essentially retrieve the notion from~\cite{coherence2019}.
\subsection*{Main Contributions}

Our first contribution is an extension of the notion of coherence in~\cite{coherence2019} to handle the presence of axioms, as 
described above; this is technically nontrivial and we provide a 
natural extension.

Under the new notion of coherence, we first study axioms on relations. The EPR (effectively propositional reasoning)~\cite{Ramsey1987} fragment of first order logic is one of the few fragments of first order logic that is decidable, and has been exploited for bounded model-checking and verification condition validation in the literature~\cite{ivy2016,paxosepr2017,Padon2016}. We study axioms written in EPR (i.e., universally quantified formulas involving only relations) and show that 
verification for even coherent programs, modulo EPR axioms, is undecidable.

Given the negative result on EPR, we look at particular natural axioms for relations, which are nevertheless expressible in EPR. In particular, we look at reflexivity, irreflexivity, and symmetry axioms, and show that verification of coherent programs is decidable when the interpretation of some relational symbols is constrained to satisfy these axioms. Our proof 
proceeds by instrumenting the program with auxiliary {\tt assume} statements that preserve coherence and subtle arguments that show that verification can be reduced to the case without  axioms; decidability then follows from results established in~\cite{coherence2019}. 

We then show a much more nontrivial result that
verification of coherent programs remains decidable when some relational symbols are constrained to be transitive. The proof relies on
new automata constructions that compute streaming congruence closures while interpreting the relations to be transitive.

Furthermore, we show that 
combinations of reflexivity, irreflexivity, symmetry, and transitivity, admit a decidable verification problem for coherent program. Using this observation, we conclude decidability of verification when certain relations are required to be  strict partial orders (irreflexive and transitive) or equivalence relations.  

We then consider 
axioms that capture total orders and show that they too admit a decidable coherent verification problem. 
Total orders are also expressible in EPR and their formulation in EPR has been used in program verification, as they can be used in lieu of the ordering on integers when only ordering is important. For example, they can be used to model data in sorting algorithms, array indices in modeling distributed systems to model process ids and the states of processes, etc.~\cite{ivy2016,paxosepr2017}.

Our next set of results consider axioms on functions. Associativity and commutativity are natural and fundamental properties of  functions (like $+$ and $*$) and are hence natural ways to capture/abstract using these axioms. (See~\cite{Gulwani2007} where such abstractions are used in program analysis.) We first show that verification of coherent programs is decidable when some functions are assumed to be commutative or idempotent. Our proof, similar to the case of reflexive and symmetric relations, relies on reducing verification to the case without axioms using program instrumentation that capture the commutativity and idempotence axioms. However, when a function is required to be associative, the verification problem for coherent programs becomes undecidable. This undecidability result was surprising to us.

The decidability results established for properties of individual relation or function symbols discussed above can be combined to yield decidable verification modulo a set of axioms. That is, the verification of coherent programs with respect to models where relational symbols satisfy some subset of reflexivity/irreflexivity/symmetery/transitivity axioms or none, and function symbols are either uninterpreted, commutative, or idempotent, is decidable. 

Decidability results outlined above, apply to programs that are coherent modulo the axioms/theories. However, given a program, in order to verify it using our techniques, we would also like to decide whether the program \emph{is} coherent modulo axioms. We prove 
that for all the decidable axioms above, checking whether programs are coherent modulo the axioms is a decidable problem. Consequently, under these axioms, we can both check whether programs are coherent modulo the axioms and if they are, verify them.


There are several other results that we mention only in passing. For instance, we show that even for single executions, verifying them modulo equational axioms is undecidable as it is closely related to the word problem for groups. And our positive results for program verification under axioms for functions (commutativity, idempotence), also shows that bounded model-checking under such axioms is decidable, which can have its own applications.




Due to the large number of results and technically involved proofs, we give only the main theorems and proof gists for some of these in the paper; details can be found 
in the Appendix.


\section{Illustrative Example}
\seclabel{illustrative-ex}


\begin{figure}[t]
\scalebox{0.8}{
\begin{minipage}{1.3\textwidth}
\begin{minipage}[t]{0.5\textwidth}

\rule[1mm]{1cm}{0.5pt} $P_\text{check-key}$ \rule[1mm]{1cm}{0.5pt} \\\\
\passume~(\cd{\!\!T~$\neq$~F});\\
\cd{exists \passign~F;}\\
\pwhile (\cd{x} $\neq$ \cd{NIL})~\{\\
\rule[1mm]{0.30cm}{0pt}
\pif (\cd{k} = \cd{key(x)}) \pthen \cd{exists{\passign}~T;} \\ 
\rule[1mm]{0.25cm}{0pt}
\cd{y}~\passign\cd{next(x)};\\
\rule[1mm]{0.25cm}{0pt}
\cd{x}~\passign\cd{y};\\
\}\\
\end{minipage}
\quad
\begin{minipage}[t]{0.55\textwidth}
\rule[1mm]{0.5cm}{0.5pt} $P_\text{check-key-sorted}$ \rule[1mm]{0.5cm}{0.5pt} \\\\
\passume~(\cd{\!\!T~$\neq$~F});\\
\cd{found \passign~F;}\\
\cd{stop \passign~F;}\\
\pwhile (\cd{x} $\neq$ \cd{NIL})~\{\\
\rule[1mm]{0.30cm}{0pt}
\pif (\cd{stop} $=$ \cd{F}) \pthen~\{ \\
\rule[1mm]{0.60cm}{0pt}
\pif (\cd{k} $=$ \cd{key(x)}) \pthen \cd{found \passign~T;} \\ 
\rule[1mm]{0.60cm}{0pt}
\pif (\cd{k} $\leq$ \cd{key(x)}) \pthen \cd{stop \passign~T;} \\ 
\rule[1mm]{0.30cm}{0pt}
\} \\
\rule[1mm]{0.25cm}{0pt}
\cd{y}~\passign\cd{next(x)};\\
\rule[1mm]{0.25cm}{0pt}
\cd{x}~\passign\cd{y};\\
\}\\
\end{minipage}
\end{minipage}
}
\caption{
Program $P_\text{check-key}$ (left) 
checks if the key $\cd{k}$
exists in the list starting at $\cd{x}$ and sets the variable $\cd{exists}$
to $\cd{T}$ if it does.
Program $P_\text{check-key-sorted}$ (right) checks if the list 
starting at $\cd{x}$ contains the key $\cd{k}$ and works as expected
on a sorted list.
$<$ is interpreted as a strict total order.
The condition $a \leq b$ is shorthand for $a < b \lor a = b$
}
\figlabel{sep-progs}
\end{figure}


Consider the problem of searching for an element $\cd{k}$ in a sorted list. 
There are two simple algorithms for this problem. 
Algorithm 1 
(\figref{sep-progs}, left) 
walks through the list from beginning to end, 
and if it finds $\cd{k}$, it sets a Boolean variable $
\cd{exists}$ to $\cd{T}$. 
Notice this algorithm does not exploit the sortedness property of the list. 
Algorithm 2 
(\figref{sep-progs}, right) 
also walks through the list, 
but it stops as soon as it either finds $\cd{k}$ or reaches an element that is larger than $\cd{k}$. 
If it finds the element it sets a Boolean variable $\cd{found}$ to $\cd{T}$. 
If both algorithms are run on the same sorted list, 
then their answers (namely, $\cd{exists}$ and $\cd{found}$) must be the same.



\begin{figure}[t]
\scalebox{0.8}{
\noindent
\begin{minipage}[H]{0.62\textwidth}
\passume~(\cd{\!\!T~$\neq$~F});\\
\cd{found \passign~F;}\\
\cd{stop \passign~F;}\\
\cd{exists \passign~F;}\\
\cd{sorted \passign~T;}\\
\pwhile (\cd{x} $\neq$ \cd{NIL})~\{\\
\rule[1mm]{0.30cm}{0pt}
\pif (\cd{stop} $=$ \cd{F}) \pthen~\{ \\
\rule[1mm]{0.60cm}{0pt}
\pif (\cd{k} $=$ \cd{key(x)}) \pthen \cd{found \passign~T;} \\ 
\rule[1mm]{0.60cm}{0pt}
\pif (\cd{k} $\leq$ \cd{key(x)}) \pthen \cd{stop \passign~T;} \\ 
\rule[1mm]{0.30cm}{0pt}
\} \\
\rule[1mm]{0.30cm}{0pt}
\pif (\cd{k} $=$ \cd{key(x)}) \pthen \cd{exists \passign~T;} \\ 
\rule[1mm]{0.25cm}{0pt}
\cd{y}~\passign\cd{next(x)};\\
\rule[1mm]{0.30cm}{0pt}
\pif (\cd{y} $\neq$ \cd{NIL}) \pthen~\{ \\
\rule[1mm]{0.60cm}{0pt}
\pif (\cd{k(x)} $\not\leq$ \cd{k(y)}) \pthen  \cd{sorted \passign~F;} \\
\rule[1mm]{0.30cm}{0pt}
\} \\
\rule[1mm]{0.25cm}{0pt}
\cd{x}~\passign\cd{y};\\
\}\\
\code{@post:} \code{sorted} = \cd{T} $\implies$ \code{found} =\code{exists}
\\
\end{minipage}
\vrule
\quad
\begin{minipage}[H]{0.5\textwidth}
\scalebox{0.9}{
\begin{tikzpicture}

\node (node-1) at (1,0) [draw, circle, minimum width = 1em] {$e_1$};
\node (node-2) at (1,-2) [draw, circle, minimum width = 1em] {$e_2$};
\node (node-3) at (1,-4) [draw, circle, minimum width = 1em] {$e_3$};
\node (node-4) at (1,-6) [draw, circle, minimum width = 1em] {$e_4$};

\node (key-1) at (3,0) [draw, circle, minimum width = 1em] {$e_5$};
\node (key-2) at (3,-2) [draw, circle, minimum width = 1em] {$e_6$};
\node (key-3) at (3,-4) [draw, circle, minimum width = 1em] {$e_7$};



\node (x-y) at (1,0.5) [rounded rectangle] {$\cd{x}, \cd{y}$};
\node (NIL) at (1.7,-6) [rounded rectangle] {$\cd{NIL}$};

\node (k) at (3.5,-4) [rounded rectangle] {$\cd{k}$};

\node (T) at (-1,-1.7) [rounded rectangle] {$\cd{T}$};
\node (sorted) at (-1.4,-2.3) [rounded rectangle] {$\cd{sorted}$};
\node (hat-T) at (-0.5,-2) [draw, circle, minimum width = 2em] {$e_8$};
\node (F) at (-1.5,-3.7) [rounded rectangle] {$\cd{F}, \cd{stop}, $};
\node (F) at (-1.8,-4.4) [rounded rectangle] {$\cd{found}, \cd{exists}$};
\node (hat-F) at (-0.5,-4) [draw, circle, minimum width = 2em] {$e_9$};

\draw (node-1) edge[-{Latex[length=2mm, width=2mm]}, thick] node[fill=white] {\cd{next}} (node-2);
\draw (node-2) edge[-{Latex[length=2mm, width=2mm]}, thick] node[fill=white] {\cd{next}} (node-3);
\draw (node-3) edge[-{Latex[length=2mm, width=2mm]}, thick] node[fill=white] {\cd{next}} (node-4);

\draw (node-1) edge[-{Latex[length=2mm, width=2mm]}, thick] node[above] {\cd{key}} (key-1);
\draw (node-2) edge[-{Latex[length=2mm, width=2mm]}, thick] node[above] {\cd{key}} (key-2);
\draw (node-3) edge[-{Latex[length=2mm, width=2mm]}, thick] node[above] {\cd{key}} (key-3);
\end{tikzpicture}
}
\[< \text{ is } \set{(e_5,e_6), (e_6,e_7), (e_7,e_5)}\]
\end{minipage}
}
\caption{Left: Uninterpreted program for finding a key $\cd{k}$ in a list starting at $\cd{x}$ with $<$ interpreted as a strict total order. The condition $a \leq b$ is shorthand for $a < b \lor a = b$.
Right: A model in which $<$ is not interpreted as a strict total order. The elements in the universe of the model are denoted using circles. Some elements are labeled with variables denoting the initial values of these variables.  The edges \protect\drawdirectedline~represent subterm relation. Not all functions are shown in the figure. The model does not satisfy the post-condition on the program on left.}
\figlabel{example}
\end{figure}

\figref{example} (on the left) shows a program that weaves the above two algorithms together (treating Algorithm~1 as the specification for Algorithm~2). 
The variable $\cd{x}$ walks down the list using the $\cd{next}$ pointer. 
The variable $\cd{stop}$ is set to $\cd{T}$ when Algorithm 2 stops searching in the list. 
The precondition, namely that the input list is sorted, 
is captured by tracking another variable $\cd{sorted}$
whose value is $\cd{T}$ if
consecutive elements are ordered as the list is traversed. 
The post condition demands that whenever the list is sorted,
$\cd{found}$ and $\cd{exists}$ 
be equal when the list has been fully traversed. 
Note that the program's correctness is specified using only
quantifier-free assertions using the same vocabulary as the program. 

The program works on a data domain that provides interpretations for the functions $\cd{key}$, $\cd{next}$, the initial values of the variables, and the relation $<$. When $<$ is interpreted to be a strict total order, the program is correct. However, if $<$ is not interpreted as a total order, then the program may be incorrectly deemed as buggy. To see this, consider the data model shown on the right in \figref{example}. The data domain has 9 elements in its universe, with the functions $\cd{next}$ and $\cd{key}$ interpreted as shown. Initially, $\cd{x}, \cd{y}$ have value $e_1$, $\cd{NIL}$ is $e_4$, $\cd{k}$ is $e_7$, 
$\cd{T}$ and $\cd{sorted}$ are $e_8$, and $\cd{F}, \cd{found}, \cd{exists}$, and $\cd{stop}$ are $e_9$. 
The interpretation of $<$ is as follows --- $e_5 < e_6$, $e_6 < e_7$, and $e_7 < e_5$. 
Clearly $<$ is not an order, but the program's
\emph{sortedness} check $\dblqt{\cd{sorted} = \cd{T}}$ will pass.
After the entire list is processed, $\cd{exists}$ will be set to $\cd{T}$ when $\cd{x} = e_3$. 
On the other hand, $\cd{stop}$ will be set to $\cd{T}$ when $\cd{x} = e_1$ because $\cd{k} = e_7 < \cd{key}(\cd{x})$. 
Therefore, at the end $\cd{found} = \cd{F} \neq \cd{exists}$. 
The work presented in~\cite{coherence2019}, where all functions and relations are uninterpreted, would therefore declare this program to be incorrect.

The goal of this paper is to explore several natural restrictions on data models and study the problem of verifying coherent programs for them. When $<$ is constrained to be a total order, the program in \figref{example} is correct and coherent. Our results (see~\secref{strict-total-order}) show that verification of such programs when relations are constrained to be strict total orders is decidable, and hence we can build automatic decision procedures that will correctly verify such programs. 


\section{Preliminaries}
\seclabel{prelim}

We briefly recall the syntax and semantics of uninterpreted programs
and the verification problem modulo axioms.
Our presentation closely follows~\cite{coherence2019}.

\subsection{Program Syntax}

We consider imperative programs with loops over a fixed finite 
set of variables $V$ and use constant 
($\Cc$), function ($\Ff$), and predicate ($\Rr$) 
symbols belonging to some first order signature 
$\Sigma = (\Cc, \Ff, \Rr)$. 
Programs are then given by the syntax below:
\begin{align*}
\stmt ::=& 
\,\,  x \passign c \,
\mid \, x \passign y \, 
\mid \, x \passign f(\vec{z}) \, 
\mid \, \passume \, (\cond) \, 
\mid \, \pskip \,
\mid \, \stmt \, ;\, \stmt \\
&
\mid \, \pwhile \, (\cond) \, \stmt \,
\mid \, \pif \, (\cond) \, \pthen \, \stmt \, \pelse \, \stmt \, \\
\cond ::=& 
\, x = y \,
\mid \, x = c \,
\mid \, c = d \,
\mid \, R(\vec{z}) \,
\mid \, \cond \lor \cond \, 
\mid \, \neg \cond
\end{align*}
Here, $f \in \Ff$, $R \in \Rr$, $c, d \in \Cc$, $x,y \in V$, and 
$\vec{z}$ is a tuple of variables in $V$ and constants in $\Cc$.
The syntax allows programs to have assignment statements,
conditionals (\pif-\pthen-\pelse), looping constructs (\pwhile)
and sequencing.
Since constants can be modeled using variables that are never re-assigned,
we will assume, without loss of generality, 
that the programs do not use constants.
Further, arbitrary Boolean combinations of atomic predicates
can be expressed using the \pif-\pthen-\pelse construct, and
henceforth, we will also assume that all conditionals are atomic
(i.e., of the form $x = y$, $x \neq y$, $R(\vec{z})$ or $\neg R(\vec{z})$). 

\subsection{Executions and Semantics of Uninterpreted Programs}
\seclabel{exec-semantics}

Executions of programs over $\stmt$ are words over the following alphabet
\begin{align*}
\Pi = \setpred{ \dblqt{x \passign y}, \dblqt{x \passign f(\vec{z})},  
\dblqt{\passume (x=y)}, \dblqt{\passume (x\neq y)}, \\
\dblqt{\passume (R(\vec{z}))},
\dblqt{\passume (\neg R(\vec{z}))}
}{
x, y, \vec{z} \textit{~are~in~} V
}
\end{align*}
For a program $s \in \stmt$, the set of executions of
of $s$, denoted $\exec(s)$ is a regular language over the alphabet $\Pi$
and is given as follows (similar to~\cite{coherence2019}).
\begin{align*}
\begin{array}{rcl}
\exec(\pskip) &=& \epsilon  \\
\exec(x \passign y) &=& \dblqt{x \passign y} \\
\exec(x \passign f(\vec{z})) &=& \dblqt{x \passign f(\vec{z})} \\
\exec(\passume(c)) &=& \dblqt{\passume(c)} \\
\exec(\pif\,c \, \pthen\, s_1\,  \pelse\, s_2~~) &=& \dblqt{\passume(c)} \cdot \exec(s_1)
 + \dblqt{\passume(\neg c)} \cdot \exec(s_2) \\
\exec(~~s_1 ; s_2~~) &=& \exec(s_1) \cdot \exec(s_2) \\
\exec(\pwhile\, c\, \{s\}~~) &=&
        [\dblqt{\passume(c)} \cdot \exec(s_1)]^* \cdot \dblqt{\passume(\neg c)}
\end{array}
\end{align*}
The set of partial executions of $s$ is the set of 
prefixes of words in $\exec(s)$ and is also regular.

A data model $\Mm = (U_\Mm, \sem{}_\Mm)$ for signature $\Sigma$
is a first order structure with a universe $U_\Mm$ of elements 
and interpretations for the constants ($\setpred{\sem{c}_\Mm}{c \in \Cc}$),
functions ($\setpred{\sem{f}_\Mm}{f \in \Ff}$) and 
relations ($\setpred{\sem{R}_\Mm}{R \in \Rr}$). 
Given a first order structure $\Mm$ over $\Sigma$
(also refered to as a \emph{data model} in the rest of the presentation),
and an execution $\rho \in \Pi^*$, the semantics
of $\rho$ on $\Mm$ is given by $\valplain_\Mm : \Pi^* \times V \to U_\Mm$
that gives the the valuation of variables in $V$
at the end of an execution, and is defined as follows.
Below, we assume that every variable $x \in V$
is associated with a designated constant $\init{x} \in \Cc$
which denotes its initial value.
\begin{align*}
\begin{array}{rcll}
\val{\Mm}{\epsilon}{x} &=& \sem{\init{x}}_\Mm & \text{for every } x \in V \\ 
\val{\Mm}{\rho \cdot \dblqt{x \passign y}}{z} &=& \val{\Mm}{\rho}{y} & \text{if } z \text{ is } x\\
\val{\Mm}{\rho \cdot \dblqt{x \passign f(z_1, \ldots, z_r)}}{y} &=& \sem{f}_\Mm(\val{\Mm}{\rho}{z_1}, \ldots, \val{\Mm}{\rho}{z_r}) & \text{if } y \text{ is } x\\
\val{\Mm}{\rho \cdot a}{x} &=& \val{\Mm}{\rho}{x} & \text{otherwise}
\end{array}
\end{align*}

\begin{example}
\exlabel{example1}
Let us consider the program
in~\figref{example}.
While the program does not strictly obey the syntax of $\stmt$,
it can be easily transformed into one --- 
all statements of the form $\pif~(c)~\pthen~s$ can be transformed to 
$\pif~(c)~\pthen~s~\pelse~\pskip$. Further, complex assume statements
like `$\passume(\cd{k} = \cd{key(x)})$' can be transformed
using additional variables --- in this case to 
`$\cd{kx} \passign \cd{key(x)} ; \passume(\cd{k} = \cd{kx})$',
where $\cd{kx}$ is a new variable.

Now, let us consider the following execution of this program.
\begin{align*}
\pi &= \pi_0
\cdot \passume(\cd{x} \neq \cd{NIL}) \cdot \pi_1 
\cdot \passume(\cd{x} \neq \cd{NIL})  \cdot \pi_2 
\cdot \passume(\cd{x} \neq \cd{NIL}) \cdot \pi_3
\cdot \passume(\cd{x} = \cd{NIL}) 
\end{align*}
This execution corresponds to entering the loop body
exactly three times. $\pi_0$ corresponds to the statements executed 
prior to entering the loop for the first time,
and $\pi_1$, $\pi_2$ and $\pi_3$ correspond to the body of the loop
in the first, second and third iteration:
\begin{align*}
\begin{array}{lcl}
\pi_0\!\!\!\!\!\!&=&\!\!\!\!\passume(\cd{T} \neq \cd{F}) \cdot 
\cd{found} \passign \cd{F} \cdot \cd{stop} \passign \cd{F}
\cdot \cd{exists} \passign \cd{F} \cdot \cd{sorted} \passign \cd{T} \\
\pi_1\!\!\!\!\!\!&=&\!\!\!\!\passume(\cd{stop} = \cd{F}) \cdot \passume(\cd{k} \neq \cd{key(x)})
\cdot \passume(\cd{k} < \cd{key(x)})
\cdot \cd{stop} \passign \cd{T} \\
&&\!\!\!\!\!\!\cdot \passume(\cd{k} \neq \cd{key(x)}) \cdot \cd{y} \passign \cd{next(x)}
\cdot \passume(\cd{y} \neq \cd{NIL}) \cdot \passume(\cd{key(x)} < \cd{key(y)})
\cdot \cd{x} \passign \cd{y} \\
\pi_2\!\!\!\!\!\!&=&\!\!\!\!\passume(\cd{stop} \neq \cd{F}) \cdot \passume(\cd{k} \neq \cd{key(x)}) \cdot \cd{y} \passign \cd{next(x)}
\cdot \passume(\cd{y} \neq \cd{NIL}) \\
&&\!\!\!\!\!\!\cdot \passume(\cd{key(x)} < \cd{key(y)})
\cdot \cd{x} \passign \cd{y} \\
\pi_3\!\!\!\!\!\!&=&\!\!\!\!\passume(\cd{stop} \neq \cd{F}) \cdot \passume(\cd{k} = \cd{key(x)}) 
\cdot \cd{exists} \passign \cd{T}
\cdot \cd{y} \passign \cd{next(x)}
\cdot \passume(\cd{y} = \cd{NIL}) \cdot \cd{x} \passign \cd{y}
\end{array}
\end{align*}

Now consider the model $\Mm$ shown in~\figref{example} on the right.
For this model we have
$\val{\Mm}{\pi}{\cd{sorted}} = \val{\Mm}{\pi}{\cd{T}} = \val{\Mm}{\pi}{\cd{exists}} = e_8$,
and $\val{\Mm}{\pi}{\cd{found}} = \val{\Mm}{\pi}{\cd{F}} = e_9$.
\end{example}

\subsection{Feasibility of Executions Modulo Axioms}

An execution is said to be \emph{feasible} in a data model, if every
assumption made in the execution, holds on the model.  More precisely,
an execution $\rho$ is feasible in $\Mm$ if for every prefix $\sigma'
= \sigma \cdot \dblqt{\passume~c}$ of $\rho$, we have
\begin{enumerate*}[label=(\alph*)]
\item $\val{\Mm}{\sigma}{x} = \val{\Mm}{\sigma}{y}$ if $c = (x = y)$,
\item $\val{\Mm}{\sigma}{x} \neq \val{\Mm}{\sigma}{y}$ if $c = (x \neq
  y)$,
\item $(\val{\Mm}{\sigma}{z_1}, \ldots,\val{\Mm}{\sigma}{z_r}) \in
  \sem{R}_\Mm$ if $c = R(z_1, \ldots, z_r)$, and
\item $(\val{\Mm}{\sigma}{z_1}, \ldots, \val{\Mm}{\sigma}{z_r})$
  $\not\in \sem{R}_\Mm$ if $c = \neg R(z_1, \ldots, z_r)$.
\end{enumerate*}

Let $\Aa$ be a set of first order sentences, including possible ground
atomic predicates~\footnote{A ground atomic predicate is of the form
  $t_1 \sim t_2$, or $R(t_1,\ldots t_k)$ or $\neg R(t_1,\ldots t_k)$,
  where $\sim \in \set{=,\neq}$, $R$ is a relation symbol, and $t_i$s
  are ground terms.}.  We say that a data model $\Mm$ is an
$\Aa$-model, denoted $\Mm \models \Aa$, if for every $\varphi \in
\Aa$, we have $\Mm \models \varphi$.  A formula $\varphi$ is
$\Aa$-valid, denoted $\Aa\models \varphi$, if $\phi$ holds in every
model $\Mm$ that satisfies $\Aa$.

An execution $\rho$ is said to be \emph{feasible modulo $\Aa$} 
if there is an $\Aa$-model $\Mm$ such that $\rho$ is feasible in $\Mm$.

\begin{example}
Let us again consider the execution $\pi$ from~\exref{example1}.
We first observe that $\pi$ is feasible on the model $\Mm$ from~\figref{example} (right).

Now let us consider the set of axioms $\Aa_\sto$ that states 
that the relation symbol $<$
used in the program in~\figref{example} (left) is interpreted to be
a strict total order.
That is
\begin{align*}
\Aa_\sto &=& \{ \underbrace{\forall x.~\neg(x < x)}_{\text{irreflexivity}}, \,
\underbrace{\forall x, y, z.~x < y \land y < z{\implies}x < z}_{\text{transitivity}}, \,
\underbrace{\forall x, y.~x = y \lor x < y \lor y < x}_{\text{totality}}\}
\end{align*}
Observe that the model $\Mm$ is not a $\Aa_\sto$-model because
there is a cyclic dependency --- $e_5 < e_6$, $e_6 < e_7$ and $e_7 < e_5$.
Now consider the model $\Mm'$ which differs from $\Mm$ only in the interpretation
of $<$ as:  $\sem{<}_{\Mm'} = \set{(e_5, e_6), (e_6, e_7), (e_5, e_7)}$.
It is easy to see that $\Mm'$ is an $\Aa_\sto$ model
and the execution $\pi$ is not feasible on $\Mm'$.
In fact, there is no $\Aa_\sto$-model on which $\pi$ is feasible, or, as we say,
$\pi$ is \emph{infeasible modulo} $\Aa_\sto$.
\end{example}

\subsection{Program Verification Modulo Axioms}

We consider programs annotated with post-conditions
that are over the following syntax below.
Here, $x, y$ and $\vec{z}$ belong to the set of program variables $V$
and $R \in \Rr$ is a relation symbol in $\Sigma$.
\begin{align*}
\Ll: ~~~~~\varphi ::=   x\!=\!y ~\mid~ R(\vec{z}) ~\mid~ \varphi \vee \varphi ~\mid~ \neg \varphi
\end{align*}
\begin{definition}[Program Verification Modulo Axioms]
For a program $s$ and a set of axioms $\Aa$, we say that $s$ satisfies
a postcondition $\varphi$ over the syntax $\Ll$ \emph{modulo} $\Aa$ if
for every $\Aa$-model $\Mm$ and for execution $\rho \in \exec(s)$ 
that is feasible in $\Mm$, $\Mm$ satisfies $\varphi[\val{\Mm}{\rho}{V}/V]$
(i.e., where each variable $x \in V$ is replaced by $\val{\Mm}{\rho}{V}$).
\end{definition}

We remark that one can alternatively phrase the verification problem
stated above in terms of feasibility.
That is, a program $s$ satisfies a postcondition $\varphi$ modulo $\Aa$ iff 
every execution $\rho$ of 
$s'$ is infeasible modulo $\Aa$ (i.e., there is no $\Aa$-model $\Mm$
such that $\rho$ is feasible in $\Mm$),
where $s' = s; \passume(\neg \varphi)$.


\section{Coherence Modulo Axioms}
\seclabel{coherence}

In this section we extend the notion of coherence 
from~\cite{coherence2019}, adapting it to
our current setting where we restrict data models using axioms $\Aa$.
We will first recall the notion of terms computed by an execution,
which will be used to define the notion of coherence.

\subsection{Terms Computed and Assumptions Accumulated by Executions}

We will associate a syntactic term with each variable after a partial
execution $\rho$. This, intuitively, is the term computed by $\rho$
and stored in $x$. Let $\Terms_\Sigma$ be the set of terms built using
constants and functions in $\Sigma$. The term stored in $x$ after
$\rho$ is defined inductively on $\rho$ as follows.
\[
\begin{array}{rcll}
\comp(\epsilon,x) &=& \init{x} & \text{for every } x \in V \\
\comp(\rho \cdot \dblqt{x \passign y},z) &=& \comp(\rho,y) & \text{if\
 } z \text{ is } x\\
\comp(\rho \cdot \dblqt{x \passign f(z_1, \ldots, z_r)},y) &=&
f(\comp(\rho,z_1), \ldots, \comp(\rho,z_r)) & \text{if } y \text{ is }
x\\
\comp(\rho \cdot a,x) &=& \comp(\rho,x) & \text{otherwise}
\end{array}
\]
The set of terms computed by an execution $\rho$ is $\Terms(\rho) =
\setpred{\comp(\rho', x)}{\rho' \text{ is a prefix of } \rho, x \in
  V}.$

As an execution proceeds, it accumulates assumptions over the terms it
computes, and we will use $\kappa(\rho)$ to denote the assumptions
made by the execution $\rho$.  In~\cite{coherence2019}, relations are
modeled using functions (to Booleans) and hence relational assumes
were avoided.  In the current exposition, however, we will treat
relations as first class objects and the set of assumptions will also
include relational predicates.  Formally, $\kappa(\rho)$ is a set of
ground predicates over $\Sigma \cup \set{=}$ defined as follows.
\begin{align*}
\kappa(\varepsilon) &= \emptyset \\
\kappa(\sigma \cdot \dblqt{\passume(x = y)}) &= \kappa(\sigma) \cup  \set{\comp(\sigma, x) = \comp(\sigma, y)} \\
\kappa(\sigma \cdot \dblqt{\passume(x \neq y)}) &= \kappa(\sigma) \cup  \set{\comp(\sigma, x) \neq \comp(\sigma, y)} \\
\kappa(\sigma \cdot \dblqt{\passume(R(z_1, z_2, \ldots, z_k))}) &= \kappa(\sigma) \cup  \set{R(\comp(\sigma, z_1), \ldots, \comp(\sigma, z_k))} \\
\kappa(\sigma \cdot \dblqt{\passume(\neg R(z_1, z_2, \ldots, z_k))}) &= \kappa(\sigma) \cup  \set{\neg R(\comp(\sigma, z_1), \ldots, \comp(\sigma, z_k))} \\
\kappa(\sigma \cdot a) &= \kappa(\sigma) \quad \text{otherwise}
\end{align*}

In~\cite{coherence2019}, relations are
modeled using functions (to Booleans) and hence relational assumes
were avoided.  In the current exposition, however, we will treat
relations as first class objects and the set of assumptions will also
include relational predicates.  Formally, $\kappa(\rho)$ is a set of
ground predicates over $\Sigma \cup \set{=}$ defined as follows.
\begin{align*}
\kappa(\varepsilon) &= \emptyset \\
\kappa(\sigma \cdot \dblqt{\passume(x = y)}) &= \kappa(\sigma) \cup  \set{\comp(\sigma, x) = \comp(\sigma, y)} \\
\kappa(\sigma \cdot \dblqt{\passume(x \neq y)}) &= \kappa(\sigma) \cup  \set{\comp(\sigma, x) \neq \comp(\sigma, y)} \\
\kappa(\sigma \cdot \dblqt{\passume(R(z_1, z_2, \ldots, z_k))}) &= \kappa(\sigma) \cup  \set{R(\comp(\sigma, z_1), \ldots, \comp(\sigma, z_k))} \\
\kappa(\sigma \cdot \dblqt{\passume(\neg R(z_1, z_2, \ldots, z_k))}) &= \kappa(\sigma) \cup  \set{\neg R(\comp(\sigma, z_1), \ldots, \comp(\sigma, z_k))} \\
\kappa(\sigma \cdot a) &= \kappa(\sigma) \quad \text{otherwise}
\end{align*}

\subsection{Coherence}

Our definition of coherence modulo axioms 
is a smooth generalization of the definition of 
coherence in~\cite{coherence2019}. 
The notion of coherence consists of two properties --- 
\emph{memoizing} and \emph{early equality assumes}. 
The memoizing property says, intuitively, when a term $t$ is computed
after executing some prefix $\sigma$ of an execution,
if $t$ is equivalent to some other term modulo the assumptions 
made in the execution so far,
then $t$ must not have been \emph{dropped} at the end of $\sigma$,
i.e., a program variable must already hold this term. 
We replace the notion of equivalence of terms in this definition 
by equivalence modulo the axioms as well. 

The notion of early assumes in~\cite{coherence2019} intuitively says
that assumptions of equality (on terms $t_1$ and $t_2$) should be
encountered early --- earlier than \emph{dropping} any superterm of
$t_1$ or $t_2$.  This notion of early assumes allows for effectively
computing \emph{congruence closure} on the set of terms computed by
the execution, which in turn, is necessary to accurately maintain
which terms are equivalent.  However, we observe that the notion
in~\cite{coherence2019} is too restrictive and not entirely necessary.
In our paper, we generalize this notion in several ways, to a more
semantic one as follows.  Whenever an execution encounters an
assumption of equality between two term, we instead demand that only
the equivalences that are \emph{additionally} implied by this new
assumption, can be infered \emph{locally} using the already known
congruence between terms in the \emph{window}, i.e., the set of terms
pointed to by the program variables when the equality assumption is
encountered. Next, we incorporate axioms into this definition, by
requiring that the notion of equivalence is also modulo the axioms,
and further require that \emph{all} assumptions (equality,
disequality, relational) are required to be early (as against only
restricting equality assumptions to be early like
in~\cite{coherence2019}).  We will elaborate on these differences
using an example after presenting the formal definition next.

Given a set of first order sentences $\Gamma$ and 
ground terms $t_1$ and $t_2$, we say that 
$t_1 \congcl{\Gamma} t_2$ if $\Gamma \models t_1 = t_2$.

\begin{definition}[Coherence modulo axioms]
\deflabel{coherence-def}
Let $\Aa$ be a set of axioms and let $\rho$ be a 
complete or partial execution over variables $V$. 
Then, $\rho$ is said to be coherent modulo $\Aa$ 
if it satisfies the following two properties.
\begin{description}
\item[Memoizing.]  Let $\pi = \sigma \cdot
  \dblqt{x\!\passign\!\!f(\vec{z})}$ be a prefix of $\rho$ and let $t
  =\comp(\pi , x)$.  If there is a term $t' \in \Terms(\sigma)$ such
  that $t' \congcl{\Aa \cup \kappa(\sigma)} t$, then there must exist
  some variable $y \in V$ such that $\comp(\sigma, y) \congcl{\Aa \cup
    \kappa(\sigma)} t$.
	
\item[Early Assumes.]  Let $\pi = \sigma \cdot \dblqt{\passume(c)}$ be
  a prefix of $\rho$, where $c$ is any of $x\!=\!y$, $x\!\neq\!y$,
  $R(\vec{z})$, or $\neg R(\vec{z})$.  Let $t \in \Terms(\sigma)$ be a
  term computed in $\sigma$ such that $t$ has been \emph{dropped},
  i.e., for every $x \in V$, we have $\comp(\sigma, x) \ncongcl{\Aa
    \cup \kappa(\sigma)} t$.  For any term $t' \in \Terms(\sigma)$, if
  $t \congcl{\Aa \cup \kappa(\pi)} t'$, then $t \congcl{\Aa \cup
    \kappa(\sigma)} t'$.
\end{description}
\end{definition}

\textbf{Remark.} We remark that every execution that is coherent
as per the definition in~\cite{coherence2019}, is also coherent
modulo $\Aa = \emptyset$ as in~\defref{coherence-def}.
However, the converse is not true and we illustrate this difference below.

\begin{example}
Let us fix $\Aa = \emptyset$ for this example.
Consider the execution $\rho = \sigma\cdot \passume(\cd{x} = \cd{y})$ 
where,
\begin{align*}
\sigma = \passume(\cd{x} = \cd{y}) \cdot \cd{x'} \passign \cd{f(x)} 
\cdot \cd{y'} \passign \cd{f(y)} \cdot \cd{x'} \passign \cd{f(x')}
\cdot \cd{y'} \passign \cd{f(y')}
\end{align*}
We first observe that the prefix $\sigma$ is coherent
both with respect to the definition in~\cite{coherence2019} and~\defref{coherence-def}.
First there are no superterms of $\init{x} = \comp(\epsilon, x)$
and $\init{y} = \comp(\epsilon, y)$ when the first statement $\passume(\cd{x} = \cd{y})$ is observed, and thus, this assume is early.
Second, even though the statements $\dblqt{\cd{y'} \passign \cd{f(y)}}$
and $\dblqt{\cd{y'} \passign \cd{f(y')}}$ are computing a term
that has been equivalently computed before
(modulo the assumption $\set{\init{x} = \init{y}}$),
a copy of these terms is available in some program variable
(variable $\cd{x'}$ in both the cases)
at the time of the execution, thus respecting the memoizing restriction.

Now let us discuss the execution $\rho$. This execution is not coherent with respect to~\cite{coherence2019}.
In particular, the last assume $\passume(\cd{x} = \cd{y})$ 
is not early,
as superterms $\cdm{f(\init{x})}$ 
and $\cdm{f(\init{y})}$
have been computed but \emph{dropped} in the prefix $\sigma$.
However, observe that 
$\cdm{f(\init{x})} \congcl{\Aa \cup \kappa(\sigma)} \cdm{f(\init{y})}$
(here, $\Aa \cup \kappa(\sigma) = \set{\cdm{\init{x}} = \cdm{\init{y}}}$) 
and thus, $\rho$ meets the early assumes restriction as per~\defref{coherence-def},
making $\rho$ coherent.
\end{example}

Let us now consider an example which illustrates the
notion of coherence in the presence of axioms.
\begin{example}
Let us now illustrate the
notion of coherence in the presence of axioms
using the execution $\rho$ below.
\begin{align*}
\rho = 
		\cdm{z_1} \passign \cdm{f(x, y)} 
\cdot 	\cdm{z_2} \passign \cdm{f(y, x)}
\cdot 	\cdm{z_3} \passign \cdm{g(z_1)} 
\cdot 	\cdm{z_4} \passign \cdm{g(z_2)}
\cdot 	\cdm{z_3} \passign \cdm{z_5}
\cdot 	\cdm{z_6} \passign \cdm{g(z_1)}
\end{align*}
Let $\rho_i$ denote the prefix of $\rho$ of length $i$.
Here, $\comp(\rho_3, \cdm{z_3}) = \cdm{g(f(\init{x}, \init{y}))}$,
$\comp(\rho_5, \cdm{z_3}) = \cdm{\init{z_5}} \neq \cdm{g(f(\init{x}, \init{y}))}$
and $\comp(\rho_6, \cdm{z_6}) = \cdm{g(f(\init{x}, \init{y}))}$.
When the set of axioms is $\Aa = \emptyset$,
this execution is not coherent modulo $\Aa$
as it violates the memoizing requirement at the 
last statement $\cdm{z_6} \passign \cdm{g(z_1)}$
(no variable stores the term $\cdm{g(f(\init{x}, \init{y}))}$
after $\rho_5$).

Now, consider the axiom set denoting commutativity of $\cdm{f}$,
i.e., $\Aa_\comm = \set{\forall u, v. \cd{f}(u, v) = \cd{f}(u, v)}$.
In this case, we observe that 
$\cdm{f(\init{x}, \init{y})} \congcl{\Aa_\comm} \cdm{f(\init{y}, \init{x})}$
and thus 
$\cdm{g(f(\init{x}, \init{y}))} \congcl{\Aa_\comm} \cdm{g(f(\init{y}, \init{x}))}$.
Also, $\comp(\rho_5, \cdm{z_4}) = \cdm{g(f(\init{y}, \init{x}))} \congcl{\Aa_\comm} \cdm{g(f(\init{x}, \init{y}))}$.
This ensures that $\rho$ is indeed coherent modulo $\Aa_\comm$.
\end{example}

Let $\cohexec(\Sigma, V, \Aa)$ denote the set of executions over the signature 
$\Sigma$ and variables $V$ that are coherent modulo the set of axioms $\Aa$. 
\begin{definition}
A program $s$ over signature $\Sigma$ and variables $V$ 
is said to be coherent modulo $\Aa$ if $\exec(s) \subseteq \cohexec(\Sigma, V, \Aa)$.
\end{definition}

In this paper, we explore several classes of axioms, 
studying when the verification problem for coherent programs modulo the axioms is decidable.

\subsection{Recap of results from\cite{coherence2019}}
We briefly state the main decidability results 
from~\cite{coherence2019} about coherent programs, 
using the notation defined above, so the set of axioms $\Aa$ is empty.
The results hold even when the early assumes condition is 
generalized (\defref{coherence-def}) and relations are treated as 
first class objects, as we do in this paper.

\begin{theorem}[Essentially~\cite{coherence2019}]
\thmlabel{popl19}
Let $\Sigma$ be a first order signature and $V$ a finite set of
variables. The following observations hold when the set of axioms is
empty.
\begin{enumerate}[topsep=0pt]
\item There is a finite automaton $\Ff$ (effectively constructable) of
  size $O(2^{\text{poly}(|V|)})$ such that for any coherent execution
  $\rho$, $\Ff$ accepts $\rho$ iff $\rho$ is feasible.
\item There is a finite automaton $\Cc$ (effectively constructible) of
  size $O(2^{\text{poly}(|V|)})$ such that $L(\Cc) = \cohexec(\Sigma, V,
  \emptyset)$.
\end{enumerate}
As a consequence, the following problems are decidable in $\pspc$.
\begin{itemize}[topsep=0pt]
\item Given a coherent program $P$, determine if $P$ is correct.
\item Given a program $P$, determine if $P$ is coherent.
\end{itemize}
The problems of verifying coherent programs and checking coherence,
are also $\pspc$-hard.
\end{theorem}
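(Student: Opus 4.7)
The plan is to construct the automata $\Ff$ and $\Cc$ by identifying a bounded summary of an execution that suffices to decide feasibility and coherence. The central insight is that although $\Terms(\rho)$ can grow unboundedly, coherence ensures that every congruence class that can still matter has a representative in $V$ (by memoizing) and that the restriction of $\congcl{\kappa(\sigma)}$ to current variables determines the full relation on $\Terms(\sigma)$ (by early assumes). Hence a state consisting of (i) a partition of $V$ recording the congruence classes of $\comp(\sigma,\cdot)$, (ii) a set of variable pairs known to be disequal, and (iii) for each relation symbol $R$ of arity $k$, two subsets of $V^k$ recording tuples asserted positively and negatively by $\kappa(\sigma)$, forms a sound and complete summary. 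The number of such states is bounded by $2^{\mathrm{poly}(|V|)}$.

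For the feasibility automaton $\Ff$, transitions are driven by the execution alphabet: an assignment $x \passign y$ or $x \passign f(\vec z)$ updates the $V$-partition and re-indexes the disequalities and relational facts, while a \passume statement merges classes, inserts a disequality, or records a relational atom. $\Ff$ rejects exactly when an assertion would contradict the summary, e.g.\ when merging classes already separated by a recorded disequality, or when a new $\neg R(\vec z)$ clashes with an already recorded positive atom. Because coherence keeps the summary faithful to the full congruence on $\Terms(\sigma)$, an execution is feasible iff $\Ff$ never rejects, establishing part~(1).

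For the coherence automaton $\Cc$, the same summary is augmented with bookkeeping that flags memoizing and early-assume violations as specified in~\defref{coherence-def}. For memoizing, whenever the execution reads $x \passign f(\vec z)$, $\Cc$ computes the ``shape'' of $f(\vec z)$ as a pair $(f, \text{tuple of current } V\text{-classes})$ and flags a violation if a term of that shape was previously computed but is no longer represented by any variable; there are only polynomially many shapes per state, keeping the state space within $2^{\mathrm{poly}(|V|)}$. For early assumes, on each \passume the automaton checks that every new equality forced on computed terms is already locally inferable from the window's partition and relational facts — which is precisely the semantic formulation in~\defref{coherence-def}. Treating relations as first-class objects merely broadens the \passume alphabet and the summary's relational component, preserving the asymptotics.

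Given the automata, decidability follows by taking the synchronized product of $\Ff$ (for verification) or $\Cc$ (for coherence checking) with the program's control-flow graph; each product state takes $\mathrm{poly}(|V|)$ bits, so non-emptiness is decidable in $\pspc$ by on-the-fly nondeterministic search and Savitch's theorem. The matching $\pspc$ lower bound follows by a standard simulation: a coherent program with a polynomial supply of variables can encode the tape of a polynomial-space Turing machine using an uninterpreted successor-like function, with coherence automatic because only $O(|V|)$ terms are live at any time. The hard part is proving correctness of the summary under the semantic form of early assumes: one must show that every equality entailed by $\kappa(\sigma)$ between computed terms is already witnessed by congruence on the $V$-partition, and conversely that the partition never spuriously identifies terms that $\kappa(\sigma)$ keeps distinct — this invariant is the inductive glue that makes the finite summary faithful across every transition of $\Ff$ and $\Cc$.
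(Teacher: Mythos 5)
Your overall strategy is the same as the paper's: summarize a coherent execution by a bounded ``window'' state, build a feasibility automaton $\Ff$ and a coherence automaton $\Cc$ over that state space, and decide both problems in $\pspc$ by intersecting with the regular set of program executions. However, the state you describe for $\Ff$ --- a partition of $V$, a set of disequal pairs, and positive/negative relational tuples --- omits one component the construction cannot do without: for each $k$-ary function symbol $f$, a partial map $P(f)$ from $(V/{\equiv})^k$ to $V/{\equiv}$ recording which current class holds the $f$-image of which tuple of classes. This map is what lets the automaton perform \emph{streaming congruence closure} when an equality assume arrives. Concretely, consider the execution
\[
\dblqt{w \passign f(x)}\cdot\dblqt{v \passign f(y)}\cdot\dblqt{\passume(w \neq v)}\cdot\dblqt{\passume(x = y)} .
\]
It is coherent modulo $\emptyset$ (no term is recomputed, the only dropped terms are $\init{w}$ and $\init{v}$, and neither assume entails a new equality involving them), yet it is infeasible, since congruence forces $f(\init{x})=f(\init{y})$ against the recorded disequality. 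On the last letter your automaton merges $[x]$ and $[y]$, finds no recorded disequality between \emph{those} classes, and accepts; only the knowledge that $[w]=P(f)([x])$ and $[v]=P(f)([y])$ allows the merge to propagate to $[w]$ and $[v]$ and expose the contradiction. The same map is also what makes your ``shape'' bookkeeping for the memoizing check of $\Cc$ well-defined: shapes are relative to an evolving partition and must be re-indexed when classes merge or when all variables holding a value are overwritten.

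With $P$ added, your argument matches the paper's construction, whose states are exactly tuples $(\equiv, d, P, \rel^+, \rel^-)$. One further point worth stating explicitly, which the paper singles out as the crucial correctness ingredient when $\Aa=\emptyset$: two computed terms are equal modulo $\kappa(\rho)$ iff they are equal modulo the \emph{equality} assumes in $\kappa(\rho)$ alone. This is what licenses your treatment of the disequality and relational components as passive annotations that are merely re-indexed under merges, rather than as sources of new equalities; it is also precisely the property that fails for the anti-symmetry, totality, and associativity axioms studied later, which is why those require different machinery.
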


\begin{proof}[Proof Sketch]
These observations have been proved in~\cite{coherence2019}, but the
proof is also sketched in~\appref{automata-relations} for completeness
and to account for the modified definitions. Intuitively, the automata
to check feasibility and coherence of executions, track equivalences
between program variables, functional and relational correspondences
between them that hold based on the $\passume$s seen. Crucial to
establishing the correctness of the automata constructions is the
observation that, when the set of axioms is empty, equality of two
terms does not depend on disequality and relational assumes seen in
the execution. That is, if $\kappa(\rho)_{\text{eq}}$ denotes the set
of equality assumes in $\rho$, then for any computed terms $t_1,t_2$,
$t_1 \congcl{\kappa(\rho)} t_2$ iff $t_1
\congcl{\kappa(\rho)_{\text{eq}}} t_2$.
\end{proof}


\section{Axioms over Relations}
\seclabel{relations}

In this section, we investigate the decidability of the
verification problem for coherent programs modulo
relational axioms, i.e., axioms which only involve relation
symbols $\Rr$ in the signature $\Sigma$.

\subsection{Verification modulo EPR axioms}

A first-order formula is said to be an EPR formula~\cite{Ramsey1987}
if it is of the form
$$ \exists x_1 \ldots x_k \forall y_1, \ldots y_m ~\varphi$$
where $\varphi$ is quantifier-free and purely relational (uses no function symbols). 

It is well known that satisfiability of EPR formulas is decidable, in
fact by a reduction to Boolean
satisfiability~\cite{epr}. Consequently, the problem of checking
whether a single execution is feasible under axioms written in EPR can
be shown to be decidable, and has been exploited in bounded
model-checking.

Consequently, we could reasonably ask whether verification of coherent
programs under EPR axioms is decidable. Surprisingly, we show that
they are not 
(proof details can be found in~\appref{epr-undec}).
\begin{theorem}
\thmlabel{epr-undec}
Verification of uninterpreted 
coherent programs modulo EPR axioms is undecidable.\qed
\end{theorem}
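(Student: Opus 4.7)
The plan is to prove undecidability by reducing from the halting problem of deterministic two-counter Minsky machines. Given such a machine $M$, I will construct a program $P_M$ and a finite set of EPR axioms $\mathcal{A}_M$ such that $M$ halts iff some execution of $P_M$ violates a designated assertion in some $\mathcal{A}_M$-model, while simultaneously $P_M$ is coherent modulo $\mathcal{A}_M$. A decision procedure for coherent verification modulo $\mathcal{A}_M$ would then decide halting of two-counter machines.

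For the encoding, each counter $i \in \{1,2\}$ holding value $n$ is represented by the term $f_i^n(z_i)$, where $f_i$ is a unary function symbol in $P_M$'s vocabulary and $z_i$ is a designated initial constant (function symbols are permitted in the program, even though the EPR axioms cannot mention them). Program variables $v_1, v_2$ hold the current counter terms and a variable $q$ tracks $M$'s finite control. Increment is compiled to $v_i \passign f_i(v_i)$ followed immediately by $\passume(S_i(v_i^{\text{old}}, v_i^{\text{new}}))$, where $S_i$ is an ``intended successor'' binary relation; zero-test compiles to $\passume(Z_i(v_i))$ or its negation for a unary marker $Z_i$; decrement uses an uninterpreted predecessor function, $v_i \passign p_i(v_i)$, followed by an assume tying $p_i$ back to $S_i$. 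The EPR axioms $\mathcal{A}_M$ assert that each $S_i$ is functional and injective, that $Z_i$-elements have no $S_i$-predecessor, and that $Z_i$ is unique. These axioms rigidify the chain structure in every model, so that any $\mathcal{A}_M$-feasible execution of $P_M$ faithfully simulates $M$, and conversely the canonical chain model associated to a halting run of $M$ yields a violating $\mathcal{A}_M$-feasible execution of $P_M$.

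The main obstacle will be showing that the constructed $P_M$ is coherent modulo $\mathcal{A}_M$. Since counter values grow unboundedly in finitely many variables, old terms must be dropped; the memoizing condition of \defref{coherence-def} then forbids recomputing any dropped term up to $\mathcal{A}_M \cup \kappa(\rho)$-congruence. The key enabling observation is that $\mathcal{A}_M$ contains no function symbols and so entails no new equalities between ground terms over the program's functional vocabulary: for any two such computed terms, $t_1 \congcl{\mathcal{A}_M \cup \kappa(\rho)} t_2$ iff $t_1 \congcl{\kappa(\rho)} t_2$. Hence each increment yields a term genuinely fresh modulo $\mathcal{A}_M \cup \kappa(\rho)$, and by placing every $S_i$- and $Z_i$-assume immediately adjacent to the assignment that created its operands, no superterm of an asserted atom is ever dropped between creation and assume --- giving the early-assumes property by construction. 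The remaining design task is to arrange $P_M$'s control flow so that at every loop iteration the few live variables form a correct ``window'' onto the current configuration of $M$, which is a standard register-allocation exercise that carries no further subtlety.
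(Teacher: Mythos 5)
Your reduction has a genuine gap, and it sits exactly where you wave it away. The ``key enabling observation'' --- that a purely relational EPR axiom set entails no new equalities between computed ground terms, so that $t_1 \congcl{\Aa_M \cup \kappa(\rho)} t_2$ iff $t_1 \congcl{\kappa(\rho)} t_2$ --- is false, and your own axioms are the counterexample. Functionality and injectivity of $S_i$ are universally quantified clauses with equality in the conclusion: from $\passume(S_i(t,t_1))$, $\passume(S_i(t,t_2))$ and $\forall x,y_1,y_2.\ S_i(x,y_1)\wedge S_i(x,y_2)\Rightarrow y_1=y_2$ one derives $t_1 = t_2$. Your decrement gadget does precisely this. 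After the increments, $\kappa(\rho)$ contains $S_i(f_i^{n}(z_i), f_i^{n+1}(z_i))$; the decrement computes the fresh term $p_i(f_i^{n+1}(z_i))$ and then assumes $S_i(p_i(f_i^{n+1}(z_i)), f_i^{n+1}(z_i))$, which together with injectivity of $S_i$ forces $p_i(f_i^{n+1}(z_i)) = f_i^{n}(z_i)$. Since $f_i^{n}(z_i)$ must in general have been dropped long before (the counter is unbounded and you have finitely many variables), this assume newly equates a dropped term with a computed one, violating the early-assumes condition of \defref{coherence-def}; and a subsequent increment computes $f_i(p_i(f_i^{n+1}(z_i)))$, which is congruent to the dropped term $f_i^{n+1}(z_i)$, violating memoizing. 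If instead you refrain from tying $p_i$ back to the $S_i$-chain, the zero-tests become unsound and the simulation is no longer faithful. So decrement --- the one operation that separates counter machines from monotone devices --- is fundamentally in tension with coherence, and your proposal does not resolve that tension.

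The paper sidesteps this entirely by reducing from PCP rather than from counter machines: its program only ever extends two position-chains forward, never revisits a dropped position, and its coherence argument explicitly \emph{verifies for the particular assumes generated} (rather than asserting in general) that the injectivity-style axioms never fire to produce an equality on computed terms, because each computed term has a syntactically unique partner under $R$ and $S$. To rescue your approach you would need either an undecidable, reversal-free model of computation or a decrement gadget that never re-identifies a dropped term; the current write-up supplies neither.
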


Given the above result, we turn to several classes of quantified
axioms, which are all expressible in EPR (and hence have a decidable
bounded model checking problem) and examine their decidability for
coherent program verification.

\subsection{Reflexivity, Irreflexivity, and Symmetry}
\seclabel{refl-irrefl-symm}

We consider program verification under the following
axioms (individually):

\begin{align}
\begin{array}{lll}
\varphi^R_\refl \,\,\,\,\, \triangleq & \forall x \cdot R(x, x) & \quad \text{(reflexivity)}\\
\varphi^R_\irrefl \,\,\, \triangleq & \forall x \cdot \neg R(x, x) & 
  \quad \text{(irreflexivity)}\\
\varphi^R_\sym \triangleq & \forall x, y \cdot R(x, y) \implies R(y, x) & 
  \quad \text{(symmetry)}
\end{array}
\end{align}

We show that verification is decidable modulo these axioms using a
technique that we call \emph{program instrumentation}. Let us fix a
relation $R$ and an axiom $\varphi^R_p$, where $p \in
\set{\refl,\irrefl,\sym}$. The idea is to find a function (in fact, a
string homomorphism) $h^R_p$ such that for any program $P$, $P$ is
correct/coherent modulo $\set{\varphi^R_p}$ iff $h^R_p(\exec(P))$ is
correct/coherent modulo the empty axiom set. Decidability then follows
by exploiting the results of~\cite{coherence2019}. The function $h^R_p$ will
capture the properties of the axiom it is trying to eliminate, and so
it will be different for different axioms. We first outline these
function $h^R_p$, then state their property and prove the decidability
result.

For reflexivity, we transform an execution $\rho$ of $P$ to $\rho'$
where $\rho'$ is essentially $\rho$, except that whenever we see the
computation of a term, using an assignment of the form $\dblqt{x :=
  f(\vec{z})}$, we immediately insert an assume statement that states
that $R(x,x)$ holds.  More precisely, the homomorphism is defined as,
\[
h^R_\refl(a) = 
\begin{cases}
a \cdot \dblqt{\passume(R(x, x))}  & \text {if } a = \dblqt{x \passign f(\vec{z})} \\
a & \text{ otherwise }
\end{cases}
\]
The homomorphisms used for irreflexivity and symmetry follow
similar lines and are outlined 
in~\appref{app-refl-irrefl-symm}.

\begin{theorem}
\thmlabel{refl-irrefl-symm}
For any relation symbol $R$ and $p \in \set{\refl,\irrefl,\sym}$, the
problems of coherent verification modulo $\set{\varphi^R_p}$ and
checking coherence modulo $\set{\varphi^R_p}$ are $\pspc$-complete.
\end{theorem}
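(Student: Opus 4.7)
The plan is to reduce coherent verification (and coherence checking) modulo $\{\varphi^R_p\}$ to the corresponding problems with an empty axiom set, using the string homomorphisms $h^R_p$ described above. The heart of the argument is a preservation lemma: for every execution $\rho$, (i) $\rho$ is feasible modulo $\{\varphi^R_p\}$ iff $h^R_p(\rho)$ is feasible modulo $\emptyset$, and (ii) $\rho$ is coherent modulo $\{\varphi^R_p\}$ iff $h^R_p(\rho)$ is coherent modulo $\emptyset$. Granted this lemma, since each $h^R_p$ is a fixed letter-to-short-word homomorphism, the image $h^R_p(\exec(P))$ remains a regular language, recognized by a small NFA obtained by threading the inserted letters through the control-flow automaton of $P$. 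Intersecting this with the feasibility and coherence automata from Theorem~\ref{thm:popl19}, which have size $2^{\text{poly}(|V|)}$, yields on-the-fly nondeterministic emptiness/containment procedures that run in $\pspc$.

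For the feasibility direction of the lemma, the easy implication is that any $\{\varphi^R_p\}$-model witnessing feasibility of $\rho$ also satisfies every inserted $\passume$ of $h^R_p(\rho)$. For the converse, given a model $\Mm$ of $h^R_p(\rho)$, I would build $\Mm'$ by keeping $U_\Mm$ and all interpretations other than $\sem{R}$ intact, and redefining $\sem{R}_{\Mm'}$ as the minimal modification of $\sem{R}_{\Mm}$ that satisfies $\varphi^R_p$: add all reflexive pairs (reflexivity), remove all reflexive pairs (irreflexivity), or take the symmetric closure (symmetry). The design of $h^R_p$ is precisely what ensures the assumes of $\rho$ still hold after this modification; for symmetry, for instance, any pair added by symmetrizing has its membership already recorded by the inserted $\passume(R(y,x))$, and similarly any negative fact is pre-recorded.

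For coherence preservation, the key observation is that $\varphi^R_p$ is purely relational and contains no function symbols; consequently equality entailments are unaffected, and one has $t_1 \congcl{\{\varphi^R_p\} \cup \kappa(\sigma)} t_2$ iff $t_1 \congcl{\kappa(h^R_p(\sigma))} t_2$ for every prefix $\sigma$ and any computed terms $t_1, t_2$. The memoizing clause of \defref{coherence-def} therefore transfers verbatim between $\rho$ and $h^R_p(\rho)$. The main obstacle is the early-assumes clause, which must be re-verified for every inserted assume as well as for the original ones. The placement of each inserted assume \emph{immediately} adjacent to the statement that triggered it means that no term is dropped in the interval between the trigger and the insertion, so the check reduces to a small local case analysis showing that the inserted assume entails no new equality or relational fact involving a dropped term.

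Finally, $\pspc$-hardness is inherited from Theorem~\ref{thm:popl19}: a coherent program that does not mention $R$ is trivially coherent modulo $\{\varphi^R_p\}$ and has the same feasible executions, so the $\pspc$-hardness of coherent verification and of coherence checking in the uninterpreted case lifts unchanged.
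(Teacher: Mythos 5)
Your proposal is correct and follows essentially the same route as the paper: instrument executions with the homomorphism $h^R_p$, prove that feasibility and coherence are preserved (the converse feasibility direction via a model whose $R$-interpretation is the reflexive/symmetric closure or the irreflexive restriction, and coherence via the observation that term equivalences modulo $\set{\varphi^R_p}\cup\kappa(\rho)$ coincide with those modulo $\kappa(h^R_p(\rho))$), and then reduce to the automata of Theorem~\ref{thm:popl19}, inheriting $\pspc$-hardness from the empty-axiom case. The only difference is cosmetic: the paper derives the term-equality preservation formally by applying the feasibility lemma to $\rho\cdot\dblqt{\passume(t_1\neq t_2)}$ rather than by the direct "purely relational axioms entail no new equalities" argument you give, but both are sound.
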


\subsection{Transitivity}

We now consider the transitivity axiom for a relation $R$ which says
\begin{align}
\varphi^R_\trans = \forall x, y, z \cdot R(x, y) \land R(y, z)
\implies R(x, z) & \quad \text{(transitivity)}
\end{align}

The proof for decidability modulo this axiom is
different and more complex that the proofs for reflexivity,
irreflexivity, and symmetry. Intuitively, the program instrumentation
approach does not seem to work for transitivity. This is because
transitivity effects can be global.  For example, we may have that the
execution asserts the sequence of relational assumes $R(t_1, t_2),
R(t_2, t_3), \ldots R(t_{n-1}, t_n)$ (here, $t_1, \ldots t_n$ are
terms computed by the execution), where some of the intermediate terms
may have been dropped by the program (i.e., the variables holding
these terms were reassigned). 
Consequently, relating $t_1$ and (the
possibly newly constructed term) $t_n$ requires 
a principally new machinery. 
We modify the automaton construction from~\cite{coherence2019}
so that it maintains the transitive closure of the assumptions 
the program makes.
Our main observation is the following:
%

\begin{figure}[t]
\centering
\scalebox{1.0}{
\begin{tikzpicture}

\node (x-1) at (0, 0) [rounded rectangle] {$x$};
\node (y-1) at (1, 1.5) [rounded rectangle] {$y$};
\node (z-1) at (2, 0) [rounded rectangle] {$z$};

\draw (x-1) edge[-{Latex[length=2mm, width=2mm]}, thick] (z-1);
\draw (x-1) edge[-{Latex[length=2mm, width=2mm]}, thick] (y-1);
\draw (y-1) edge[-{Latex[length=2mm, width=2mm]}, thick, dashed] (z-1);

\node (neg-R-x-z-1) at (1,-0.3) [rounded rectangle] {$\neg R$};
\node (R-x-y-1) at (0.2,0.9) [rounded rectangle] {$R$};
\node (neg-R-y-z-1) at (1.8,0.9) [rounded rectangle] {$\neg R$};

\node (x-2) at (5, 0) [rounded rectangle] {$x$};
\node (y-2) at (6, 1.5) [rounded rectangle] {$y$};
\node (z-2) at (7, 0) [rounded rectangle] {$z$};

\draw (x-2) edge[-{Latex[length=2mm, width=2mm]}, thick] (z-2);
\draw (x-2) edge[-{Latex[length=2mm, width=2mm]}, thick, dashed] (y-2);
\draw (y-2) edge[-{Latex[length=2mm, width=2mm]}, thick] (z-2);

\node (neg-R-x-z-1) at (6,-0.3) [rounded rectangle] {$\neg R$};
\node (R-x-y-1) at (5.2,0.9) [rounded rectangle] {$\neg R$};
\node (neg-R-y-z-1) at (6.8,0.9) [rounded rectangle] {$R$};
\end{tikzpicture}
}
\caption{Implied negative relational assumes for a transitive relation $R$.
The dashed edges (\protect\drawdirecteddash) represent the inferred relationship implied
from the relations marked by bold edges (\protect\drawdirectedline).}
\figlabel{transitivity}
\end{figure}
%
\begin{theorem}
\thmlabel{transitivity-reg}
Let $\Sigma$ be a first order signature and $V$ a finite set of
program variables. Let $\Aa = \setpred{\varphi^R_\trans}{R \in
  \Rr_\trans}$ for some set of relation symbol $\Rr_\trans$ in
$\Sigma$. The following observation hold.
\begin{enumerate}
\item There is a finite automaton $\Ff_\trans$ (effectively
  constructable) of size $O(2^{\text{poly}(|V|)})$ such that for any
  coherent execution $\rho$ that is coherent modulo $\Aa$, 
  $\Ff_\trans$ accepts $\rho$ iff $\rho$ is
  feasible.
\item There is a finite automaton $\Cc_\trans$ (effectively
  constructible) of size $O(2^{\text{poly}(|V|)})$ such that $L(\Cc_\trans) =
  \cohexec(\Sigma, V, \Aa)$.
\end{enumerate}
\end{theorem}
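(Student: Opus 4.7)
The plan is to extend the automata $\Ff$ and $\Cc$ from \thmref{popl19} by augmenting their state with a bounded tracker of transitive $R$-information on the current program variables, for each $R \in \Rr_\trans$. The crucial structural fact enabling this is that the axioms in $\Aa$ entail no new equalities among ground terms (they only restrict relation extensions), so the equality-closure machinery of~\cite{coherence2019} carries over without change; moreover, since an assume $\passume(R(\vec{z}))$ under pure transitivity axioms never creates fresh equalities, the generalized early-assumes condition on relational assumes is automatically satisfied on coherent executions, and coherence checking reduces to the same checks on the equality part of the state. The only genuinely new work is to catch infeasibilities that arise from transitive chains of relational assumes.

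Concretely, I would add to the state of $\Ff$ a three-valued flag in $\set{+,-,?}$ for each ordered pair $(u,v) \in V \times V$ and each $R \in \Rr_\trans$, recording whether $R(\comp(\sigma,u),\comp(\sigma,v))$ is known to hold, known to fail, or is undetermined given $\Aa \cup \kappa(\sigma)$. Each relational assume $\passume(R(u,v))$ or $\passume(\neg R(u,v))$ updates the corresponding flag, after which the automaton performs a bounded saturation: forward closure $R \cdot R \Rightarrow R$, the two contrapositive rules $\neg R(x,z) \land R(x,y) \Rightarrow \neg R(y,z)$ and $\neg R(x,z) \land R(y,z) \Rightarrow \neg R(x,y)$ (as in~\figref{transitivity}), and lifting through the equalities already tracked on variables by $\Ff$. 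On an assignment $x \passign y$ or $x \passign f(\vec{z})$, I first saturate so that all consequences that pass through the about-to-be-dropped value of $x$ are recorded on pairs of surviving variables, and only then reset $x$'s flags (inheriting $y$'s in the copy case). The automaton rejects as soon as any pair ends up with both $+$ and $-$ flags for the same $R$.

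The main technical obstacle is completeness: if a coherent execution is infeasible modulo $\Aa$, the local saturation on the $V$-window must already expose the contradiction. I would prove by induction on the execution the invariant: ``every $R$-fact (positive or negative) entailed by $\Aa \cup \kappa(\sigma)$ whose two endpoints are both currently held by program variables is recorded on the corresponding pair in the state.'' The nontrivial direction uses two points: (i) any derivation of such an $R$-fact via transitivity goes through a chain of ground terms, and by the memoizing condition of coherence modulo $\Aa$, any intermediate term that is referenced again later must still be held by some variable at that moment; (ii) whenever a variable is reassigned, the pre-assignment saturation records all $R$-consequences that passed through its old value on pairs of surviving variables, so any ``telescoping'' through later-dropped intermediate terms reduces to on-the-fly closure on the window.

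The automaton $\Cc_\trans$ is obtained analogously by taking $\Cc$ from \thmref{popl19} and layering the same tracker on top---coherence checking itself requires no new rules, as argued above, since equality-based closure is unaffected by $\Aa$ and by relational assumes. Since the added state contributes at most $|\Rr_\trans| \cdot |V|^2$ three-valued flags on top of a state already of size $O(2^{\text{poly}(|V|)})$, the total size remains $O(2^{\text{poly}(|V|)})$, yielding the claimed bounds.
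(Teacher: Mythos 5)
Your proposal is correct and follows essentially the same route as the paper: the paper's automata also extend the state of the constructions from Theorem~\ref{thm:popl19} with components $\rel^+(R)$ and $\rel^-(R)$ recording known-positive and known-negative $R$-facts on (equivalence classes of) program variables, saturate them on each relational assume using transitive closure together with exactly the two contrapositive rules of Fig.~\ref{fig:transitivity}, reject when $\rel^+(R)\cap\rel^-(R)\neq\emptyset$, and justify completeness via the same window invariant (phrased there through a minimal transitive model and a homomorphism argument). Your observation that pure transitivity axioms entail no new ground equalities, so the congruence-closure and coherence machinery is untouched, likewise matches the paper.
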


\begin{proof}[Proof Sketch]
These are in some sense a generalization of the automata constructions
used to establish decidabiloty in~\cite{coherence2019}.
The automata
$\Ff_\trans$ and $\Cc_\trans$ rely on tracking equivalence between
values stored in variables, and functional and relational
correspondences between these values. However, now since some
relations maybe transitive, additional relational correspondences (or
their absence) maybe implied for $R \in \Rr_\trans$. The basic idea is
to maintain for transitive relations $R$ (a) the transitive closure of
the positive relation assumes $\passume(R(\cdot, \cdot))$, and (b) the
negative relational assumes implied by the relational assumes seen in
an execution. More precisely, if the execution sees assumes
$\passume(R(x, y))$ and $\passume (R(y, z))$, then we also add the
constraint $R(x, z)$ in the automaton's state.  Further, if the
execution observes $\passume(R (x, y))$ and $\passume(\neg R (x, z))$,
then one can infer the constraint $\neg R(y, z)$, and in this case, we
accumulate this additional constraint in the state of the automaton.
Similarly, if the execution observes $\passume(R (y, z))$ and
$\passume(\neg R (x, z))$, then one can infer the constraint $\neg
R(x, y)$, which is added in the automaton's state. Both these
scenarios are illustrated in~\figref{transitivity}. 
A detailed proof of this result is given in~\appref{app-transitivity}.
\end{proof}
As a consequence we have the following result.
\begin{theorem}
\thmlabel{transitivity}
For $\Aa = \setpred{\varphi^R_\trans}{R \in \Rr_\trans}$, the problems
of coherent verification modulo $\Aa$ and checking coherence modulo
$\Aa$ are $\pspc$-complete.
\end{theorem}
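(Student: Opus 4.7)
The plan is to derive both parts by combining the automata-theoretic result of Theorem~\ref{thm:transitivity-reg} with standard on-the-fly product constructions for the upper bound, and to inherit $\pspc$-hardness directly from Theorem~\ref{thm:popl19} by specialization. First I would observe that for any program $s$, the set $\exec(s)$ is recognized by an NFA of size polynomial in $|s|$. The decidability claims then reduce to intersection and containment questions against $\Ff_\trans$ and $\Cc_\trans$, whose crucial property is not just their exponential size bound but that each state admits a polynomial-size description in $|V|$ (an equivalence partition on program variables, together with the known positive and negative relational facts on those classes for each transitive relation).

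For the upper bound on verification, given a coherent program $s$ with postcondition $\varphi$, I would instrument it to $s' = s;\passume(\neg\varphi)$. Coherence modulo $\Aa$ is preserved under this instrumentation, so by Theorem~\ref{thm:transitivity-reg}(1) the bad executions are precisely $\exec(s') \cap L(\Ff_\trans)$. Nonemptiness of this intersection can be decided by an $\mathsf{NPSPACE}$ procedure that guesses an execution letter by letter, simultaneously advancing the polynomial-space control state of $s'$ and the polynomial-space state of $\Ff_\trans$, and accepts upon reaching a joint accepting state; Savitch's theorem then yields $\pspc$. For coherence checking, I would analogously ask whether $\exec(s) \subseteq L(\Cc_\trans)$, i.e.\ whether there is no execution of $s$ that drives $\Cc_\trans$ to a non-accepting state. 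Because the construction underlying Theorem~\ref{thm:transitivity-reg}(2) maintains a canonical representation (the current congruence closure together with the transitive closure of the accumulated positive and implied negative relational assumes, as depicted in Figure~\ref{fig:transitivity}), $\Cc_\trans$ can be simulated deterministically in polynomial space, so the violation check is again in $\mathsf{NPSPACE} = \pspc$.

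For the matching lower bound, I would specialize to $\Rr_\trans = \emptyset$, so that $\Aa = \emptyset$ and coherence modulo $\Aa$ coincides with (in fact mildly generalizes) the coherence notion of~\cite{coherence2019}. In this regime, Theorem~\ref{thm:popl19} already gives $\pspc$-hardness for both verifying coherent programs and checking coherence, and the hardness transfers to our setting. The main obstacle I anticipate is purely bookkeeping: verifying that each step of $\Ff_\trans$ and $\Cc_\trans$ can indeed be computed in polynomial time and space, which requires opening the construction from Theorem~\ref{thm:transitivity-reg} and showing that incremental maintenance of the transitive closure on equivalence classes of variables (together with the implied negative facts on Figure~\ref{fig:transitivity}) can be performed on each letter while keeping the state description polynomially bounded; this is delicate but standard once the invariants maintained by the automaton are made explicit.
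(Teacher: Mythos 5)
Your proposal is correct and follows essentially the same route as the paper: both reduce the two problems, via the automata $\Ff_\trans$ and $\Cc_\trans$ of Theorem~\ref{thm:transitivity-reg} and the regularity of $\exec(s)$, to emptiness/containment checks on exponential-size automata whose states admit polynomial-size descriptions, yielding the $\pspc$ upper bound, with hardness inherited from the $\Aa=\emptyset$ case of Theorem~\ref{thm:popl19}. Your write-up is in fact somewhat more explicit than the paper's sketch about the on-the-fly simulation and the lower bound, but there is no substantive difference in approach.
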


\subsection{Strict Partial Orders}

We now turn our attention to axioms that dictate that certain
relations be partial or total orders. The anti-symmetry axiom that
holds for non-strict orders introduces subtle complications. Recall
that $R$ is anti-symmetric if $\forall x, y. R(x,y) \wedge R(y,x)
\Rightarrow x= y$; this axiom can imply equality between terms if $R$
holds between a pair of terms. Concretely, if $R$ is anti-symmetric,
and the program makes assumptions in an execution that $R(t_1, t_2)$
and $R(t_2, t_1)$ hold, then any model in which such an execution is
feasible must also ensure that $t_1=t_2$. This implicit equality
assumption interferes with the notions of coherence and the automata
constructions (proofs of the results in~\cite{coherence2019} and~\thmref{transitivity}) that compute a congruence closure on
terms in a streaming fashion. 

Hence, we only consider \emph{strict} partial orders in this section.
Recall that a relation $R$ is a strict partial order if it satisfies
the irreflexivity axiom and the transitivity axiom, together denoted
$\Aa^R_\spo$. We can prove decidability for problems modulo
$\Aa^R_\spo$ by using our algorithm for irreflexivity and
transitivity.

\begin{theorem}
\thmlabel{strict-partial-order}
The following problems are $\pspc$-complete.
\begin{enumerate}[topsep=0pt]
\item Given a program $P$ that is coherent modulo $\Aa^R_\spo$,
  determine if $P$ is correct.
\item Given a program $P$, determine if $P$ is coherent modulo
  $\Aa^R_\spo$
\end{enumerate}
\end{theorem}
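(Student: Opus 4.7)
The plan is to combine the two techniques already developed: the program instrumentation $h^R_\irrefl$ from \thmref{refl-irrefl-symm}, which absorbs the irreflexivity axiom, with the feasibility/coherence automata $\Ff_\trans$ and $\Cc_\trans$ from \thmref{transitivity-reg}, which directly handle transitivity. Concretely, let $L_P = h^R_\irrefl(\exec(P))$; since $h^R_\irrefl$ is a string homomorphism that only inserts $\passume(\neg R(x,x))$ after function assignments, $L_P$ is recognized by an automaton of size polynomial in $|P|$. (To also cover initial variable values, I would prepend once at the start the fixed block of assumes $\passume(\neg R(x,x))$ for every $x \in V$.) To decide coherence of $P$ modulo $\Aa^R_\spo$, I would check whether $L_P \subseteq L(\Cc_\trans)$; to decide verification, I would check emptiness of $h^R_\irrefl(\exec(P;\passume(\neg\varphi))) \cap L(\Ff_\trans)$, where $\varphi$ is the postcondition. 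All three automata have size $2^{\mathrm{poly}(|V|)}$, so both checks lie in $\pspc$, matching the lower bound inherited from \thmref{popl19}.

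The crux is a preservation lemma that generalizes the one implicit in \thmref{refl-irrefl-symm} to the case where the residual axiom set mentions $R$: an execution $\rho$ is feasible (resp.\ coherent) modulo $\Aa^R_\spo$ if and only if $h^R_\irrefl(\rho)$ is feasible (resp.\ coherent) modulo $\{\varphi^R_\trans\}$. The forward direction is immediate since a strict-partial-order model is already transitive and satisfies $\neg R(a,a)$ for every $a$. For the backward direction, given a transitive model $\Mm$ in which $h^R_\irrefl(\rho)$ is feasible, I would construct $\Mm'$ with $U_{\Mm'} = U_\Mm$, the same interpretation of every symbol except $R$, and $\sem{R}_{\Mm'}$ defined as the transitive closure of the pairs $(\val{\Mm}{\sigma}{x}, \val{\Mm}{\sigma}{y})$ arising from positive assumes $\passume(R(x,y))$ in $\rho$. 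Then $\Mm'$ is transitive by construction; $\sem{R}_{\Mm'} \subseteq \sem{R}_\Mm$ by transitivity of $\Mm$, so the negative $R$-assumes of $\rho$ are preserved; and any reflexive pair in $\sem{R}_{\Mm'}$ would arise from a chain $R(a,b_1),\ldots,R(b_n,a)$ of positive assumes, which by transitivity of $\Mm$ forces $R(a,a) \in \sem{R}_\Mm$ and contradicts the inserted assume $\passume(\neg R(a,a))$ that applies because $a$ is the value of some variable.

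The main technical obstacle I anticipate is the coherence half of the preservation lemma, rather than the feasibility half. One must show that the term-equivalence $\congcl{\Aa^R_\spo \cup \kappa(\rho')}$ used in \defref{coherence-def} coincides with $\congcl{\{\varphi^R_\trans\} \cup \kappa(h^R_\irrefl(\rho'))}$ on every prefix $\rho'$. This should go through because neither irreflexivity nor transitivity of $R$ mentions the equality predicate, so they derive no new equalities between ground function terms beyond those already derivable from the equality assumes in $\kappa(\rho')$; the additional $\neg R$ assumes inserted by $h^R_\irrefl$ likewise contribute no equalities. Granting this, the memoizing and early-assume conditions of \defref{coherence-def} transfer directly between $\rho$ and $h^R_\irrefl(\rho)$, and the reductions sketched above yield the $\pspc$ upper bound, with $\pspc$-hardness inherited from the empty-axiom case of \thmref{popl19}.
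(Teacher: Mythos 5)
Your proposal is correct and follows essentially the same route as the paper: the paper likewise reduces $\Aa^R_\spo$ to the pure-transitivity case via the instrumentation $h^R_\irrefl$ (deriving the theorem as a special case of the combination result, \thmref{combination}, and its feasibility/coherence preservation lemmas) and then invokes the automata of \thmref{transitivity-reg}. Your backward model construction (transitive closure of the positively-assumed pairs) is a minor, arguably more robust, variant of the paper's closure-extension construction, and your explicit prepending of $\passume(\neg R(x,x))$ for initial variable values patches a detail the paper glosses over.
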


\subsection{Strict Total Orders}
\seclabel{strict-total-order} A relation $R$ is a strict total order
if it is a strict partial order and satisfies: %
\begin{align}
\forall x, y \cdot x\neq y \implies R(x,y) \lor R(y, x) & \quad \text{(totality)}
\end{align}

Strict total orders are again tricky to handle as the axiom for
totality can result in implicit equality between terms. For example,
if $\neg R(x,y)$ and $\neg R(y,x)$ then it must be the case that
$x=y$. However, if we restrict ourselves to executions that only have
assumes of the form $\passume(R(x,y))$ and do not have any assumes on
$\neg R$, i.e., of the form $\passume(\neg R(x,y))$ then there are no
implicit equalities that are entailed.

Unfortunately, in general, program executions can contain negative
assumes on $R$ (i.e., assumes of the form $\passume(\neg R(x, y))$).
In order to ensure that executions contain only \emph{positive}
assumptions on $R$, we must be careful when identifying executions of
programs with conditionals --- branches where the assumption $\neg
R(x,y)$ holds must be translated to a branch that assumes $R(y,x)$ and
a branch that assumes $x=y$. 

 That is, we modify the following rules
defining executions of programs for branch statements; for all other
statements, the rules are the same as in~\secref{exec-semantics}.
\begin{align*}
\begin{array}{rcl}
\exec(~\passume(\neg R(x, y))~) &=& \dblqt{\passume(R(y, x))} + \dblqt{\passume(x = y)} \\
\exec(~\pif\, R(x, y) \, \pthen\, s_1\,  \pelse\, s_2~~) &=& \dblqt{\passume(R(x, y))}{\cdot}\exec(s_1) + \exec(\passume(\neg R(x, y))){\cdot}\exec(s_2) \\
\exec(~\pif\, \neg R(x, y) \, \pthen\, s_1\,  \pelse\, s_2~~) &=& \exec(\passume(\neg R(x, y))){\cdot}\exec(s_1) + \dblqt{\passume(R(x, y))}{\cdot}\exec(s_2) \\
\exec(~\pwhile\, R(x, y)\, \{s\}~~) &=&
        [\dblqt{\passume(R(x, y))} \cdot \exec(s_1)]^* \cdot \exec(\passume( \neg R(x, y))) \\
\exec(~\pwhile\, \neg R(x, y)\, \{s\}~~) &=&
        [\exec(\passume( \neg R(x, y))) \cdot \exec(s_1)]^* \cdot \dblqt{\passume(R(x, y))}
\end{array}
\end{align*}

After such a translation, executions can now have additional equality assumes even
if they did not appear in the program.  When we refer to coherent
programs, we mean that they are coherent according to the above
modified notion of executions.  This means for such programs to be
coherent, all executions must ensure that the additional equality
assumes are \emph{early}.  And when we talk about coherent
verification of programs with total orders, we mean verification for
programs that are coherent after this transformation.

We observe that in the absence of any assumes of the form $\neg R(x, y)$
the verification problem modulo strict total orders reduces that
modulo strict partial orders, giving us the following
($\Aa^R_\sto$ denote the axioms of irreflexivity, transitivty and
totality for the relation $R$).
\begin{theorem}
\thmlabel{total}
The problems of coherent verification, and checking coherence modulo
$\Aa^R_\sto$ are $\pspc$-complete.
\end{theorem}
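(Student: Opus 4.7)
The plan is to reduce both problems modulo $\Aa^R_\sto$ to the corresponding problems modulo $\Aa^R_\spo$, which are already known to be in $\pspc$ by Theorem~\ref{thm:strict-partial-order}. The key enabling feature is the transformation of $\exec$ described immediately before the theorem statement: every $\passume(\neg R(x,y))$ is rewritten as a nondeterministic choice between $\passume(R(y,x))$ and $\passume(x=y)$, so every execution in the modified $\exec(s)$ contains \emph{only positive} $R$-assumes.

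First I would justify that the transformation is semantically faithful. By totality plus irreflexivity, for any $\Aa^R_\sto$-model $\Mm$ and terms $t_1,t_2$, exactly one of $R(t_1,t_2)$, $R(t_2,t_1)$, $t_1=t_2$ holds. Thus the set of original executions feasible in $\Mm$ is in a natural correspondence with the set of transformed executions feasible in $\Mm$, and the corresponding executions produce the same valuations for all program variables. In particular, $s$ is correct modulo $\Aa^R_\sto$ iff every transformed execution is correct modulo $\Aa^R_\sto$.

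Next I would establish the core reduction: if $\rho$ contains only positive $R$-assumes, then $\rho$ is feasible modulo $\Aa^R_\sto$ iff it is feasible modulo $\Aa^R_\spo$, and likewise for coherence. The nontrivial direction is a linearization argument. Given an $\Aa^R_\spo$-model $\Mm$ in which $\rho$ is feasible, define $\Mm'$ identical to $\Mm$ except that $\sem{R}_{\Mm'}$ is any linear extension of the strict partial order $\sem{R}_\Mm$ to a strict total order on $U_\Mm$ (standard extension lemma for strict partial orders). Since $\sem{R}_\Mm \subseteq \sem{R}_{\Mm'}$ and $\rho$ asserts only positive facts about $R$, all assumes remain valid in $\Mm'$, which is now a $\Aa^R_\sto$-model. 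For coherence, I would argue that for computed terms $t_1,t_2 \in \Terms(\rho)$ we have $t_1 \congcl{\Aa^R_\sto \cup \kappa(\rho)} t_2$ iff $t_1 \congcl{\Aa^R_\spo \cup \kappa(\rho)} t_2$: totality can only force new equalities when combined with two negative $R$-assumes $\neg R(u,v)$ and $\neg R(v,u)$, which cannot arise in $\kappa(\rho)$ by hypothesis. Hence both the memoizing and early-assumes clauses of Definition~\ref{def:coherence-def} yield identical requirements under the two axiom sets.

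Combining these two observations, coherent verification modulo $\Aa^R_\sto$ and checking coherence modulo $\Aa^R_\sto$ on the transformed executions coincide with the corresponding problems modulo $\Aa^R_\spo$, giving the $\pspc$ upper bound via Theorem~\ref{thm:strict-partial-order}; $\pspc$-hardness is inherited from the uninterpreted case (Theorem~\ref{thm:popl19}). The main obstacle I anticipate is the coherence-preservation half of the reduction: one must check carefully that passing from $\Aa^R_\spo$ to $\Aa^R_\sto$ introduces no fresh equational consequences on the terms computed by $\rho$ (so that neither the memoizing nor early-assumes property is strengthened by totality), which is exactly where the restriction to positive-only $R$-assumes is used.
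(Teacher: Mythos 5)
Your proposal is correct and follows essentially the same route as the paper: the paper's proof is exactly the reduction to the strict-partial-order case via Lemma~\ref{lem:total-to-partial}, whose nontrivial direction is the same linearization of $\sem{R}_\Mm$ (valid because the transformed executions carry only positive $R$-assumes), together with the observation that the induced term congruences under $\Aa^R_\sto$ and $\Aa^R_\spo$ coincide on computed terms. Your extra remark justifying the faithfulness of the $\exec$-transformation via trichotomy is a point the paper handles by definition rather than by argument, but it does not change the structure of the proof.
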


\section{Axioms Over Functions}
\seclabel{functions}

We now discuss computational problems modulo axioms that involve
function symbols. The treatment of axioms involving functions in the
verification of coherent programs is inherently hard. This is because,
like in the case of (nonstrict) partial orders and strict total
orders, the axioms along with the $\passume$-steps in the execution,
can imply equalities between terms beyond those entailed by just the
$\passume$ steps in the execution. For example, consider the axiom
$\forall x, y \cdot f(x, y) = f(y, x)$ constraining $f$ to be a
commutative function. Then terms like $f(f(x,y),z)$ are equal to terms
like $f(z, f(x, y))$, and hence when building models we must make sure
that functions/relations on such terms are defined in the same
way. Terms made equivalent by the functional axioms can be
syntactically very different, and keeping track of the equivalence on
unbounded executions is hard using finite memory. We consider many
natural classes of axioms, and proving both positive and negative
results that help delineate the decidability/undecidability boundary.

\subsection{Associativity}
\seclabel{associativity}

We now consider the associativity axiom for a function $f$.
\begin{align}
\varphi^f_\assoc \triangleq & \, \forall x, y, z \cdot f(x, f(y, z)) = f(f(x, y), z) 
& \quad \text{(associativity)}
\end{align}
We show, surprisingly to us, that coherent verification is undecidable
modulo $\set{\varphi^f_\assoc}$, i.e., even when we have only one
axiom that requires only one function to be associative.  
In fact, the situation is a lot worse --- checking the 
feasibility of even a \emph{single} (even coherent) execution 
is undecidable, in the presence of a single associative function.
The proof of the following result uses a reduction from the
word problem for finitely generated semigroups~\cite{post1947}.

\begin{theorem}
\thmlabel{assoc-undec}
Given a a trace $\rho$ that is coherent modulo
$\set{\varphi^f_\assoc}$, it is undecidable to determine if $\rho$ is
feasible. Therefore, the problem checking if a program $P$ that is coherent modulo $\set{\varphi^f_\assoc}$ is undecidable.
\end{theorem}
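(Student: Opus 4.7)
The plan is to reduce from the word problem for finitely presented semigroups, which is undecidable by Post's classical theorem. An instance consists of generators $a_1, \ldots, a_n$, defining equations $u_1 = v_1, \ldots, u_k = v_k$ over $\{a_1,\ldots,a_n\}^+$, and two target words $w_1, w_2$; the question is whether $w_1 = w_2$ holds in the presented semigroup. Since semigroups are exactly the models of $\{\varphi^f_\assoc\}$ (with the semigroup product interpreted as $f$), this is equivalent to asking whether $\{\varphi^f_\assoc\} \cup \{u_i = v_i\}_i \models w_1 = w_2$, and hence to the infeasibility, modulo $\{\varphi^f_\assoc\}$, of any trace that records the $u_i = v_i$ as equality assumes together with $w_1 \neq w_2$ as a disequality assume.

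From an instance I construct a trace $\rho$ over the signature with a single binary function $f$ and finitely many program variables. Let $x_1, \ldots, x_n$ be designated variables whose initial values represent the generators. For each non-empty word $w$ that occurs as a prefix of any of $u_i$, $v_i$, $w_1$, $w_2$, fix the left-associated representative term $T_w$ defined by $T_{a_i} = x_i$ and $T_{w'a_i} = f(T_{w'}, x_i)$. Phase 1 of $\rho$ computes $T_w$ for each such $w$ in order of increasing length, emitting a single statement $y_w \passign f(y_{w'}, x_i)$ for $w = w'a_i$ with $|w|\ge 2$, where $y_w$ is a fresh dedicated variable that is never reassigned later. Phase 2 emits $\passume(y_{u_i} = y_{v_i})$ for each defining equation, and finally $\passume(y_{w_1} \neq y_{w_2})$. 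A standard free-model argument then shows that $\rho$ is feasible modulo $\{\varphi^f_\assoc\}$ iff $w_1 \neq w_2$ in the presented semigroup, since the quotient of the free semigroup by the relations is an initial model.

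The main obstacle is verifying that $\rho$ is coherent modulo $\{\varphi^f_\assoc\}$. The \emph{early assumes} clause is immediate: because no variable in $V$ is ever overwritten, every computed term remains held by its dedicated variable at every subsequent step, so no term is ever dropped. For \emph{memoizing}, the crucial observation is that during Phase 1 we have $\kappa(\sigma) = \emptyset$, so the relevant congruence is equivalence modulo $\varphi^f_\assoc$ alone. Distinct left-associated terms $T_w$ and $T_{w'}$ with $w \neq w'$ as words are separated by the free semigroup on $\{a_1,\ldots,a_n\}$ (which is itself a model of $\varphi^f_\assoc$, in which $T_w$ evaluates to $w$), so they are not associativity-equivalent. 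Since each $T_w$ is constructed exactly once in Phase 1, the premise of the memoizing condition is never triggered. Phase 2 contains no assignments, so memoizing is vacuous there.

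For the program part of the statement, take $P$ to be the straight-line program whose unique execution is $\rho$ with the final $\passume(y_{w_1} \neq y_{w_2})$ omitted, equipped with postcondition $\varphi \equiv (y_{w_1} = y_{w_2})$. Then $P$ is coherent modulo $\{\varphi^f_\assoc\}$ (its only execution is coherent by the argument above), and $P$ satisfies $\varphi$ modulo $\{\varphi^f_\assoc\}$ iff the semigroup entailment $w_1 = w_2$ holds, so verification of coherent programs modulo $\{\varphi^f_\assoc\}$ is likewise undecidable.
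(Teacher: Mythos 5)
Your proposal is correct and follows essentially the same route as the paper: a reduction from Post's undecidable word problem for finitely presented semigroups, encoding words as nested $f$-terms held in dedicated never-reassigned variables, the defining relations as equality assumes, and the target identity as a final disequality assume, with coherence secured by the fact that every computed term stays in a program variable. Your memoizing argument (the premise never fires because distinct left-associated terms are separated by the free semigroup) is a slightly more detailed variant of the paper's one-line observation that retention of all terms suffices, but the construction and correctness argument are the same.
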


\subsection{Commutativity}

We now consider the commutativity axiom, which is the following
\begin{align}
\varphi^f_\comm \triangleq \, & \forall x, y \cdot f(x, y) = f(y, x) 
& \quad \text{(commutativity)}
\end{align}
%
We augment executions with an auxiliary variable $v^* \not\in
V$ and transform executions using the following homomorphism
that uses the auxiliary variable $v^*$
\[
h^f_\comm(a) = 
\begin{cases}
a \cdot \dblqt{v^*\passign f(y, x)} \cdot \dblqt{\passume(z = v^*)}  & \text {if } a = \dblqt{z \passign f(x, y)}\\
a & \text{ otherwise }
\end{cases}
\]

We show that the above transformation preserves feasibility and
coherence, giving us the following result.
\begin{theorem}
\thmlabel{commutativity}
Verification of coherent programs and checking coherence modulo
commutativity axioms is decidable and is $\pspc-$complete.
\end{theorem}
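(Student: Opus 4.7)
The plan is to reduce coherent verification modulo $\{\varphi^f_\comm\}$ to the axiom-free coherent verification problem of~\thmref{popl19}, via the homomorphism $h^f_\comm$ already defined in the excerpt. Since $h^f_\comm$ is a string homomorphism, for any program $P$ the language $h^f_\comm(\exec(P))$ remains regular and can be effectively represented by an automaton polynomial in the size of $P$. If we can show that $\rho$ is feasible (resp.\ coherent) modulo $\{\varphi^f_\comm\}$ iff $h^f_\comm(\rho)$ is feasible (resp.\ coherent) modulo $\emptyset$, then intersecting $h^f_\comm(\exec(P))$ with the automata $\Ff$ and $\Cc$ from~\thmref{popl19} and checking emptiness yields $\pspc$ algorithms for both verification and coherence checking. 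Matching $\pspc$-hardness is inherited from the axiom-free case, which is already $\pspc$-hard by~\thmref{popl19}.

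First I would prove the feasibility preservation lemma. The forward direction is immediate: any $\{\varphi^f_\comm\}$-model $\Mm$ in which $\rho$ is feasible already satisfies the inserted assumes $z = v^*$ after each $z\passign f(x,y)$, because $\sem{f}_\Mm$ is commutative. For the reverse direction, given a model $\Mm$ (not necessarily commutative) in which $h^f_\comm(\rho)$ is feasible, I would construct an enlarged model $\Mm'$ whose universe adds a fresh element $e_*$ and whose interpretation of $f$ agrees with $\sem{f}_\Mm$ on pairs $(e_1,e_2)$ that are values of some computed term $f(t_1,t_2)\in\Terms(\rho)$ and sends every other pair symmetrically to $e_*$. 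Interpretations of all other symbols are inherited from $\Mm$. The inserted assumes $z = v^*$ force $\sem{f(t_1,t_2)}_\Mm = \sem{f(t_2,t_1)}_\Mm$ on the ``covered'' pairs, so $\sem{f}_{\Mm'}$ is commutative everywhere; feasibility of $\rho$ in $\Mm'$ holds because all assumes of $\rho$ live within the original fragment of the structure.

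Next I would prove the coherence preservation lemma. The key observation is that $\congcl{\{\varphi^f_\comm\}\cup\kappa(\sigma)}$ on $\Terms(\sigma)\subseteq\Terms(h^f_\comm(\sigma))$ coincides with $\congcl{\kappa(h^f_\comm(\sigma))}$: every instance of commutativity between computed terms $f(t_1,t_2)$ and $f(t_2,t_1)$ is witnessed by an explicit equality assume inserted by the instrumentation, and conversely the inserted assumes are all sound consequences of $\varphi^f_\comm$. Because $v^*$ is refreshed immediately by the next equality assume and is never read afterwards, the memoizing obligation for the original steps of $\rho$ is unchanged by instrumentation, and the inserted $v^*\passign f(y,x)$ steps are themselves memoizing since $f(y,x)$ is equivalent modulo the just-established equality to the freshly stored term in $z$. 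Symmetrically, the inserted equality assumes are trivially early because their two sides are exactly the terms just stored in $z$ and $v^*$, so no dropped superterm can interact with them.

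The main obstacle, as above, is the reverse direction of feasibility: commutativity is a \emph{global} constraint on $f$, and the instrumented assumes only witness commutativity on terms the execution actually computes. The $e_*$-extension trick resolves this by making $f$ symmetric by default outside the computed window, while exploiting that feasibility of $\rho$ only depends on the window. Once both preservation lemmas are in hand, \thmref{commutativity} follows by the reduction to~\thmref{popl19} outlined in the first paragraph, giving decidability in $\pspc$; $\pspc$-hardness is inherited from the uninterpreted case.
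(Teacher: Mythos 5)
Your proposal is correct and follows essentially the same route as the paper: the same instrumentation homomorphism $h^f_\comm$, the same reverse-direction model surgery that adds a fresh element $e_*$ and redirects $f$ symmetrically outside the computed terms, the same congruence-coincidence argument for coherence preservation, and the same reduction to \thmref{popl19} for the $\pspc$ bounds. No gaps to report.
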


\subsection{Idempotence}

Next we consider the idempotence axiom for a unary function $f$:
\begin{align}
\varphi^f_\idem \triangleq \, & \forall x \cdot f(x) = f(f(x)) & \quad \text{(idempotence)}
\end{align}

\noindent
Again, we show that there is a simple homomorphism 
$h^f_\idem$ that preserves coherence and feasibility 
(see~\appref{app-comm-assoc})
and reduces the verification question to that modulo $\Aa = \emptyset$,
giving:

\begin{theorem}
\thmlabel{idempotence} Verification of coherent programs and checking
coherence modulo idempotence axioms is $\pspc$-complete.
\end{theorem}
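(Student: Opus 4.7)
The plan is to proceed exactly as in the commutativity case (Theorem~\ref{thm:commutativity}) via a coherence- and feasibility-preserving string homomorphism $h^f_\idem$ that instruments every assignment computing $f$ with auxiliary steps forcing the idempotence identity on the computed term. Concretely, using a fresh auxiliary variable $v^* \notin V$, I would define
\[
h^f_\idem(a) = \begin{cases} a \cdot \dblqt{v^* \passign f(z)} \cdot \dblqt{\passume(z = v^*)} & \text{if } a = \dblqt{z \passign f(x)}\\ a & \text{otherwise} \end{cases}
\]
so that after computing $z = f(x)$ the execution explicitly witnesses $f(z) = f(f(x)) = f(x) = z$.

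The heart of the argument is a preservation lemma: for every execution $\rho$, (i) $\rho$ is feasible modulo $\set{\varphi^f_\idem}$ iff $h^f_\idem(\rho)$ is feasible modulo $\emptyset$, and (ii) $\rho$ is coherent modulo $\set{\varphi^f_\idem}$ iff $h^f_\idem(\rho)$ is coherent modulo $\emptyset$. The forward directions of both are easy since any idempotent model witnessing $\rho$ also witnesses $h^f_\idem(\rho)$, and the inserted steps neither compute genuinely new terms (modulo idempotence) nor introduce new assumption content. The backward direction for feasibility is the main obstacle: from a model $\Mm$ witnessing $h^f_\idem(\rho)$ modulo $\emptyset$, one has to synthesize an \emph{idempotent} model $\Mm'$ in which $\rho$ is feasible, and the challenge is that $\sem{f}_\Mm$ need not be idempotent on elements outside the image of computed terms. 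I would resolve this by extending the universe with a fresh absorbing element $e_*$, setting $\sem{f}_{\Mm'}(e) = \sem{f}_\Mm(e)$ whenever $e$ is the $\Mm$-value of some $f$-computed term $t$ in $\rho$ (the inserted assumes guarantee $\sem{f(t)}_\Mm = \sem{f(f(t))}_\Mm$, so this is well-defined and idempotent on such $e$), and $\sem{f}_{\Mm'}(e) = e_*$ with $\sem{f}_{\Mm'}(e_*) = e_*$ otherwise. All other symbols are interpreted as in $\Mm$ (with relations on $e_*$ defined arbitrarily), and a straightforward inductive check confirms that the values of all $\rho$-computed terms and the outcomes of all $\rho$-assumes agree with $\Mm$, so $\rho$ is feasible in the idempotent $\Mm'$.

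For the coherence direction, the key technical step is to observe that $t_1 \congcl{\set{\varphi^f_\idem} \cup \kappa(\sigma)} t_2$ iff $t_1 \congcl{\kappa(h^f_\idem(\sigma))} t_2$ for every prefix $\sigma$ of $\rho$ and terms $t_1,t_2 \in \Terms(\sigma)$. The nontrivial direction follows because every use of the idempotence axiom $f(u) = f(f(u))$ in a derivation of $t_1 = t_2$ can be replaced by an application of one of the equality assumes $z = v^*$ explicitly inserted by $h^f_\idem$, since $\rho$ only manipulates finitely many $f$-subterms and each is instrumented. Given this, the memoizing and early-assumes conditions of Definition~\ref{def:coherence-def} transfer verbatim between $\rho$ and $h^f_\idem(\rho)$.

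With the preservation lemma in hand, decidability and the PSPACE upper bound follow from Theorem~\ref{thm:popl19}: since $h^f_\idem$ is a length-bounded homomorphism and the set of executions of a program is regular, $h^f_\idem(\exec(P))$ is regular and effectively constructible, and intersection with the (exponential-size) automata $\Ff$ for infeasibility and $\Cc$ for coherence modulo $\emptyset$ is checked on-the-fly in PSPACE. PSPACE-hardness is inherited from the case of empty axioms in Theorem~\ref{thm:popl19}, as coherent program verification without axioms is already a special case. \qed
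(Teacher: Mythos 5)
Your proposal is correct and takes essentially the same route as the paper: the identical instrumentation homomorphism $h^f_\idem$, a feasibility/coherence preservation lemma proved by adjoining a fresh absorbing element $e_*$ and retaining $\sem{f}_\Mm$ only on values arising from the execution (the paper packages this as a one-element extension inside its general combination machinery in Theorem~\ref{thm:combination}), and the reduction to the axiom-free automata of Theorem~\ref{thm:popl19} via regularity under homomorphisms. The one point to state carefully is that the model surgery must keep $\sem{f}_\Mm$ on every element that is the value of an \emph{argument} of a computed $f$-application (not merely on values of computed $f$-images), which is what your parenthetical justification in fact uses and what makes both idempotence of $\sem{f}_{\Mm'}$ and agreement on all computed terms go through.
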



\section{Combining Axioms}
\seclabel{combine}

We have thus far proved decidability results when a relation or functions satisfies certain properties like reflexivity/irreflexivity/symmetry/transitivity or commutativity/idempotence. We now show that all of these results can be combined. That is, we can consider a signature where relations and functions are assumed to satisfy some subset of these properties, and with some being uninterpreted, and the verification problem will remain decidable for coherent programs.

\begin{theorem}
\thmlabel{combination}
Let $\Aa$ be a set of axioms where each relation symbol $R$ is either a total order or satisfies some (possibly empty) subset of properties out of reflexivity, irreflexivity, symmetry, transitivity, and each function symbol $f$ satisfies some (possibly empty) subset out of commutativity and idempotence. The verification problem for coherent programs modulo $\Aa$ is $\pspc$-complete.
\end{theorem}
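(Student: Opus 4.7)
The plan is to establish the theorem in three stages: (i) eliminate all axioms handled by program instrumentation, leaving only transitivity axioms; (ii) reduce to the decidable case covered by \thmref{transitivity-reg}; (iii) derive the $\pspc$ bounds.

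First, I will define a composite string homomorphism $h$ obtained by composing, in a fixed order, the individual instrumentation maps introduced earlier: $h^R_\refl, h^R_\irrefl$, and $h^R_\sym$ for each relation with those properties, together with $h^f_\comm$ and $h^f_\idem$ for each function with the corresponding property. For relations declared to be strict total orders, I will first apply the branch-level translation from \secref{strict-total-order} that replaces every $\passume(\neg R(x,y))$ with the nondeterministic choice $\passume(R(y,x)) + \passume(x{=}y)$; after this translation no negative $R$-assume remains, so totality of $R$ contributes no new implicit equalities and the axiom set for $R$ reduces to irreflexivity plus transitivity, i.e.\ the strict partial order case. The homomorphism $h$ is then applied on top. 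Crucially, each individual instrumentation only inserts assumes whose terms are terms already present in the current window (program variables), so applying one homomorphism never disturbs the invariants exploited by the next, and the composition is well-defined independent of ordering up to adding redundant assumes.

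Next, I will prove a generalized preservation lemma: for any execution $\rho$ and the composite map $h$, $\rho$ is feasible (resp.\ coherent) modulo $\Aa$ iff $h(\rho)$ is feasible (resp.\ coherent) modulo $\Aa_\trans$, where $\Aa_\trans \subseteq \Aa$ is the subset of transitivity axioms. The forward direction is straightforward: any $\Aa$-model witnessing feasibility of $\rho$ already satisfies all the inserted assumes in $h(\rho)$, and is in particular an $\Aa_\trans$-model. For the backward direction, given an $\Aa_\trans$-model $\Mm$ for $h(\rho)$, I construct a model $\Mm'$ satisfying all of $\Aa$ by combining the individual model modifications: interpretations of reflexive/irreflexive/symmetric relations are adjusted exactly as in the proof of \lemref{rel-preservation}; for commutative and idempotent functions, the interpretation is modified on non-computed arguments as in the proof sketch of \thmref{commutativity} (pointing unrelated arguments at fresh/canonical elements). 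Because the homomorphisms affect disjoint symbols (each $h^R_p$ only touches $R$, each $h^f_q$ only touches $f$), these adjustments do not interfere, and the construction still agrees with $\Mm$ on all terms actually computed by $\rho$, hence all original assumes remain satisfied. The coherence half requires the congruence-closure identity $t_1 \congcl{\Aa \cup \kappa(\rho)} t_2 \iff t_1 \congcl{\Aa_\trans \cup \kappa(h(\rho))} t_2$, which follows because every non-transitivity axiom in $\Aa$ is made explicit as an assume in $h(\rho)$ on exactly the terms witnessing any new equivalence (reflexivity assumes are attached to every computed term, commutativity/idempotence assumes to every relevant subterm, and symmetry to every asserted pair).

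With this preservation lemma in hand, decidability reduces to a two-step automata composition: since $h$ is a string homomorphism and the original program induces a regular language of executions, $h(\exec(P))$ is regular and can be recognized by a transducer of polynomial size; composing with the automata $\Ff_\trans$ and $\Cc_\trans$ from \thmref{transitivity-reg} gives automata of size $O(2^{\mathrm{poly}(|V|)})$ for feasibility and coherence modulo $\Aa$. Standard on-the-fly emptiness checking then yields the $\pspc$ upper bound for both verification of coherent programs modulo $\Aa$ and checking coherence modulo $\Aa$. The matching $\pspc$ lower bound is inherited from \thmref{popl19}, since taking $\Aa = \emptyset$ is a special case.

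The main obstacle I anticipate is the backward direction of the preservation lemma when several axiom types coexist on overlapping vocabularies---for instance, a relation that is both transitive and symmetric, or a function that is both commutative and idempotent. One has to verify that the fresh-element construction used for commutativity/idempotence respects the relational interpretations on the added element, and that the model tweaks for reflexive/symmetric/irreflexive relations remain consistent after the transitive closure on a transitive relation is taken into account. I would handle this by ordering the model modifications so that the relational interpretations are fixed first (closing under the required properties on the computed-term sublattice) and the functional extensions to fresh elements are done last; because fresh elements are never related to computed elements by any relational assume in $h(\rho)$, no conflict arises.
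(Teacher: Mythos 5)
Your overall strategy is the same as the paper's: use the instrumentation homomorphisms to eliminate every axiom except transitivity (translating total orders into positive-assume form first, so they reduce to strict partial orders), then discharge the residual transitivity axioms with the automata of \thmref{transitivity-reg}, and read off the $\pspc$ bounds from regularity plus on-the-fly emptiness. That is exactly the paper's decomposition, and the preservation lemma you state is the right key lemma.

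There is, however, one concrete flaw in how you resolve the obstacle you yourself identify. You claim the composition ``is well-defined independent of ordering,'' and in the backward model construction you propose to fix the relational interpretations \emph{first} and perform the fresh-element extension for commutativity/idempotence \emph{last}. The paper explicitly relies on the opposite discipline --- relational axioms are eliminated before functional ones --- and for a reason: the one-element extension used to repair a commutative or idempotent function does \emph{not} preserve reflexivity (or totality), since the fresh element $e_*$ satisfies $\sem{R}_{\oneext(\Mm)} = \sem{R}_{\Mm}$ and hence $\neg R(e_*,e_*)$ even when $R$ is required to be reflexive. So if you close $R$ under reflexivity and only then adjoin $e_*$, the resulting model violates $\varphi^R_\refl$; and closing ``on the computed-term sublattice'' does not help, because the axiom is universally quantified over the whole universe. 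Your observation that $e_*$ participates in no relational assume of $h(\rho)$ addresses feasibility of the assumes but not satisfaction of the axioms themselves. The fix is either to adopt the paper's elimination order (so that by the time the one-element extension is performed, only transitivity and functional axioms remain, and these are invariant under one-element extension), or to apply the relational closures \emph{after} the fresh-element extension over the full enlarged universe. With that correction the argument goes through as in the paper.
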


The proof of the above result proceeds by \emph{eliminating} axioms one at a time.
We first eliminate the relational axioms (reflexivity, irreflexivity, symmetry) in $\Aa$ using program instrumentation. We then eliminate the functional axioms in $\Aa$, again using program instrumentation. 
Our proof relies on this order of elimination of axioms.
At this point, the only axioms remaining are those corresponding to transitivity of a subset of relational symbols, which is handled using the automata construction discussed in the proof of~\thmref{transitivity-reg}.

\section{Related Work}
\seclabel{related}

The theory of equality with uninterpreted functions (EUF) is a widely used
theory in many verification applications as it has decidable quantifier free fragment.
EUF has been central to advances in verification of microprocessor control~\cite{Burch1994,bls02}
and hardware verification~\cite{als08,hikb96} and property directed model checking~\cite{Hmb17}.
EUF has been used as a popular abstraction in software verification~\cite{bh07structural,bh08}.
Uninterpreted functions have also been studied for equivalence checking
and translation validation~\cite{ps06}.
Bueno et al~\cite{euforia2019} demonstrated the effectiveness
of uninterpreted programs for verifying SVCOMP benchmarks against control flow properties.

Mathur et al~\cite{coherence2019} introduced the class of coherent 
uninterpreted programs and showed that verification of coherent programs,
with or without recursive function calls, is a decidable problem. 
This is one of the few subclasses of program verification over infinite domains that is known to be decidable.
Previous works~\cite{godoy09,Gulwani2007,Muller2005herbrand} have established decidability of verification of classes of uninterpreted programs 
with heavy syntactic restrictions such as disallowing conditionals inside loops 
or nested loops, etc.
As noted in~\cite{coherence2019}, the notion of
coherence is close to the notion of a bounded 
pathwidth decomposition~\cite{robertson1983graph}.
A term that is created in a coherent execution stays within
some program variable (modulo congruence) until the first time
all variables containing that term are over-written,
and after this point, the execution never computes it again,
and thus, the set of windows that contain a term
form a contiguous segment of the program execution.
Path decomposition and the related notion
of tree decomposition have been exploited
many times in the literature to give decidability in verification~\cite{madhu2011,chatterjee2016,chatterjee2015}.

The work in~\cite{memsafeMMKMV19} extends the work of~\cite{coherence2019}
to \emph{updatable maps} and identifies extensions of coherence
that make verification decidable. It utilizes this to provide implementation
of verification algorithms for memory safety for a class of heap manipulating programs, including traversal algorithms on data structures such as singly linked list, sorted lists, binary search trees etc.
Combining the results of this paper with these results is an interesting
future direction.

The class of EPR formulas that consist of universally quantified formulas over relational signatures is a well-known decidable class of first-order logic~\cite{Ramsey1987}.
EPR-based reasoning has been proved powerful for verification of large-scale systems~\cite{paxosepr2017,McMillan2016,Taube2018} and the 
Ivy~\cite{ivy2016,ivy2018} system is one of the most
notable framework that exploits EPR based reasoning for verifying program snippets without recursion.
EPR encoding of order axioms such as reflexivity, 
symmetry, transitivity
and total orders has been used in proving programs  working
over heaps~\cite{Itzhaky2014}.

The work in Kleene Algebra with Tests (KAT)~\cite{Kozen1997} 
considers problems involving unbounded recursion and choice with
abstractions of data, similar to our work. However, while we treat
congruence axioms for equality faithfully in our work, it is unclear
to us how to express these in KAT or its extensions~\cite{kozen1996,kozen2014,doumane2019}. Furthermore, 
the restrictions of coherence studied in~\cite{coherence2019} and the
work here that are based on bounded path-width notions 
seem very different from studies of decidable problems in KAT.
A study of whether our results can be adapted to yield decidable
fragments for KAT is an interesting future direction.


A notable verification technique with an automata-theoretic
foundation and that has been very effective in practice
is that of  trace abstraction due to 
Heizmann et al~\cite{Heizmann2009SAS,Heizmann2010,Matthias2013,Farzan2013,Farzan2014,Farzan2015}. In this technique, one constructs  \emph{iteratively} regular sets that (incompletely) capture the
set of all infeasible executions, eventually striving to cover
all failing executions of a program, but handling complex theories such
as arithmetic. In contrast, our work here builds complete automata
in one stroke that accept all infeasible traces over a vocabulary, but
handles only simple theories with restricted sets of axioms, but yielding decidability. Combining these two lines of work for more efficient software verification is an interesting future direction.


\section{Conclusions}
\seclabel{conclusions}

By incorporating axioms on functions and relations, decidability results in this paper, enable a more faithfully automatic verification of programs. It is worth noting that the upper bound for all our decidability results is $\pspc$, which is the same as that for Boolean programs. Thus, though we consider programs over infinite domains with additional structure, our verification results have the same complexity as that for programs over Boolean domains.

One future direction is to adapt this technique for practical program verification. In this context, adapting our technique within the automata-theoretic technique of~\cite{Heizmann2009SAS,Matthias2013,Heizmann2010,Farzan2015,Farzan2013} seems most promising. Second, there are several program verification techniques that use EPR, and in several of these, EPR is used mainly to establish a linear order on the universe~\cite{Itzhaky2014}. Automatically verifying such programs using our technique is worth exploring. 


\bibliographystyle{splncs04}
\bibliography{references}

\clearpage

\appendix


\section{Handling Relations in Streaming Congruence Closure}
\applabel{automata-relations}

The work in~\cite{coherence2019} omit relations 
and model them as functions.
Specifically, all programs are assumed to have two 
fixed variables $\cd{T}$ and $\cd{F}$ (corresponding to
Boolean constants \texttt{true} and \texttt{false})
that are never re-assigned. In the beginning of each program,
there is an assume $\passume(\cd{T} \neq \cd{F})$
Further, for every $k$-ary relation $R$, there is a 
function $f_R$ and a variable $b_R$.
Every assumption of the form $\dblqt{\passume(R(\vec{\cd{z}}))}$ 
is translated to the sequence 
$\dblqt{b_R \passign f_R(\vec{\cd{z}})} \cdot \dblqt{\passume(b_R = T)}$,
and every assumption of the form $\dblqt{\passume(\neg R(\vec{\cd{z}}))}$ 
is translated to the sequence 
$\dblqt{b_R \passign f_R(\vec{\cd{z}})} \cdot \dblqt{\passume(b_R = F)}$.

This approach adds additional program variables and 
function symbols and further restricts the class of programs 
that are coherent because the memoizing restriction also 
applies to the newly introduced function symbols.
In this paper, we show how to handle relations as first class symbols
without modeling them using function symbols.
For this, we will construct an automaton 
(similar to that in~\cite{coherence2019})
that accepts coherent executions 
(modulo the empty set of axioms $\emptyset$) iff 
they are feasible (modulo $\emptyset$).

Recall that executions are words over the alphabet 
$\Pi = \setpred{ \dblqt{x \passign y}, \dblqt{x \passign f(\vec{z})},  
\dblqt{\passume (x=y)}, \dblqt{\passume (x\neq y)},
\dblqt{\passume (R(\vec{z}))},
\dblqt{\passume (\neg R(\vec{z}))}
}{
x, y, \vec{z} \textit{~are~in~} V
}$.


Let us denote by $A_\scc$ our automaton for streaming congruence closure.
The states $Q_\scc$ are either the special reject state $\reject$
or tuples of the form $(\equiv, d, P, \rel^+, \rel^-)$, where
\begin{itemize}
	\item $\equiv$ is an equivalence relation over $V$,
	\item $d$ is a symmetric and irreflexive binary relation 
	over $V/\equiv$ (equivalence classes of $\equiv$),
	\item $P$ is such that for every $k$-ary function $f \in \Sigma$,
	$P(f)$ is a partial mapping from $(V/\equiv)^k \to V/\equiv$, and
	\item $\rel^+$ and $\rel^-$ are such that for every $k$-ary relation
	$R$, $\rel^+(R)$ and $\rel^-(R)$ are sets of k-tuples of
	$V/\equiv$ such that $\rel^+(R) \cap \rel^-(R) = \emptyset$.
\end{itemize}

Notice that the first three components of the state are similar
to~\cite{coherence2019}.
The later two components intuitively accumulate the relational
assumes (corresponding to $\gamma(\cdot)$ and $\delta(\cdot)$).

The transition relation $\delta_\scc$ of the automaton is defined
as follows.
Let $q = (\equiv, d, P, \rel^+, \rel^-)$.
If $q = \reject$, then $\delta_\scc(q, a) = \reject$ for every $a \in \Pi$
(i.e., $\reject$ is an absorbing state).
Otherwise,  we define the state $q' = \delta_\scc(q, a)$ as the 
tuple $(\equiv', d', P', \rel'^{+}, \rel'^-)$ below.
In each of these cases, if $d'$ becomes irreflexive or
there is a relation $R$ such that $\rel'^+(R) \cap \rel'^-(R) \neq \emptyset$,
then we set $q'$ to be $\reject$.

\begin{description}
\item[$a = \dblqt{x \passign y}$].\\
Here, if $y = x$, $q' = q$.
Otherwise, the variable $x$ gets updated to be in 
the equivalence class of $y$, and $d', P', \rel'^+$ and
  $\rel'^-$ are updated accordingly:
\begin{itemize}
\item $\equiv' =  
  \longproj{\equiv}{V\setminus\set{x}}
  \cup \setpred{(x,y'),(y',x)}{y'\equiv
  y} \cup \set{(x, x)}$.
\item $d' = \setpred{(\eqcl{x_1}{\equiv'},\eqcl{x_2}{\equiv'})}{x_1,
  x_2 \in V\setminus \set{x}, (\eqcl{x_1}{\equiv},\eqcl{x_2}{\equiv})
  \in d }$ 
\item $P'$ is such that for every $r$-ary function $h$,
\[
P'(h)(\eqcl{x_1}{\equiv'},\ldots \eqcl{x_r}{\equiv'}) =
   \begin{cases}
   \eqcl{u}{\equiv'} 
                & x \not\in \set{u, x_1,\ldots x_r} \text{ and }\\
   & \eqcl{u}{\equiv} = P(h)(\eqcl{x_1}{\equiv},\ldots \eqcl{x_r}{\equiv})\\
   \undf & \mbox{otherwise}
   \end{cases}
\]
\item $\rel'^+$ is such that for every $k$-ary relation $R$,
\[
\rel'^+(R) = \setpred{
(\eqcl{x_1}{\equiv'},\ldots \eqcl{x_k}{\equiv'})
}{
  x_1, x_2, \ldots, x_k \in V\setminus \set{x},
  (\eqcl{x_1}{\equiv},\ldots \eqcl{x_k}{\equiv} \in \rel^+(R))
} 
\]
\item $\rel'^-$ is such that for every $k$-ary relation $R$,
\[
\rel'^-(R) = \setpred{
(\eqcl{x_1}{\equiv'},\ldots \eqcl{x_k}{\equiv'})
}{
  x_1, x_2, \ldots, x_k \in V\setminus \set{x},
  (\eqcl{x_1}{\equiv},\ldots \eqcl{x_k}{\equiv} \in \rel^-(R))
} 
\]
\end{itemize}

\item[$a = \dblqt{x \passign f(z_1, \ldots z_k)}$].\\
 There are two cases  to consider.
\begin{enumerate}
\item \textbf{Case $P(f)(\eqcl{z_1}{\equiv},\ldots \eqcl{z_k}{\equiv})$
  is defined}.\\
   Let $P(f)(\eqcl{z_1}{\equiv},\ldots \eqcl{z_k}{\equiv})
  = \eqcl{v}{\equiv}$. 
  This case is similar to the case when $a$ is
  $\dblqt{x \passign y}$.  
  That is, when $x \in \eqcl{v}{\equiv}$, then $\equiv' \, = \, \equiv$, $d' = d$ and $P' = P$.
  Otherwise, we have
\begin{itemize}
\item $\equiv' = \longproj{\equiv}{V\setminus\set{x}} \cup \setpred{(x,v'),(v',x)}{v'\equiv v} \cup \set{(x, x)}$
\item $d' = \setpred{(\eqcl{x_1}{\equiv'},\eqcl{x_2}{\equiv'})}{x_1,
  x_2 \in V\setminus \set{x}, (\eqcl{x_1}{\equiv},\eqcl{x_2}{\equiv})
  \in d }$
\item $P'$ is such that for every $r$-ary function $h$,
\[
P'(h)(\eqcl{x_1}{\equiv'},\ldots \eqcl{x_r}{\equiv'}) =
   \begin{cases}
   \eqcl{u}{\equiv'} 
   
   & x \not\in \set{u, x_1,\ldots x_r} \text{ and }\\
   & \eqcl{u}{\equiv} = P(h)(\eqcl{x_1}{\equiv},\ldots \eqcl{x_r}{\equiv})\\
   \undf & \mbox{otherwise}
   \end{cases}
\]
\item $\rel'^+$ is such that for every $k$-ary relation $R$,
\[
\rel'^+(R) = \setpred{
(\eqcl{x_1}{\equiv'},\ldots \eqcl{x_k}{\equiv'})
}{
  x_1, x_2, \ldots, x_k \in V\setminus \set{x},
  (\eqcl{x_1}{\equiv},\ldots \eqcl{x_k}{\equiv} \in \rel^+(R))
} 
\]
\item $\rel'^-$ is such that for every $k$-ary relation $R$,
\[
\rel'^-(R) = \setpred{
(\eqcl{x_1}{\equiv'},\ldots \eqcl{x_k}{\equiv'})
}{
  x_1, x_2, \ldots, x_k \in V\setminus \set{x},
  (\eqcl{x_1}{\equiv},\ldots \eqcl{x_k}{\equiv} \in \rel^-(R))
} 
\]
\end{itemize}

\item \textbf{Case $P(f)(\eqcl{z_1}{\equiv},\ldots \eqcl{z_k}{\equiv}$
  is undefined.} \\
  In this case, we remove $x$ from its older
  equivalence class and make a new class that only contains the
  variable $x$.  We update $P$ to $P'$ so that the function $f$ maps
  the tuple $(\eqcl{z_1}{\equiv'}, \ldots, \eqcl{z_k}{\equiv'})$ (if
  each of them is a valid/non-empty equivalence class) to the class
  $\eqcl{x}{\equiv'}$.  The set $d'$ follows easily from the new
  $\equiv'$ and the older set $d$.  Thus,
\begin{itemize}
\item $\equiv' = \longproj{\equiv}{V\setminus\set{x}} \cup
  \set{(x, x)}$
\item $d' = \setpred{(\eqcl{x_1}{\equiv'},\eqcl{x_2}{\equiv'})}{x_1,
  x_2 \in V\setminus \set{x}, (\eqcl{x_1}{\equiv},\eqcl{x_2}{\equiv})
  \in d }$
\item $P'$ behaves similar to $P$ for every function different from $f$.
\begin{itemize}
\item For every $r$-ary function $h \neq f$,
\[
P'(h)(\eqcl{x_1}{\equiv'}, \ldots, \eqcl{x_r}{\equiv'}) = 
   \begin{cases}
   \eqcl{u}{\equiv'} & \mbox{if } x \not\in \set{u, x_1,\ldots x_k}
        \mbox{ and } \\
        & \eqcl{u}{\equiv} = P(h)(\eqcl{x_1}{\equiv},\ldots \eqcl{x_r}{\equiv})\\
   \undf & \mbox{otherwise}
   \end{cases}
\]
\item For the function $f$, we have the following.
\[
P'(f)(\eqcl{x_1}{\equiv'}, \ldots, \eqcl{x_k}{\equiv'}) =
   \begin{cases}
   \eqcl{x}{\equiv'} & \mbox{if } x_i = z_i\ \forall i \text{ and } x \not\in \set{x_1,\ldots x_k}\\
   \eqcl{u}{\equiv'} & \mbox{if } x \not\in \set{u, x_1,\ldots x_k}
        \mbox{ and } \\
        & \eqcl{u}{\equiv} = P(f)(\eqcl{x_1}{\equiv},\ldots \eqcl{x_
k}{\equiv})\\
   \undf & \mbox{otherwise}
   \end{cases}
\]
\item $\rel'^+$ is such that for every $k$-ary relation $R$,
\[
\rel'^+(R) = \setpred{
(\eqcl{x_1}{\equiv'},\ldots \eqcl{x_k}{\equiv'})
}{
  x_1, x_2, \ldots, x_k \in V\setminus \set{x},
  (\eqcl{x_1}{\equiv},\ldots \eqcl{x_k}{\equiv} \in \rel^+(R))
} 
\]
\item $\rel'^-$ is such that for every $k$-ary relation $R$,
\[
\rel'^-(R) = \setpred{
(\eqcl{x_1}{\equiv'},\ldots \eqcl{x_k}{\equiv'})
}{
  x_1, x_2, \ldots, x_k \in V\setminus \set{x},
  (\eqcl{x_1}{\equiv},\ldots \eqcl{x_k}{\equiv} \in \rel^-(R))
} 
\]
\end{itemize}            
\end{itemize}
\end{enumerate}

\item[$a = \dblqt{\passume(x = y)}$].\\
Here, we essentially merge the
  equivalence classes in which $x$ and $y$ belong and perform the ``local
  congruence closure''. In
  addition, $d'$ and $P'$ are also updated as in~\cite{coherence2019}.
\begin{itemize}
\item $\equiv'$ is the smallest equivalence relation on $V$ such that
  (a) $\equiv \cup \set{(x,y)} \subseteq \equiv'$, and (b) for every
  $k$-ary function symbol $f$ and variables $x_1,x_1',x_2,x_2',\ldots
  x_k,x_k',z,z' \in V$ such that $\eqcl{z}{\equiv} = P(f)(\eqcl{x_1}{\equiv},
  \ldots \eqcl{x_k}{\equiv})$, $\eqcl{z'}{\equiv} =
  P(f)(\eqcl{x_1'}{\equiv},\ldots \eqcl{x_k'}{\equiv})$, and
  $(x_i,x_i') \in \equiv'$ for each $i$, we have $(z,z') \in
  \equiv'$.
\item $d' =
  \setpred{(\eqcl{x_1}{\equiv'},\eqcl{x_2}{\equiv'})}{(\eqcl{x_1}{\equiv},\eqcl{x_2}{\equiv})
    \in d }$
\item $P'$ is such that for every $r$-ary function $h$,
\[
P'(h)(\eqcl{x_1}{\equiv'}, \ldots \eqcl{x_r}{\equiv'}) = 
   \begin{cases}
   \eqcl{u}{\equiv'} & \mbox{if } \eqcl{u}{\equiv} = P(h)(\eqcl{x_1}{\equiv},\ldots \eqcl{x_r}{\equiv}) \\
   \undf & \mbox{otherwise}
   \end{cases} 
\]
\item $\rel'^+$ is such that for every $k$-ary relation $R$,
\[
\rel'^+(R) = \setpred{
(\eqcl{x_1}{\equiv'},\ldots \eqcl{x_k}{\equiv'})
}{
  (\eqcl{x_1}{\equiv},\ldots \eqcl{x_k}{\equiv} \in \rel^+(R))
} 
\]
\item $\rel'^-$ is such that for every $k$-ary relation $R$,
\[
\rel'^-(R) = \setpred{
(\eqcl{x_1}{\equiv'},\ldots \eqcl{x_k}{\equiv'})
}{
  (\eqcl{x_1}{\equiv},\ldots \eqcl{x_k}{\equiv} \in \rel^-(R))
} 
\]
\end{itemize}

\item[$a = \dblqt{\passume(x \neq y)}$].\\
In this case, $\equiv' = \equiv$, $P' = P$, 
$\rel'^+ = \rel^+$ and $\rel'^- = \rel^-$. Further,
\[
d' = d \cup \set{(\eqcl{x}{\equiv'},\eqcl{y}{\equiv'}), (\eqcl{y}{\equiv'},\eqcl{x}{\equiv'})}
\]

\item[$a = \dblqt{\passume(R(x_1, x_2, \ldots, x_k))}$].\\
In this case, $\equiv' = \equiv$, $P' = P$, $d' = d$ 
and $\rel'^- = \rel^-$. Further,
\[
\rel'^+(R') = 
\begin{cases}
\rel^+(R) \cup \set{(\eqcl{x_1}{\equiv'},\eqcl{x_2}{\equiv'}, \ldots, \eqcl{x_k}{\equiv'})} & \text{ if } R' = R \\
\rel^+(R') & \text{ otherwise }
\end{cases}
\]
\item[$a = \dblqt{\passume(\neg R(x_1, x_2, \ldots, x_k))}$].\\
In this case, $\equiv' = \equiv$, $P' = P$, $d' = d$ 
and $\rel'^+ = \rel^+$. Further,
\[
\rel'^-(R') = 
\begin{cases}
\rel^-(R) \cup \set{(\eqcl{x_1}{\equiv'},\eqcl{x_2}{\equiv'}, \ldots, \eqcl{x_k}{\equiv'})} & \text{ if } R' = R \\
\rel^-(R') & \text{ otherwise }
\end{cases}
\]
\end{description}

We next give a proof gist for the correctness of the automaton construction.
The bulk of the proof is the same as that given in~\cite{coherence2019}.
Here, we only discuss the details necessary to prove the correctness that relates
to the relational assumes.

Let us define the notion of a \emph{minimal model}.
Intuitively, this model has the same algebraic structure 
(same interpretations for constants and functions) as the 
\emph{initial} term model as defined in~\cite{coherence2019}.
Further, we also add relations in the minimal model on top of the initial term model.
For a set of ground equalities $A$, we will denote by
$\Mm^\textsf{initial}_A = (U^\textsf{initial}_A, \sem{}^\textsf{initial}_A)$ the initial term model given by
the congruence induced by $A$.

\begin{definition}
\deflabel{minimal-model}
Let $\Gamma = \Gamma_\textsf{equalities} \cup \Gamma_\textsf{relations}$ 
be a set of atomic formulae of the form
$(t_1 = t_2) \in \Gamma_\textsf{equalities}$ 
or $R(t_1, \ldots, t_k) \in \Gamma_\textsf{relations}$
where $t_1, \ldots, t_k$ are ground terms over our vocabulary $\Sigma$
and $R$ is a $k$-ary relation in our vocabulary $\Sigma$.
The minimal model $\Mm^\minimal_\Gamma = (U^\minimal, \sem{}^\minimal)$ of $\Gamma$
is defined as follows.
\begin{itemize}
  \item $U^\minimal = U^\textsf{initial}_{\Gamma_\textsf{equalities}}$,
  \item $\sem{c}^\minimal = \sem{c}^\textsf{initial}_{\Gamma_\textsf{equalities}}$ for $c \in \Cc$,
  \item $\sem{f}^\minimal = \sem{f}^\textsf{initial}_{\Gamma_\textsf{equalities}}$ for $f \in \Ff$, and
  \item $\sem{R}^\minimal = \setpred{(\sem{t_1}^\minimal, \ldots, \sem{t_k}^\minimal)}{R(t_1, \ldots, t_k) \in \Gamma_\textsf{relations}}$.
\end{itemize}
\end{definition}

For an execution $\rho$, the minimal model of $\rho$
is defined by the minimal model for the set of equality and 
positive relational atoms in  $\kappa(\rho)$ (i.e., we do not
include the dis-equality and the negative relational 
assumes accumulated by $\rho$) to define the minimal model.
We will use $\Mm_\rho = (U_\rho, \sem{}_\rho)$ to denote this minimal model.

Notice that an execution $\rho$ only defines a relation on the set
of computed terms and thus, the minimal model never relates elements
outside of the set of computed terms using relation symbols.
This is formalized below.
\begin{lemma}
Let $\rho$ be an execution and let $\Mm_\rho$ be the minimal 
model of $\rho$.
Let $(e_1, \ldots e_k) \in (U_\rho)^k$  be a tuple of
elements in the minimal model such that one of $e_1, \ldots, e_k$
is not computed by the execution (i.e., there is an $1\leq i\leq k$
such that for every $t \in \Terms(\rho)$, $\sem{t}_\rho \neq e_i$).
Then, we have $(e_1, \ldots, e_k) \not\in \sem{R}_\rho$ for every $k$-ary relation $R$.
\end{lemma}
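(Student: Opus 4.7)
The plan is to argue directly from the definition of the minimal model of $\rho$, observing that the only tuples placed into $\sem{R}_\rho$ arise from positive relational assumes in $\kappa(\rho)$, and that every term appearing in such an assume is, by construction of $\kappa(\cdot)$, a term computed by a prefix of $\rho$.

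First I would contrapose: suppose $(e_1,\ldots,e_k)\in\sem{R}_\rho$ for some $k$-ary relation symbol $R$, and show that each coordinate $e_i$ is the semantic value (under $\sem{\cdot}_\rho$) of some term in $\Terms(\rho)$. By \defref{minimal-model} applied to the set of equalities and positive relational atoms of $\kappa(\rho)$, membership of the tuple in $\sem{R}_\rho$ means there exist ground terms $t_1,\ldots,t_k$ with $R(t_1,\ldots,t_k)\in\kappa(\rho)$ and $e_i=\sem{t_i}_\rho$ for every $i$.

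Next I would trace where such an atom $R(t_1,\ldots,t_k)$ can enter $\kappa(\rho)$. Inspecting the inductive definition of $\kappa(\cdot)$, the only clause that introduces a positive relational atom is
\[
\kappa(\sigma\cdot\dblqt{\passume(R(z_1,\ldots,z_k))})=\kappa(\sigma)\cup\{R(\comp(\sigma,z_1),\ldots,\comp(\sigma,z_k))\},
\]
for some prefix $\sigma\cdot\dblqt{\passume(R(z_1,\ldots,z_k))}$ of $\rho$. Hence $t_i=\comp(\sigma,z_i)$ for each $i$, and since $\sigma$ is a prefix of $\rho$ and $z_i\in V$, each $t_i$ lies in $\Terms(\rho)$ by the very definition of $\Terms(\rho)$ as the set of terms computed by prefixes of $\rho$ and variables in $V$.

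Putting the pieces together, each $e_i=\sem{t_i}_\rho$ with $t_i\in\Terms(\rho)$, contradicting the hypothesis that some coordinate $e_i$ is not the value of any computed term. I do not anticipate any obstacle here: the argument is a direct unpacking of \defref{minimal-model} and of the $\kappa$-recursion, and requires no auxiliary construction.
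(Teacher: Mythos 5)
Your argument is correct and is exactly the reasoning the paper intends: the lemma is stated there without an explicit proof, justified only by the preceding remark that an execution defines relations solely on computed terms, and your unpacking of \defref{minimal-model} together with the $\kappa$-recursion is the direct formalization of that remark. Nothing is missing.
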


An important property about the minimal model defined above is that
there is a relation preserving homomorphism from this model to
any other model that satisfies the assumptions in the execution.
Formally,
\begin{lemma}
Let $\Mm = (U_\Mm, \sem{}_\Mm)$ be a first order structure 
and let $\rho$ be an execution that is feasible in $\Mm$.
Then, there is a morphsim $h : U_\rho \to U_\Mm$
such that 
\begin{itemize}
  \item $h(\sem{c}_\rho) = \sem{c}_\Mm$ for every constant $c$,
  \item $h(\sem{f}_\rho(e_1, \ldots, e_k)) = \sem{f}_\Mm(h(e_1), \ldots, h(e_k))$ for every $k$-ary function $f$, and
  \item for every $e_1, \ldots, e_k \in U_\rho$ 
  and for every $k$-ary function, we have
  \[
  (e_1, \ldots, e_k) \in \sem{R}_\rho \implies (h(e_1), \ldots, h(e_k)) \in \sem{R}_\Mm
  \]
\end{itemize}
\end{lemma}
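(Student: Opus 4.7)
The plan is to define the morphism $h$ in the most natural way possible, by sending each element of $U_\rho$ (which is a congruence class of ground terms) to the interpretation of a representative term in $\Mm$. Concretely, since $U_\rho = U^\textsf{initial}_{\Gamma_\textsf{equalities}}$ consists of equivalence classes $[t]$ of ground terms under the congruence induced by the equality atoms in $\kappa(\rho)$, I would define $h([t]) = \sem{t}_\Mm$, where $\sem{t}_\Mm$ denotes the standard interpretation of the ground term $t$ in $\Mm$ (obtained inductively from $\sem{c}_\Mm$ and $\sem{f}_\Mm$).

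The first step is to verify that $h$ is well-defined: whenever $[t_1] = [t_2]$ in $U_\rho$, we must have $\sem{t_1}_\Mm = \sem{t_2}_\Mm$. By construction of the initial term model, $[t_1] = [t_2]$ iff the equality $t_1 = t_2$ is entailed by the congruence closure of $\Gamma_\textsf{equalities}$, which by soundness of equational reasoning holds in every model of these equalities. Since $\rho$ is feasible in $\Mm$, every equality assume appearing in $\rho$ is satisfied by $\Mm$, so $\Mm$ is a model of $\Gamma_\textsf{equalities}$ and therefore $\sem{t_1}_\Mm = \sem{t_2}_\Mm$ as required.

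Next, I would check the three preservation properties directly from the definitions. For constants, $h(\sem{c}_\rho) = h([c]) = \sem{c}_\Mm$. For $k$-ary functions $f$, given $e_i = [t_i]$, the inductive construction of $\sem{f}_\rho$ in the initial term model gives $\sem{f}_\rho(e_1,\ldots,e_k) = [f(t_1,\ldots,t_k)]$, so $h(\sem{f}_\rho(e_1,\ldots,e_k)) = \sem{f(t_1,\ldots,t_k)}_\Mm = \sem{f}_\Mm(h(e_1),\ldots,h(e_k))$. For relations, if $(e_1,\ldots,e_k) \in \sem{R}_\rho$ with $e_i = [t_i]$, then by the definition of the minimal model there exist ground terms $t_1',\ldots,t_k'$ with $[t_i'] = e_i$ such that $R(t_1',\ldots,t_k') \in \Gamma_\textsf{relations}$; this corresponds to a positive relational $\passume$ accumulated in $\kappa(\rho)$. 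Since $\rho$ is feasible in $\Mm$, this assume holds in $\Mm$, yielding $(\sem{t_1'}_\Mm,\ldots,\sem{t_k'}_\Mm) \in \sem{R}_\Mm$, and by well-definedness $\sem{t_i'}_\Mm = h(e_i)$, completing the argument.

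The only genuinely nontrivial step is the well-definedness, and even this is essentially the standard soundness property of congruence closure; the rest is a routine unwinding of the inductive definition of $\sem{\cdot}_\rho$ in the initial term model. Note that the lemma is carefully stated only as an implication (not a biconditional) on relational atoms, since the minimal model contains no negative relational information and $\Mm$ may satisfy strictly more positive relational facts than those explicitly forced by $\rho$; so there is no extra obligation to preserve non-membership in $\sem{R}_\rho$.
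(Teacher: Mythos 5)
Your proof is correct. The paper actually states this lemma without giving any proof, and your argument --- mapping each congruence class $[t]$ to $\sem{t}_\Mm$, checking well-definedness via soundness of congruence closure against the equality assumes satisfied by $\Mm$, and reading off the function and relation conditions from the definition of the minimal model --- is exactly the standard argument one would expect the authors to have in mind. The only step you use silently is the routine induction showing $\val{\Mm}{\sigma}{x} = \sem{\comp(\sigma,x)}_\Mm$, which is needed to pass from the feasibility conditions (stated in terms of $\valplain$) to the claim that $\Mm$ satisfies the ground atoms in $\kappa(\rho)$ (stated in terms of computed terms); this is standard and does not constitute a gap.
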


Finally, we have that the minimal model is a sufficient to check for feasibility of an execution in some model (of course it is also necessary but that is evident). That is,
\begin{lemma}
\lemlabel{minimal-model-existence}
Let $\rho$ be an execution.
If there is model $\Mm$ such that $\rho$ is feasible in $\Mm$,
then $\rho$ is feasible in the minimal model $\Mm_\rho$.
\end{lemma}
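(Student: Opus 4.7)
The plan is to establish feasibility by checking each of the four kinds of \passume{} in $\rho$ against the minimal model, leveraging the morphism from the preceding lemma. First I would prove by induction on the length of any prefix $\sigma$ of $\rho$ the auxiliary identity
\[
\val{\Mm_\rho}{\sigma}{x} \;=\; \sem{\comp(\sigma,x)}_{\Mm_\rho}
\]
for every variable $x \in V$. The base case uses $\comp(\epsilon,x) = \init{x}$; the induction step follows from the fact that the functions in $\Mm_\rho$ are interpreted exactly as the term-constructors of the initial term model, so applying a function on a variable in $\sigma$ and then evaluating agrees with building the syntactic term and then taking its equivalence class.

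Next I would walk through each prefix $\sigma' = \sigma \cdot \dblqt{\passume(c)}$ of $\rho$ and argue that $c$ holds after $\sigma$ in $\Mm_\rho$. For an equality assume $c = (x=y)$, the pair $(\comp(\sigma,x), \comp(\sigma,y))$ was added to $\Gamma_\textsf{equalities}$ when that assume was processed, so by construction of the initial term model (congruence closure of $\Gamma_\textsf{equalities}$), $\sem{\comp(\sigma,x)}_{\Mm_\rho} = \sem{\comp(\sigma,y)}_{\Mm_\rho}$, and the auxiliary identity finishes the case. For a positive relational assume $c = R(\vec{z})$, the tuple $(\comp(\sigma,z_1),\ldots,\comp(\sigma,z_k))$ was placed into $\Gamma_\textsf{relations}$, so the corresponding tuple of equivalence classes lies in $\sem{R}_{\Mm_\rho}$ by definition.

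The remaining two cases are the nontrivial ones and here the morphism $h\colon U_\rho \to U_\Mm$ supplied by the previous lemma is essential. For a disequality assume $c = (x \neq y)$, assume toward contradiction that $\sem{\comp(\sigma,x)}_{\Mm_\rho} = \sem{\comp(\sigma,y)}_{\Mm_\rho}$. Applying $h$ and using the fact that it commutes with constants and functions yields $\sem{\comp(\sigma,x)}_\Mm = \sem{\comp(\sigma,y)}_\Mm$, i.e.\ $\val{\Mm}{\sigma}{x} = \val{\Mm}{\sigma}{y}$, contradicting feasibility of $\rho$ in $\Mm$. For a negative relational assume $c = \neg R(\vec{z})$, suppose for contradiction that $(\sem{\comp(\sigma,z_1)}_{\Mm_\rho},\ldots,\sem{\comp(\sigma,z_k)}_{\Mm_\rho}) \in \sem{R}_{\Mm_\rho}$. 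By the definition of $\sem{R}_{\Mm_\rho}$ this means there exist terms $t_1,\ldots,t_k$ with some positive relational assume $R(t_1,\ldots,t_k) \in \Gamma_\textsf{relations}$ such that $\sem{t_i}_{\Mm_\rho} = \sem{\comp(\sigma,z_i)}_{\Mm_\rho}$ for each $i$. Pushing this equality through $h$ gives $\sem{t_i}_\Mm = \val{\Mm}{\sigma}{z_i}$ for each $i$; on the other hand, the positive assume that produced $R(t_1,\ldots,t_k)$ must have been witnessed in $\Mm$ at its prefix, which forces $(\sem{t_1}_\Mm,\ldots,\sem{t_k}_\Mm) \in \sem{R}_\Mm$ and therefore $(\val{\Mm}{\sigma}{z_1},\ldots,\val{\Mm}{\sigma}{z_k}) \in \sem{R}_\Mm$, contradicting feasibility of $\rho$ in $\Mm$ at the prefix $\sigma'$.

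I expect the negative-relational case to be the main obstacle, since it is the only one that requires combining (i) the constructive description of $\sem{R}_{\Mm_\rho}$ as the image of the positive relational atoms in $\kappa(\rho)$, (ii) the morphism $h$ transporting equalities in $\Mm_\rho$ to equalities in $\Mm$, and (iii) the feasibility of $\rho$ in $\Mm$ applied to an \emph{earlier} prefix than the current one. The equality, disequality, and positive-relational cases are essentially bookkeeping once the auxiliary identity relating $\valplain$ in $\Mm_\rho$ to the syntactic term evaluation is in hand.
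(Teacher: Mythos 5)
Your proof is correct; the paper itself states this lemma without proof, but the morphism lemma it places immediately beforehand is clearly intended to be combined with the term-model construction exactly as you do, so your argument is the natural completion of the paper's intended route (auxiliary identity relating $\valplain$ to $\comp$, direct verification of equality and positive relational assumes from the definition of $\Mm_\rho$, and transport through $h$ to refute violations of disequality and negative relational assumes). One small imprecision: in the negative-relational case the positive atom witnessing membership in $\sem{R}_{\Mm_\rho}$ may originate from a prefix \emph{later} than the current one, since $\Mm_\rho$ is built from $\kappa(\rho)$ for the entire execution, but your argument does not actually depend on that ordering --- only on $\sem{R}_\Mm$ being a single fixed set against which all prefixes of a feasible $\rho$ are checked --- so nothing breaks.
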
 

Below, we present necessary inductive hypotheses to prove the correctness of the
automaton construction. The full proof of correctness can be
re-constructed using the following lemma 
and those used by Mathur et al in~\cite{coherence2019}. 

\begin{lemma}
Let $\rho$ be an  execution that is coherent modulo $\emptyset$.
Let $q = (\equiv, d, P, \rel^+, \rel^-)$ be the state reached after reading $\rho$
in the automaton, i.e., $q = \delta^*_\scc(q_0, \rho)$.
If $q \neq \reject$, then we have
\begin{itemize}
  \item for every $x_1, x_2, \ldots, x_k \in V$ and for every $k$-ary relation $R$,
   such that $(\eqcl{x_1}{\equiv}, \eqcl{x_2}{\equiv}, \ldots, \eqcl{x_k}{\equiv}) \not\in \rel^+(R)$, we have $(e_1, e_2, \ldots, e_k) \not\in \sem{R}_\rho$, in the minimal model of $\rho$,
  where $e_i = \sem{\comp(\rho, x_i)}^\minimal$.
  \item for every $x_1, \ldots, x_k \in V$, and for every $k$-ary relation $R$, we have
  $(\eqcl{x_1}{\equiv}, \eqcl{x_2}{\equiv}, \ldots, \eqcl{x_k}{\equiv}) \in \rel^-(R)$ 
  iff
  for every model $\Mm = (U_\Mm, \sem{}_\Mm)$ for which  $\rho$ is feasible in $\Mm$, we have 
  \[(\sem{\comp(\rho, x_1)}_\Mm \ldots, \sem{\comp(\rho, x_k)}_\Mm) \not\in \sem{R}_\Mm\].
\end{itemize}
\end{lemma}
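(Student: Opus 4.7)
The plan is to prove both claims by induction on the length of $\rho$, maintaining throughout the stronger invariant that the state $q = (\equiv, d, P, \rel^+, \rel^-)$ encodes, on the \emph{currently live} computed terms (i.e., those $\comp(\rho, x)$ for $x \in V$), a faithful picture of congruence, functional correspondences, and positive/negative relational assumes. The base case $\rho = \epsilon$ is immediate: $\rel^+(R) = \rel^-(R) = \emptyset$ for every relation $R$, and the minimal model of $\epsilon$ has empty interpretations for all relations, so the first claim holds vacuously; for the second claim there are no negative obligations to enforce.

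For the inductive step, assume the invariants for a coherent prefix $\sigma$ with state $q \neq \reject$, and let $a$ be the next letter; I would case-split on $a$ following the six cases in the definition of $\delta_\scc$. For the assignment cases $\dblqt{x \passign y}$ and $\dblqt{x \passign f(\vec{z})}$, $\kappa$ does not change, so $\sem{R}^\minimal$ and the set of feasible models are unchanged; the updates to $\rel'^+$ and $\rel'^-$ merely relabel tuples through $\equiv'$ and drop those whose only representative was the overwritten variable $x$, which is correct because any currently-live witness after the assignment must lie in $V \setminus \{x\}$ (or, in the memoizing sub-case, is still reachable through the variable $v$ provided by $P$). For a positive assume $\passume(R(\vec{z}))$, $\rel'^+(R)$ gains exactly the new tuple, mirroring the single new element added to $\sem{R}^\minimal$ by the definition of the minimal model; for a negative assume $\passume(\neg R(\vec{z}))$, $\rel'^-(R)$ gains exactly the new tuple, and the automaton correctly transitions to $\reject$ iff this tuple is already in $\rel'^+(R)$, which by the inductive hypothesis on $\rel^+$ is precisely the condition for $\rho$ to be infeasible in every model.

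The genuinely hard case is $a = \dblqt{\passume(x = y)}$, because $\equiv'$ is computed by the \emph{local} congruence closure using only the functional correspondences recorded in $P$ over the finite quotient $V/\equiv$, whereas the minimal model of $\rho$ is determined by the \emph{global} congruence induced on all ground terms by $\kappa(\sigma) \cup \{\comp(\sigma, x) = \comp(\sigma, y)\}$. The key step, and in my view the main obstacle, is to show that these two closures agree on the live terms: namely, if $\comp(\rho, u) \congcl{\kappa(\rho)} \comp(\rho, v)$, then the local closure already merges $\eqcl{u}{\equiv}$ and $\eqcl{v}{\equiv}$. This is where the coherence hypothesis is essential. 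The memoizing property guarantees that any term created that is globally equivalent to a live term is itself stored in some variable, so every congruence class that gains a new merger has a live representative; the early-assumes property forbids any dropped term from participating in a newly entailed equality, which rules out long propagations of the new equality through vanished subterms. Together, these imply that the only merges needed are those discoverable by a single bottom-up pass over $P$, i.e., exactly the local closure. Once this coincidence is established, the claim for $\rel'^+$ follows by renaming the inductive hypothesis along $\equiv'$, and for the ``every feasible model'' direction of the $\rel'^-$ claim I would invoke \lemref{minimal-model-existence} to reduce the universal quantifier over feasible models to a check in the minimal model, after which the negative-assume characterization follows from the inductive hypothesis and the merging of classes by $\equiv'$.
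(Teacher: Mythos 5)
Your proposal follows essentially the approach the paper intends: the paper does not actually write out a proof of this lemma, but states it as the inductive invariant for the automaton, supplies the minimal-model machinery, and defers the induction to the argument of~\cite{coherence2019}; your case analysis on $\delta_\scc$ and your identification of the equality-assume case (local congruence closure over $V/\equiv$ via $P$ versus global congruence on computed terms, reconciled by memoizing and early assumes) is exactly that argument. The one imprecision worth fixing is in the ``if'' direction of the $\rel^-$ claim: \lemref{minimal-model-existence} only reduces \emph{existence} of a feasible model to the minimal model, whereas here you must show that a tuple of live terms \emph{not} recorded in $\rel^-$ is realized positively in \emph{some} feasible model --- the witness is the minimal model augmented with that single $R$-tuple, whose feasibility rests on the tuple not being $\kappa(\rho)$-congruent to any negatively assumed tuple, which is what the inductive invariant on $\rel^-$ guarantees.
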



\section{Proofs from~\secref{relations}}
\applabel{app-relations}


\subsection{Proof of~\thmref{epr-undec}}
\applabel{epr-undec}

\begin{proof}
The undecidability is proved through a reduction from Post's
Correspondence Problem (PCP). Recall that PCP is the following
problem.

\textbf{PCP.} Let $\Delta = \set{a_1, a_2, \ldots, a_k}$ be a finite
alphabet ($|\Delta| > 2$). Given strings $\alpha_1, \alpha_2 \ldots
\alpha_N, \beta_1, \beta_2, \ldots, \beta_N \in \Delta^*$ (with
$N>0$), determine if there is a sequence $i_1, i_2, \ldots, i_M$ such
that $1 \leq i_j \leq N$ for every $1 \leq j \leq M$ and
\[
\alpha_{i_1} \cdot \alpha_{i_2} \cdots \alpha_{i_M} = 
\beta_{i_1} \cdot \beta_{i_2} \cdots \beta_{i_M}
\]

It is well know that the PCP problem is undecidable. We will prove
that given a PCP instance $I = (\Delta, \alpha_1,\ldots \alpha_N,
\beta_1, \ldots \beta_N)$, we can construct a set of EPR axioms $\Aa$,
a program $P_\epr$ that is coherent with respect to $\Aa$, and a
postcondition $\phi$ such that $I$ is a YES instance of PCP iff
$P_\epr$ does not satisfy $\phi$.

\begin{figure}[t!]
\rule[1mm]{6cm}{0.4pt} $P_\epr$ \rule[1mm]{6cm}{0.4pt} \\

\begin{minipage}[H]{0.4\textwidth}

\cd{\small (* For $1 \leq p \neq q \leq N-1$ *)}\\
\passume \cd{(z}$_p\neq$ \cd{z$_q$);} \\

\passume \cd{(R(x, y));} \\ 
\passume \cd{(z}$_0\neq$ \cd{M);}\\
\cd{j} $\passign$ \cd{z}$_0$;\\
\pwhile \code{(j} $\neq$ \code{M) \{} \\ 
\rule[1mm]{0.3cm}{0pt}
\cd{i\_j} $\passign$ \cd{g(j);}\\\\
\rule[1mm]{0.3cm}{0pt}
\pif\cd{(i\_j = z$_1$)\{} \\
\rule[1mm]{0.5cm}{0pt}
\cd{x'} $\passign$ \cd{f(x);} \\
\rule[1mm]{0.5cm}{0pt}
\passume \cd{($S$(x, x'));}\\
\rule[1mm]{0.5cm}{0pt}
\passume \cd{($Q_{\alpha_{1,1}}$(x'));}\\
\rule[1mm]{0.5cm}{0pt}
\cd{x} $\passign$ \cd{x';}\\
\rule[1mm]{0.5cm}{0pt}
\vdots\\
\rule[1mm]{0.5cm}{0pt}
\cd{x'} $\passign$ \cd{f(x);} \\
\rule[1mm]{0.5cm}{0pt}
\passume \cd{($S$(x, x'));}\\
\rule[1mm]{0.5cm}{0pt}
\passume \cd{($Q_{\alpha_{1,|\alpha_1|}}$(x'));}\\
\rule[1mm]{0.5cm}{0pt}
\cd{x} $\passign$ \cd{x';}\\
\rule[1mm]{0.5cm}{0pt}
\cd{y'} $\passign$ \cd{f(y);} \\
\rule[1mm]{0.5cm}{0pt}
\passume \cd{($S$(y, y'));}\\
\rule[1mm]{0.5cm}{0pt}
\passume \cd{($Q_{\beta_{1,1}}$(y'));}\\
\rule[1mm]{0.5cm}{0pt}
\cd{y} $\passign$ \cd{y';}\\
\rule[1mm]{0.5cm}{0pt}
\vdots\\
\rule[1mm]{0.5cm}{0pt}
\cd{y'} $\passign$ \cd{f(y);} \\
\rule[1mm]{0.5cm}{0pt}
\passume \cd{($S$(y, y'));}\\
\rule[1mm]{0.5cm}{0pt}
\passume \cd{($Q_{\beta_{1,|\beta_1|}}$(y'));}\\
\rule[1mm]{0.5cm}{0pt}
\cd{y'} $\passign$ \cd{y';}\\
\rule[1mm]{0.3cm}{0pt}
\cd{\}} \\
\end{minipage}
\begin{minipage}[H]{0.4\textwidth}
\rule[1mm]{0.3cm}{0pt}
\vdots \\
\rule[1mm]{0.3cm}{0pt}
\pelse\cd{\{} \\
\rule[1mm]{0.5cm}{0pt}
\cd{x'} $\passign$ \cd{f(x);} \\
\rule[1mm]{0.5cm}{0pt}
\passume \cd{($S$(x, x'));}\\
\rule[1mm]{0.5cm}{0pt}
\passume \cd{($Q_{\alpha_{N,1}}$(x'));}\\
\rule[1mm]{0.5cm}{0pt}
\cd{x} $\passign$ \cd{x';}\\
\rule[1mm]{0.5cm}{0pt}
\vdots\\
\rule[1mm]{0.5cm}{0pt}
\cd{x'} $\passign$ \cd{f(x);} \\
\rule[1mm]{0.5cm}{0pt}
\passume \cd{($S$(x, x'));}\\
\rule[1mm]{0.5cm}{0pt}
\passume \cd{($Q_{\alpha_{N,|\alpha_N|}}$(x'));}\\
\rule[1mm]{0.5cm}{0pt}
\cd{x} $\passign$ \cd{x';}\\

\rule[1mm]{0.5cm}{0pt}
\cd{y'} $\passign$ \cd{f(y);} \\
\rule[1mm]{0.5cm}{0pt}
\passume \cd{($S$(y, y'));}\\
\rule[1mm]{0.5cm}{0pt}
\passume \cd{($Q_{\beta_{N,1}}$(y'));}\\
\rule[1mm]{0.5cm}{0pt}
\cd{y} $\passign$ \cd{y';}\\
\rule[1mm]{0.5cm}{0pt}
\vdots\\
\rule[1mm]{0.5cm}{0pt}
\cd{y'} $\passign$ \cd{f(y);} \\
\rule[1mm]{0.5cm}{0pt}
\passume \cd{($S$(y, y'));}\\
\rule[1mm]{0.5cm}{0pt}
\passume \cd{($Q_{\beta_{N,|\beta_N|}}$(y'));}\\
\rule[1mm]{0.5cm}{0pt}
\cd{y} $\passign$ \cd{y';}\\
\rule[1mm]{0.3cm}{0pt}
\cd{\}} \\
\rule[1mm]{0.3cm}{0pt}
\code{j} $\passign$ \code{s(j);} \\ 
\code{\}} \\
\code{@post:} $\neg$\cd{R(x, y)} \\
\end{minipage}
\caption{Program $P_\epr$ for showing verification is undecidable when there are relations and obey EPR axioms in~\figref{undec-epr-axioms}.}
\figlabel{undec-epr}
\end{figure}

\begin{figure}[h]
\begin{eqnarray}
\forall x, y_1, y_2 \cdot \cd{R}(x, y_1) \land \cd{R}(x, y_2) \implies y_1 = y_2 \\
\forall x_1, x_2, y \cdot \cd{R}(x_1, y) \land \cd{R}(x_2, y) \implies x_1 = x_2 \\
\forall x, y_1, y_2 \cdot \cd{S}(x, y_1) \land \cd{S}(x, y_2) \implies y_1 = y_2 \\
\forall x_1, y_1, x_2, y_2 \cdot \cd{R}(x_1, y_1) \land \cd{S}(x_1, x_2) \land \cd{S}(y_1, y_2) \implies \cd{R}(x_2, y_2) \\
\text{For every } a \in \Delta, \text{ we have } \forall x, y \cdot \cd{R}(x, y) \land \cd{Q}_a(x) \implies \cd{Q}_a(y) \\
\text{For every } a \in \Delta, \text{ we have } \forall x, y \cdot \cd{R}(x, y) \land \cd{Q}_a(y) \implies \cd{Q}_a(x) \\
\text{For every } a \neq b \in \Delta, \text{ we have } \forall x, y \cdot \cd{Q}_a(x)  \implies \neg \cd{Q}_b(x)
\end{eqnarray}
\caption{Axioms for the relations used in $P_\epr$.}
\figlabel{undec-epr-axioms}
\end{figure}

Let us fix a PCP instance $I$. The desired program $P_\epr$ (with post
condition $\phi$) is shown in \figref{undec-epr} and the set of EPR
axioms $\Aa$ is shown in \figref{undec-epr-axioms}.

The signature $\Sigma$ consists of unary functions $\cd{f}, \cd{g}$
and $\cd{s}$.  The set of relations in $\Sigma$ is
\[\Rr = \set{\cd{R}, S} \cup \setpred{Q_a}{a \in \Delta}.\]
The relations \cd{R} and \cd{S} are binary, while the rest are unary relations.
The set of variables in the program are 
\[\Vv = \set{\cd{z}_1 \ldots, \cd{z}_{N-1}} \cup \set{\cd{x}, \cd{x'}, \cd{y}, \cd{y'}, \cd{y}, \cd{z}_0,  \cd{j}, \cd{i\_j}, \cd{M}}
\]

Intuitively, the program constructs two strings that prove that $I$ is
a YES instance of PCP --- the positions on one string are indexed by
the variable $\cd{x}$ and positions on the second string are indexed
by the variable $\cd{y}$. Variable $\cd{M}$ intuitively stores the
number of $\alpha_i$'s that need to be concatenated to get a
solution. The value of $\cd{M}$ is fixed by the data model; this way
of exploiting data models to get ``nondeterminism'' is key in this
reduction. The variable $\cd{z}_0$ stores ``0'', and the variables
$\cd{z}_i$ ($i > 0$) store indices of strings in the input instance
$I$. In each iteration of the $\pwhile$-loop, the index of the next
pair of strings in the solution is ``picked'' by applying the
(uninterpreted) function $\cd{g}$; here again the data model that
interprets $\cd{g}$ resolves the non-determinism. Once the index is
picked, the appropriate strings are ``concatenated''. This happens
step-by-step by generating the next index by applying function
$\cd{f}$, and fixing the symbol at that position. Here the relation
$Q_a$ plays a role; if $Q_a(\cd{x})$ holds then intuitively it means
that symbol $a$ appears in position $\cd{x}$ of the string. Finally,
after the next pair is concatenated, the index of the number of
strings in the solution (a.k.a. $\cd{j}$) is ``incremented'' (by using
$\cd{s}$).

The relations $\cd{R}$ and $S$ play an important role. $S$ is the
successor relation on string positions, and so appropriate $\passume$s
on $S$ are inserted whenever $\cd{f}$ is used. The relation $\cd{R}$
relates positions of the two constructed strings if the prefix upto
that position is identical in the two strings --- we start with
requiring that the first positions are related by $\cd{R}$ and our
post condition demands that the last two positions are not
$\cd{R}$-related to say that the constructed strings are not a
solution to the PCP instance.

The axioms in $\Aa$ ensure that the relations $\cd{R}, S$, and $Q_a$
are interpreted consistently with the above intuition. Axioms (1) and
(2) require that a position in the first/second string is $\cd{R}$
related to at most one position in the second/first string. Axioms
(5) and (6) say that $\cd{R}$-related positions have the same
symbol, while axiom (4) says that if two positions are $\cd{R}$
related then so are their ``successors'' (i.e., $S$-related
elements). Axiom (3) requires $S$ to behave like a successor relation
--- any position as at most one $S$-related position. Finally axiom
(7) intuitively says that there is at most one symbol at any
position.


We will now prove the correctness of the reduction outlined in
\figref{undec-epr} and \figref{undec-epr-axioms}.

Let us first argue why $P_\epr$ is coherent modulo the axioms $\Aa$
in~\figref{undec-epr-axioms}.

We first argue that in any execution $\rho$ of $P_\epr$,
there are no equalities implied by the relational assumes.
The only candidate axioms that might imply equalities are
(1), (2) and (3).
In any execution $\rho$, 
the only relational assumes of the form $\cd{R}(t_1, t_2)$
that are implied are of the form 
$\cdm{R(f^n(\init{x}), f^n(\init{y}))}$ ($\cdm{n} \geq 0$)
and thus for a given $t_1$, there is a syntactically unique $t_2$
for which $\cd{R}(t_1, t_2)$ is implied on the computed set of terms,
and thus there is no implied equality using (1) or (2).
Next, the only assumptions of the form $S(t_1, t_2)$
that are observed are of the form $S\cdm{(f^n(\init{z}), f^{n+1}(\init{z}))}$
($\cdm{n} \geq 0$ and $\cd{z} \in \set{\cd{x},\cd{y}}$).
Thus, no equality assumes are implied by (3).

Now, the only equality-\passume~in $\rho$ 
is the one at the end of the \pwhile~loop -- $\passume(\cd{j = M})$. 
At the point where this assume is seen, neither $\cd{j}$ nor $\cd{M}$ have any superterms
and thus there are no implied equalities due to this assume.

Let us now see why $\rho$ is memoizing. 
The terms in \cd{j} are always growing : $\cd{s}^n(\hat{\cd{j}})$ in the $n^{th}$ iteration. 
So both the assignments $\dblqt{\cd{i\_j} \passign \cd{g(j)}}$ and $\dblqt{\cd{j} \passign \cd{s(j)}}$ are memoizing as they never recompute terms.
The same reasoning also applies to the terms in \cd{x} and \cd{y}.

Let us now argue the correctness of the reduction.\\

\noindent
$(\Rightarrow)$. Let us assume that the given PCP instance is a YES instance.
Then, there is a sequence $i_1, i_2, \ldots i_M$ such that
$\alpha_{i_1} \cdot \alpha_{i_2} \cdots \alpha_{i_M} = \cdot \beta_{i_2} \cdot \beta_{i_2} \cdots \beta_{i_M}$.
We can now construct a model that satisfies the EPR axioms in~\figref{undec-epr-axioms}
and violates the post condition.
In this model, \cd{s} is the successor function over $\mathbb{N}$, \cd{z}$_0$ is the number $0$ and \cd{g} maps \cd{j} to $i_\cd{j}$ based on the witness sequence above. 
Further, $\cd{z}_r$ is interpreted as the number $r$. 
The variables \cd{x} and \cd{y} map to $\hat{\cd{x}}$ and $\hat{\cd{y}}$ respectively, which are distinct elements.
The function \cd{f} is such that $\cd{f}^i(\hat{\cd{x}}) \neq \cd{f}^j(\hat{\cd{x}})$ and $\cd{f}^i(\hat{\cd{y}}) \neq \cd{f}^j(\hat{\cd{y}})$ for every $i\neq j \in \mathbb{N}$
and further $\cd{f}^i(\hat{\cd{x}}) \neq \cd{f}^j(\hat{\cd{y}})$ for every $i, j \in \mathbb{N}$. The relations $\cd{Q}_a$ are interpreted as follows: $\cd{Q}_a(\cd{f}^n(\hat{\cd{x}}))$ holds
iff $a$ is the $n^{th}$ character in the sequence $\alpha_{i_1} \cdots \alpha_{i_M}$.
Similarly, $\cd{Q}_a(\cd{f}^n(\hat{\cd{y}}))$ holds
iff $a$ is the $n^{th}$ character in the sequence $\beta_{i_1} \cdots \beta_{i_M}$.
Then, since $\alpha_{i_1} \cdots \alpha_{i_M} = \beta_{i_1} \cdots \beta_{i_M}$, we must have \cd{R(x, y)} at the end of the computation.\\

\newcommand{\rels}{\textsf{rels}}
\noindent
$(\Leftarrow)$. In this case we have a feasible execution $\rho$ with the statement \passume~\cd{R(x, y)} at the end.

Consider the initial term model $\Tt$ for the vocabulary $\Sigma$ (without the relations)
and the starting constants $\hat{\Vv} = \setpred{\hat{x}}{x \in \Vv}$.
We show that it is possible to extend the term model $\Tt$ with interpretations of relations such that the resulting model $\Tt_\rels$ is such that $\rho$ is feasible on $\Tt_\rels$.
In fact, the extension is the following model: each binary and unary relation is interpreted to be the smallest relation that satisfies the \passume's in $\rho$ as well as the EPR axioms.
This is well defined because the assumes on relations in $\rho$ are all positive assumes and all EPR axioms are monotonic, except possibly the last one, which can be handled easily: \cd{Q}$_a(t)$ holds iff $\rho$ explicitly demands it.
As can be seen, $\Tt_\rels$ does not violate any negative assume on the relations since there are none.
Further, all equality and disequality assumes are unaffected as in $\Tt_\rels$, there are no terms $t_x, t_y, t_{x_1}, t_{x_2}, t_{y_1}, t_{y_2}$ that can be instantiated for variables in the axioms (1), (2) and (3), as these relations are smallest.
Thus, $\rho$ is feasible on $\Tt_\rels$.

Now from this model, we will construct the sequence $i_1,\ldots, i_M$.
The length of this sequence $M$ will be the number of times the $\pwhile$~loop is executed.
Clearly, the loop is executed at least once and thus $M > 0$.
Let $t_\cd{x} = \cd{f}^{n_1}(\hat{\cd{x}})$ and $t_{y} = \cd{f}^{n_1}(\hat{\cd{y}})$ be the values of the variables \cd{x} and \cd{y} (in the term model $\Tt_\rels$).
We first argue that $n_1 = n_2$. Assume on the contrary that $n_1 < n_2$ (w.l.o.g.).
Then, one can inductively show that $\cd{R}(\cd{f}^{n_1}(\hat{\cd{x}}), \cd{f}^{n_1}(\hat{\cd{y}}))$; this is because for every $i < n_1$, we have $\cd{S}(\cd{f}^{i}(\hat{\cd{x}}), \cd{f}^{i+1}(\hat{\cd{x}}))$, $\cd{S}(\cd{f}^{i}(\hat{\cd{y}}), \cd{f}^{i+1}(\hat{\cd{y}}))$ and also $\cd{R}(\hat{\cd{x}}, \hat{\cd{y}})$.
But then, in the term model we have $\cd{f}^{n_1}(\hat{\cd{y}}) \neq \cd{f}^{n_2}(\hat{\cd{y}})$ and this violates the assumption at the end of $\rho$ (because of axiom (1)).
Hence, we have $n_1 = n_2.$

Now, the sequence $i_1, \ldots i_M$ can be deduced by the conditional branches in the while loop: the index $i_j$ is the index of the branch taken in the $j^{th}$ iteration.
Let $\alpha = \alpha_{i_1} \cdot \alpha_{i_2} \cdots \alpha_{i_M}$ and
$\beta = \beta_{i_1} \cdot \beta_{i_2} \cdots \beta_{i_M}$.
First we note that $n_1 = |\alpha|$ and $n_2 = |\beta|$ and thus $|\alpha| = |\beta|$.
Let $\alpha_n$ and $\beta_n$ be the $n^{th}$ characters of $\alpha$ and $\beta$ respectively. Then, one can see that $\cd{Q}_{\alpha_n}(\cd{f}^{n}(\hat{\cd{x}}))$
and $\cd{Q}_{\beta_n}(\cd{f}^{n}(\hat{\cd{y}}))$ hold in the term model.
Now, axioms (5), (6) and (7) ensure that $\alpha_n = \beta_n$.
Thus, $\alpha = \beta$. 

\end{proof}

\subsection{Proof of~\thmref{refl-irrefl-symm}}
\applabel{app-refl-irrefl-symm}

\subsubsection{Homomorphisms for Irreflexivity and Symmetry.}

For irreflexivity, whenever we see the computation of a term 
using an assignment of the form $\dblqt{x := f(\vec{z})}$, 
we insert an assume statement that demands that $\neg R(x,x)$ holds.
That is, we instrument executions using the following homomorphism.
\[
h^R_\irrefl(a) = 
\begin{cases}
a \cdot \dblqt{\passume(\neg R(x, x))}  & \text {if } a = \dblqt{x \passign f(\vec{z})} \\
a & \text{ otherwise }
\end{cases}
\]

For the symmetry axiom on a relation $R$, whenever we see an
\emph{assumption} of the form $\dblqt{\passume(R(x, y))}$, we insert
an assumption that $R(y,x)$ holds.  In other words, we use the
following homomorphism.
\[
h^R_\sym(a) = 
\begin{cases}
a \cdot \dblqt{\passume(R(y, x))} & \text {if } a = \dblqt{\passume(R(x, y))} \\
a \cdot \dblqt{\passume(\neg R(y, x))} & \text {if } a = \dblqt{\passume(\neg R(x, y))} \\
a & \text{ otherwise }
\end{cases}
\]

Proof of~\thmref{refl-irrefl-symm} follows from the more general result~\thmref{combination}


\subsection{Proof of~\thmref{transitivity} (Transitivity Axioms)}
\applabel{app-transitivity}

In this section, we prove coherence modulo transitivity is decidable.
More precisely, let $\Rr_\trans$ be the set of binary relations that
are transitive and let $\Aa_\trans = \setpred{\varphi^R_\trans}{R \in \Rr_\trans}$.
We will show that the set $\cohexec(\Sigma, V, \Aa_\trans)$ is a regular language:

\begin{reptheorem}{thm:transitivity-reg}
Let $\Sigma$ be a first order signature and $V$ a finite set of
program variables. Let $\Aa = \setpred{\varphi^R_\trans}{R \in
  \Rr_\trans}$ for some set of relation symbol $\Rr_\trans$ in
$\Sigma$. The following observation hold.
\begin{enumerate}
\item There is a finite automaton $\Ff_\trans$ (effectively
  constructable) of size $O(2^{\text{poly}(|V|)})$ such that for any
  coherent execution $\rho$ that is coherent modulo $\Aa$, 
  $\Ff_\trans$ accepts $\rho$ iff $\rho$ is
  feasible.
\item There is a finite automaton $\Cc_\trans$ (effectively
  constructible) of size $O(2^{\text{poly}(|V|)})$ such that $L(\Cc_\trans) =
  \cohexec(\Sigma, V, \Aa)$.
\end{enumerate}
\end{reptheorem}

For this, we modify the automaton construction 
in~\appref{automata-relations} to accommodate transitive relations.

The states of the automaton are still the same as that 
described in~\appref{automata-relations}. 
Further, the transition function $\delta_\scc$ is such that
for a state $q \neq \reject$, $\delta_\scc(q, a)$
is the same as before when $a \not\in \setpred{\dblqt{\passume(R(x, y))}, \dblqt{\passume(\neg R(x, y))}}{R \in \Rr_\trans} $.
Below we give the modified transitions for these cases.

The intuitive idea behind the modification is as follows.
For $R \in \Rr_\trans$ component $\rel^+(R)$ stores the pairs of
equivalence classes which are implied by the transitive closure
of the observed assume statements $\dblqt{\passume(R(x, y))}$.
For example, if the execution observes $\dblqt{\passume(R(x, y))}$
and $\dblqt{\passume(R(y, z))}$, then
the component $\rel^+(R)$ stores  the pair $(\eqcl{x}{\equiv}, \eqcl{z}{\equiv})$
in addition to the pairs $(\eqcl{x}{\equiv}, \eqcl{y}{\equiv})$ 
and $(\eqcl{y}{\equiv}, \eqcl{z}{\equiv})$.
Next, for every $R \in \Rr_\trans$, the component $\rel^-(R)$
also adds additional pairs $(c_1, c_2)$ of equivalence classes
for which $\neg R(c_1, c_2)$
is implied by the positive and negative assumes in the execution.
More precisely, if the execution observes
$\passume(R (x, y))$ and $\passume(\neg R (x, z))$,
then one can infer the constraint $\neg R(y, z)$,
and in this case, we also add $(\eqcl{y}{\equiv}, \eqcl{z}{\equiv})$ in
$\rel^-(R)$ in addition to $(\eqcl{x}{\equiv}, \eqcl{z}{\equiv})$.
Similarly, if the execution observes
$\passume(R (y, z))$ and $\passume(\neg R (x, z))$,
then one can infer the constraint $\neg R(x, y)$,
and in this case, we also add $(\eqcl{x}{\equiv}, \eqcl{y}{\equiv})$ in
$\rel^-(R)$ in addition to $(\eqcl{x}{\equiv}, \eqcl{z}{\equiv})$.

Let us now give the formal definition of $\delta_\scc(q, a)$
when $q \neq \reject$ and when $a \in \setpred{\dblqt{\passume(R(x, y))}, \dblqt{\passume(\neg R(x, y))}}{R \in \Rr_\trans}$.
As before, if $\rel^+(R) \cap \rel^-(R) \neq \emptyset$, we go to the state $\reject$.

\begin{description}
\item[$a = \dblqt{\passume(R\big(x, y)\big)}$].\\
In this case, $\equiv' = \equiv$, $P' = P$, $d' = d$.
Further, $\rel'^+(R') = \rel^+(R')$ and $\rel'^-(R') = \rel^-(R')$
for every $R' \neq R$.
Further,
\begin{itemize}
	\item $\rel'^+(R)$ is the smallest set such that 
		\begin{enumerate}[label=(\alph*)]
			\item $\rel^+(R) \subseteq \rel'^+(R)$, and 
			\item is transitively closed, i.e., for all $x, y, z \in V$ if $(\eqcl{x}{\equiv'}, \eqcl{y}{\equiv'}) \in \rel'^+(R)$ and $(\eqcl{y}{\equiv'}, \eqcl{z}{\equiv'}) \in \rel'^+(R)$ then $(\eqcl{x}{\equiv'}, \eqcl{z}{\equiv'}) \in \rel'^+(R)$.
		\end{enumerate}

	\item $\rel'^+(R)$ is the smallest set such that 
	\begin{enumerate}[label=(\alph*)]
		\item $\rel^-(R) \subseteq \rel'^-(R)$, and
		\item for all $x, y, z \in V$ if $(\eqcl{x}{\equiv'}, \eqcl{y}{\equiv'}) \in \rel'^+(R)$ and $(\eqcl{x}{\equiv'}, \eqcl{z}{\equiv'}) \in \rel'^-(R)$
	then $(\eqcl{y}{\equiv'}, \eqcl{z}{\equiv'}) \in \rel'^+(R)$, and
		\item for all $x, y, z \in V$ if $(\eqcl{x}{\equiv'}, \eqcl{y}{\equiv'}) \in \rel'^+(R)$ and $(\eqcl{x}{\equiv'}, \eqcl{z}{\equiv'}) \in \rel'^-(R)$
	then $(\eqcl{y}{\equiv'}, \eqcl{z}{\equiv'}) \in \rel'^+(R)$.
\end{enumerate}
\end{itemize}

\item[$a = \dblqt{\passume(\neg R\big(x, y)\big)}$].\\
In this case, $\equiv' = \equiv$, $P' = P$, $d' = d$ 
and $\rel'^+ = \rel^+$ and $\rel^-(R') = \rel'^-(R')$ for every $R' \neq R$. 
Further,
$\rel'^+(R)$ is the smallest set such that 
	\begin{enumerate}[label=(\alph*)]
		\item $\rel^-(R) \subseteq \rel'^-(R)$, and
		\item for all $x, y, z \in V$ if $(\eqcl{x}{\equiv'}, \eqcl{y}{\equiv'}) \in \rel'^+(R)$ and $(\eqcl{x}{\equiv'}, \eqcl{z}{\equiv'}) \in \rel'^-(R)$
	then $(\eqcl{y}{\equiv'}, \eqcl{z}{\equiv'}) \in \rel'^+(R)$, and
		\item for all $x, y, z \in V$ if $(\eqcl{x}{\equiv'}, \eqcl{y}{\equiv'}) \in \rel'^+(R)$ and $(\eqcl{x}{\equiv'}, \eqcl{z}{\equiv'}) \in \rel'^-(R)$
	then $(\eqcl{y}{\equiv'}, \eqcl{z}{\equiv'}) \in \rel'^+(R)$.
\end{enumerate}
\end{description}

In order to argue correctness, we extend the notion of
\emph{minimal model} to transitivity.

\begin{definition}
Let $\Aa_\trans$ be the set of transitivity axioms on 
some finite set of binary relations $\Rr_\trans$.
Let $\rho$ be an execution and let $A$ be the set of
equalities in $\kappa(\rho)$.
Let $\Mm_\rho$ be the minimal model for $\rho$.
Define the minimal transitive model (with respect to $\Rr_\trans$)
of $\rho$ to be the model $\Mm^\trans_\rho = (U^\trans_\rho, \sem{}^\trans_\rho)$
such that $U^\trans_\rho = U_\rho$,
$\sem{c}^\trans_\rho = \sem{c}_\rho$ for every $c \in \Cc$,
$\sem{f}^\trans_\rho = \sem{f}_\rho$ for every $f \in \Ff$
and 
$\sem{R}^\trans_\rho = \sem{R}_\rho$ for every $R \in \Rr \setminus \Rr_\trans$.
Further, for every $R \in \Rr_\trans$, define
$\sem{R}^\trans_\rho$ to be the smallest transitive set containing
$\sem{R}_\rho$.
\end{definition}

Notice that the execution $\rho$ only defines a relation on the set
of computed terms, and thus the transitive closure of the 
observed assumes also stays with the set of computed terms.
This is formalized below.
\begin{lemma}
Let $\rho$ be an execution and let $\Mm^\trans_\rho$ be the minimal transitive model
as defined above.
Let $e_1, e_2 \in U^\trans_\rho$  be elements in the minimal model such that either $e_1$ or $e_2$ is
not computed by the execution (i.e., there is an $i \in \set{1, 2}$ such that
for every $t \in \Terms(\rho)$, $\sem{t}_{\rho}^\trans \neq e_i$).
Then, we have $(e_1, e_2) \not\in \sem{R}^\trans_\rho$.
\end{lemma}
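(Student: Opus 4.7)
The plan is to reduce the claim to a purely combinatorial observation about transitive closure: if a binary relation $S$ on a universe $U$ is contained in $C \times C$ for some subset $C \subseteq U$, then its transitive closure is also contained in $C \times C$. Taking $C$ to be the set of elements of $U^\trans_\rho$ that are images of computed terms, i.e., $C = \setpred{\sem{t}^\trans_\rho}{t \in \Terms(\rho)}$, the lemma will follow once we show that the base relation $\sem{R}_\rho$ sits inside $C \times C$ for every $R \in \Rr_\trans$.

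Concretely, I would proceed in three steps. First, observe that $\sem{R}^\trans_\rho$ and $\sem{R}_\rho$ agree on constants and function symbols, so $\sem{t}^\trans_\rho = \sem{t}_\rho$ for every ground term $t$, and in particular $C$ is the same whether defined via $\Mm_\rho$ or $\Mm^\trans_\rho$. Second, recall that by construction of the minimal model, $\sem{R}_\rho$ consists exactly of those tuples $(\sem{t_1}_\rho, \ldots, \sem{t_k}_\rho)$ such that the atom $R(t_1, \ldots, t_k)$ belongs to the positive relational assumptions in $\kappa(\rho)$. Since any such atom can be accumulated into $\kappa(\rho)$ only via an assume step $\dblqt{\passume(R(z_1,\ldots,z_k))}$ read along some prefix $\sigma$ of $\rho$, every term $t_i$ equals $\comp(\sigma, z_i)$ and is therefore in $\Terms(\rho)$. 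Hence $\sem{R}_\rho \subseteq C \times C$ as required; this reuses the earlier lemma stated just before~\defref{minimal-model} for the non-transitive minimal model.

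Third, apply the combinatorial fact to $S = \sem{R}_\rho$: by induction on the length of the chain witnessing $(e_1, e_2) \in \sem{R}^\trans_\rho$, every intermediate element lies in $C$, so both endpoints lie in $C$. Contrapositively, if at least one of $e_1, e_2$ fails to lie in $C$ --- i.e., is not of the form $\sem{t}^\trans_\rho$ for any $t \in \Terms(\rho)$ --- then $(e_1, e_2) \notin \sem{R}^\trans_\rho$. There is no real obstacle here: the only thing to be careful about is that $\sem{R}^\trans_\rho$ is defined as the smallest transitive set containing $\sem{R}_\rho$, not the transitive-reflexive closure, so we do not need to worry about reflexive pairs $(e, e)$ being added for arbitrary $e \in U^\trans_\rho$.
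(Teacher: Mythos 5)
Your proposal is correct and matches the paper's (essentially unproven) justification: the paper simply remarks that the execution only defines $R$ on computed terms, so the transitive closure stays within the computed terms, which is precisely the $C \times C$-invariance argument you spell out. Your additional care about the closure being transitive (not reflexive-transitive) and about terms evaluating identically in $\Mm_\rho$ and $\Mm^\trans_\rho$ is exactly the right level of detail.
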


An important property about the minimal transitive model defined above is that
there is a relation preserving homomorphism from this model to
any other model that satisfies the assumptions in the execution and the transitivity axioms.
Formally,
\begin{lemma}
Let $\Mm = (U_\Mm, \sem{}_\Mm)$ be a first order model 
and let $\rho$ be an execution that is feasible in $\Mm$, modulo $\Aa_\trans$.
Then, there is a morphsim $h : U^\trans_\rho \to U_\Mm$
such that 
\begin{itemize}
	\item $h(\sem{f}^\trans_\rho(e_1, \ldots, e_k)) = \sem{f}_\Mm(h(e_1), \ldots, h(e_k))$ for every $k$-ary function $f$, and
	\item for every $e_1, \ldots, e_k \in U^\trans_\rho$ 
	and for every $k$-ary function, we have
	\[
	(e_1, \ldots, e_k) \in \sem{R}^\trans_\rho \implies (h(e_1), \ldots, h(e_k)) \in \sem{R}_\Mm
	\]
\end{itemize}
\end{lemma}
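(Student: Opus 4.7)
The plan is to piggy-back on the analogous lemma already established for the (non-transitive) minimal model $\Mm_\rho$ and then close under transitivity on the target side using the axioms $\Aa_\trans$. Since by construction $U^\trans_\rho = U_\rho$ and the interpretations of all constants, function symbols, and non-transitive relation symbols coincide in $\Mm_\rho$ and $\Mm^\trans_\rho$, I would take $h$ to be the very morphism $h_0 : U_\rho \to U_\Mm$ delivered by the earlier lemma (applied to $\Mm$ viewed merely as a model satisfying $\kappa(\rho)$, which it does since $\rho$ is feasible in $\Mm$). The function-preservation clause and the relation-preservation clause for $R \notin \Rr_\trans$ then come for free from the earlier lemma.

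The only real work is to verify the relational clause for $R \in \Rr_\trans$. Fix such an $R$ and suppose $(e_1, e_2) \in \sem{R}^\trans_\rho$. By definition $\sem{R}^\trans_\rho$ is the transitive closure of $\sem{R}_\rho$, so there exists a finite chain $e_1 = a_0, a_1, \ldots, a_n = e_2$ in $U^\trans_\rho$ with $(a_{i-1}, a_i) \in \sem{R}_\rho$ for every $1 \le i \le n$. Applying the earlier lemma to each link gives $(h(a_{i-1}), h(a_i)) \in \sem{R}_\Mm$. Since $\rho$ is feasible modulo $\Aa_\trans$, the hypothesis $\Mm \models \varphi^R_\trans$ is in force, so a straightforward induction on $n$ collapses the chain in the target, yielding $(h(e_1), h(e_2)) = (h(a_0), h(a_n)) \in \sem{R}_\Mm$.

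The main (and essentially only) obstacle I anticipate is justifying cleanly that the earlier lemma's morphism is well-defined on all of $U^\trans_\rho$ and not just on the image of the computed terms; this is handled by the preceding remark that $\sem{R}^\trans_\rho$ relates only elements of $U_\rho$ that correspond to computed terms, so the chain $a_0, \ldots, a_n$ lives entirely where $h_0$ is already specified. Once that observation is in place, the rest of the argument is a short induction, and no new construction of $h$ beyond reusing $h_0$ is required.
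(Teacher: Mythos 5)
Your proof is correct, and since the paper states this lemma without proof, your reconstruction supplies exactly the intended argument: reuse the homomorphism from the non-transitive minimal-model lemma (legitimate because $U^\trans_\rho = U_\rho$ and the function interpretations agree), then collapse each finite $\sem{R}_\rho$-chain witnessing membership in the transitive closure using the fact that $\Mm \models \varphi^R_\trans$. The well-definedness worry you flag is moot — the earlier morphism is already defined on all of $U_\rho$ — but raising and dispatching it does no harm.
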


Finally, we have that the minimal model is enough to check for feasibility of an execution in some model. That is,
\begin{lemma}
Let $\rho$ be an execution that is feasible in $\Mm$.
Let $\Aa_\trans$ be the set of transitivity axioms for relations in
$\Rr_\trans$.
If there is model $\Mm$ such that $\rho$ is feasible in $\Mm$,
then $\rho$ is feasible in the minimal model $\Mm^\trans_\rho$.
\end{lemma}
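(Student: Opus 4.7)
The plan is to show, by induction on the length of the prefix of $\rho$, that for every prefix $\pi = \sigma \cdot \dblqt{\passume(c)}$ of $\rho$, the condition $c$ holds in $\Mm^\trans_\rho$ under the valuation computed by $\sigma$. Two ingredients will do the bulk of the work throughout: the observation that $\Mm^\trans_\rho$ is built on top of the initial term model, so $\val{\Mm^\trans_\rho}{\sigma}{x} = \sem{\comp(\sigma,x)}^\trans_\rho$; and the homomorphism $h \colon U^\trans_\rho \to U_\Mm$ supplied by the preceding lemma, which transports equalities and positive $R$-tuples of $\Mm^\trans_\rho$ into $\Mm$, where feasibility of $\rho$ is already known.

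I would dispose of the positive cases first, as they follow essentially from the construction of $\Mm^\trans_\rho$. For $c = (x = y)$, the equation $\comp(\sigma,x) = \comp(\sigma,y)$ lies in $\kappa(\sigma)$, so the two terms denote the same element of $U^\trans_\rho$. For $c = R(\vec z)$, the corresponding atom lies in $\kappa(\sigma)$ and hence is placed into $\sem{R}_\rho \subseteq \sem{R}^\trans_\rho$ by definition of the minimal (transitive) model.

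For the negative cases I would use $h$ together with feasibility of $\rho$ in $\Mm$. For $c = (x \neq y)$, an equality in $\Mm^\trans_\rho$ would push through $h$ to an equality in $\Mm$, contradicting the assume. For $c = \neg R(\vec z)$ with $R \notin \Rr_\trans$, note that $\sem{R}^\trans_\rho = \sem{R}_\rho$ contains only tuples explicitly introduced by a positive $R$-assume in $\rho$; $h$ then carries any such tuple into $\sem{R}_\Mm$, again contradicting feasibility of $\rho$ in $\Mm$.

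The main obstacle, and the very reason for passing from $\Mm_\rho$ to $\Mm^\trans_\rho$, is the case $c = \neg R(z_1, z_2)$ with $R \in \Rr_\trans$, where the transitive closure can place in $\sem{R}^\trans_\rho$ pairs that were never directly asserted. My plan is to argue as follows: if $(\sem{\comp(\sigma,z_1)}^\trans_\rho, \sem{\comp(\sigma,z_2)}^\trans_\rho) \in \sem{R}^\trans_\rho$, then by the minimal characterisation of the transitive closure there is a chain $e_0, e_1, \ldots, e_n$ in $U^\trans_\rho$ with consecutive pairs in $\sem{R}_\rho$ connecting the two endpoints. Applying $h$ pointwise lifts this chain into $\sem{R}_\Mm$, and since $\Mm \models \varphi^R_\trans$, transitivity of $\sem{R}_\Mm$ collapses the chain and forces the endpoint pair to be in $\sem{R}_\Mm$, contradicting the assume $\neg R(z_1, z_2)$ in $\Mm$. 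This is the only step that actually invokes $\Mm \models \Aa_\trans$, and it is exactly what the transitive-closure step in the definition of $\Mm^\trans_\rho$ is designed to enable.
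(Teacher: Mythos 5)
Your proof is correct. The paper in fact states this lemma without any proof, so there is nothing to compare against line by line; but your argument is the natural one and fits exactly the machinery the paper sets up around it (the minimal transitive model and the homomorphism $h$ from the preceding lemma). All the cases are handled soundly: positive equality and relational assumes hold in $\Mm^\trans_\rho$ by construction of the congruence and of $\sem{R}_\rho \subseteq \sem{R}^\trans_\rho$; disequalities and negative assumes on $R \notin \Rr_\trans$ are transported through $h$ into $\Mm$ where they would contradict feasibility; and for $R \in \Rr_\trans$ you correctly identify the only delicate point, namely that $\sem{R}^\trans_\rho$ may contain pairs never directly asserted, and discharge it by unfolding the transitive closure into a finite chain of pairs of $\sem{R}_\rho$, pushing the chain through $h$, and collapsing it in $\Mm$ using $\Mm \models \varphi^R_\trans$ --- which is indeed the one place the hypothesis that $\Mm$ is an $\Aa_\trans$-model is needed. (Implicitly you also rely on the standard identity $\val{\Mm}{\sigma}{x} = \sem{\comp(\sigma,x)}_{\Mm}$ and on the fact that transitivity, unlike antisymmetry or totality, never forces new equalities between terms, so the congruence built from the equality atoms of $\kappa(\rho)$ alone is the right one; both are consistent with how the paper uses these objects.)
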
 

We prove the correctness of the automaton construction by inducting
on the length of the word.
For this, we will be using the following inductive invariants.

\begin{lemma}
Let $\Aa_\trans$ be the set of transitivity axioms on 
some finite set of binary relations $\Rr_\trans$.
Let $\rho$ be an  execution that is coherent modulo $\Aa_\trans$.
Let $q = (\equiv, d, P, \rel^+, \rel^-)$ be the state reached after reading $\rho$
in the automaton, i.e., $q = \delta^*_\scc(q_0, \rho)$.
If $q \neq \reject$, then we have (here $R \in \Rr_\trans$)
\begin{itemize}
	\item for every $x, y \in V$ such that $(\eqcl{x}{\equiv}, \eqcl{y}{\equiv}) \not\in \rel^+(R)$, we have $(e_x, e_y) \not\in \sem{R}^\trans_\rho$, in the minimal model of $\rho$,
	where $e_x = \sem{\comp(\rho, x)}^\trans_\rho$ and $e_y = \sem{\comp(\rho, y)}^\trans_\rho$.
	\item for every $x, y \in V$, $(\eqcl{x}{\equiv}, \eqcl{y}{\equiv}) \in \rel^-(R)$ 
	iff
	for every model $\Mm = (U_\Mm, \sem{}_\Mm)$ for which  $\rho$ is feasible in $\Mm$, we have $(\sem{\comp(\rho, x)}_\Mm, \sem{\comp(\rho, y)}_\Mm) \not\in \sem{R}_\Mm$.
\end{itemize}
\end{lemma}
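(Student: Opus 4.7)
The plan is to proceed by induction on $|\rho|$. The base case $\rho = \epsilon$ is immediate: the initial automaton state has $\rel^+(R) = \rel^-(R) = \emptyset$ for every $R$, and $\sem{R}^\trans_\epsilon = \emptyset$ as well, so both claims hold vacuously. For the inductive step, writing $\rho = \sigma \cdot a$, I would assume the invariant after $\sigma$ and argue that it is preserved by each possible last action $a$. For actions $a$ that do not add information about a transitive relation (variable or function assignments, equality and disequality $\passume$s, and relational $\passume$s for $R \notin \Rr_\trans$), the arguments from the non-transitive automaton in~\appref{automata-relations} carry over directly: function and variable assignments only rename equivalence classes in $\rel^+(R)$ and $\rel^-(R)$, and equality assumes merge classes with the corresponding propagation. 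The interesting cases are $a = \dblqt{\passume(R(x, y))}$ and $a = \dblqt{\passume(\neg R(x, y))}$ for $R \in \Rr_\trans$.

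For part 1 (soundness of $\rel^+(R)$), I would first observe that by induction $\rel^+(R)$ captures (modulo $\equiv$) exactly the pairs present in $\sem{R}^\trans_\sigma$, which by definition is the transitive closure of $\sem{R}_\sigma$. After a positive assume $\passume(R(x, y))$, the minimal transitive model is obtained by adding $(e_x, e_y)$ to $\sem{R}_\sigma$ and re-closing transitively. The automaton update defines $\rel'^+(R)$ to be exactly the transitive closure of $\rel^+(R) \cup \{(\eqcl{x}{\equiv}, \eqcl{y}{\equiv})\}$, so the correspondence is preserved. A negative assume does not modify $\rel^+(R)$ or $\sem{R}^\trans$.

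For part 2, the forward direction (soundness of $\rel^-(R)$) follows by checking that each rule populating $\rel^-(R)$ is a valid contrapositive of the transitivity axiom: from $R(x, y) \land R(y, z) \Rightarrow R(x, z)$ one obtains $\neg R(y, z)$ from $R(x, y) \land \neg R(x, z)$, and $\neg R(x, y)$ from $R(y, z) \land \neg R(x, z)$. Hence every pair added to $\rel^-(R)$ is implied to be non-$R$-related in every $\Aa_\trans$-model in which $\rho$ is feasible. The hard part will be the backward (completeness) direction: given $(c_1, c_2) \notin \rel^-(R)$, exhibit an $\Aa_\trans$-model in which $\rho$ is feasible and $R(e_{c_1}, e_{c_2})$ holds. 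My plan is to start from the minimal transitive model $\Mm^\trans_\rho$, add the pair $(e_{c_1}, e_{c_2})$ to $\sem{R}^\trans_\rho$, and close under transitivity. The pairs newly added by the closure are precisely those of the form $(u, v)$ with $(u, e_{c_1})$ in the existing $\rel^+$-closure and $(e_{c_2}, v)$ likewise. If such a pair $(u, v)$ were to violate some negative assume $\neg R(u, v)$, applying the two rules above (once to move from $\neg R(u, v)$ past $(e_{c_2}, v)$, and once more to move past $(u, e_{c_1})$) would already place $(c_1, c_2)$ into $\rel^-(R)$, contradicting the hypothesis. This fixed-point argument, together with the fact that the positive component $\rel^+(R)$ is already transitively closed after each step, is what makes the local one-step rules complete.

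The main obstacle I foresee is the bookkeeping around $\equiv$: each relational derivation lives modulo the current equivalence classes, and equality assumes arriving later can merge classes and thereby propagate both positive and negative relational facts. To handle this cleanly I would first show that after each equality assume the updated components $\rel'^+(R), \rel'^-(R)$ are exactly the images of $\rel^+(R), \rel^-(R)$ under the quotient map $\equiv \,\to\, \equiv'$, then re-close. Coherence modulo $\Aa_\trans$ is used precisely here: because transitivity alone entails no new equalities between terms, the memoizing and early-assume conditions guarantee that whenever two computed terms become equivalent via the congruence closure triggered by an equality assume, a variable still holds a representative of that class, so the automaton's class-based reasoning stays faithful to the minimal transitive model throughout.
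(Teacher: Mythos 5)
Your proposal is correct and follows essentially the same route as the paper: the paper proves correctness of $\Ff_\trans$/$\Cc_\trans$ by induction on the length of the execution with exactly this lemma as the inductive invariant, relying on the minimal transitive model $\Mm^\trans_\rho$ and on coherence (early assumes, now imposed on relational assumes as well) to justify that the eager, window-local closure rules lose nothing when terms are dropped. In fact the paper states the invariant without carrying out the case analysis, so your sketch—including the contrapositive soundness of the $\rel^-$ rules and the model construction (add $(e_{c_1},e_{c_2})$ to $\sem{R}^\trans_\rho$ and re-close) for the completeness direction—supplies more detail than the paper itself; the only place needing extra care is verifying that the two one-step $\rel^-$ rules fire while all intermediate classes are still held by variables, which is exactly where the early-assumes condition is invoked.
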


\subsection{Proof of~\thmref{strict-partial-order}}

Follows from the more general result~\thmref{combination}

\subsection{Proof of~\thmref{total}}

We first observe that when executions have only positive $R$ assumes,
checking properties modulo $\Aa^R_\sto$ is equivalent to checking
properties modulo $\Aa^R_\spo$. This will allow us to reduce the case
of strict total orders to the case of strict partial orders.

\begin{lemma}
\lemlabel{total-to-partial}
Let $\Aa$ be a set of first order sentences that do not mention
$R$. Let $\rho$ be an execution that does not have any symbols of the
form $\dblqt{\passume(\neg R(x,y))}$. Then the following two
observations hold.
\begin{enumerate}
\item $\rho$ is feasible modulo $\Aa \cup \Aa^R_\sto$ iff $\rho$ is
  feasible modulo $\Aa \cup \Aa^R_\spo$; note that $\rho$ may or may
  not be coherent.
\item $\rho$ is coherent modulo $\Aa \cup \Aa^R_\sto$ iff $\rho$ is
  coherent modulo $\Aa \cup \Aa^R_\spo$.
\end{enumerate}
\end{lemma}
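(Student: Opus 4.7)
The plan is to establish part 1 first by a standard linearization argument, and then derive part 2 by reducing the coherence condition to a statement about entailed equalities, which is again handled via linearization.

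For part 1, the forward direction is immediate, since any $\Aa \cup \Aa^R_\sto$-model is also a $\Aa \cup \Aa^R_\spo$-model. For the backward direction, I would start with a $\Aa \cup \Aa^R_\spo$-model $\Mm$ in which $\rho$ is feasible. Since $\sem{R}_\Mm$ is a strict partial order on $U_\Mm$, it can be extended to a strict total order $\prec$ on $U_\Mm$ (the order-theoretic fact that any strict partial order admits a linear extension). I would define $\Mm'$ to agree with $\Mm$ on the universe and on the interpretations of every symbol except $R$, and set $\sem{R}_{\Mm'} = \prec$. Because $R$ does not appear in any sentence of $\Aa$, $\Mm'$ remains a $\Aa$-model, and by construction it is a $\Aa^R_\sto$-model. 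To verify that $\rho$ is feasible in $\Mm'$: equality and dis-equality assumes depend only on the universe and functional interpretations, which are unchanged; positive assumes $\passume(R(x,y))$ are preserved because $\sem{R}_\Mm \subseteq \sem{R}_{\Mm'}$; and assumes on other relations are trivially preserved. The hypothesis that $\rho$ has no $\passume(\neg R(x,y))$ symbol is used precisely here to ensure that no negative $R$-assume is invalidated by the enlarged interpretation.

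For part 2, my approach is to first prove that the entailed equalities on ground terms coincide under the two axiom sets. Concretely, the key claim will be: for any ground terms $t_1, t_2$ and any prefix $\sigma$ of $\rho$,
\[
t_1 \congcl{\Aa \cup \Aa^R_\sto \cup \kappa(\sigma)} t_2
\quad \text{iff} \quad
t_1 \congcl{\Aa \cup \Aa^R_\spo \cup \kappa(\sigma)} t_2.
\]
The forward direction is trivial from $\Aa^R_\spo \subseteq \Aa^R_\sto$. For the reverse, if $\Aa \cup \Aa^R_\spo \cup \kappa(\sigma) \not\models t_1 = t_2$, I would take a witnessing model and apply the same linearization construction as in part 1 to produce a $\Aa \cup \Aa^R_\sto$-model that still falsifies $t_1 = t_2$ (since the universe and all functional interpretations are preserved). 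Note that $\kappa(\sigma)$ inherits from $\rho$ the property of containing no negative $R$-atoms, so the linearization construction still applies verbatim to $\Mm \models \kappa(\sigma)$. Once this equality-preservation claim is in hand, the memoizing and early-assumes conditions of \defref{coherence-def}, which are stated purely in terms of $\congcl{\cdot}$, become identical under the two axiom sets, giving the coherence equivalence.

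The only subtlety to check carefully is that swapping $\sem{R}_\Mm$ for a linear extension $\prec$ truly preserves satisfaction of $\Aa$; this reduces to observing that $R$ does not syntactically appear in any sentence of $\Aa$, which is exactly the hypothesis imposed on $\Aa$ in the statement. I do not foresee any other significant obstacle, as all the other verifications are routine syntactic checks.
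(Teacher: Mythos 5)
Your proposal is correct and follows essentially the same route as the paper: the backward direction of feasibility is handled by taking a linear extension of $\sem{R}_\Mm$ (valid because $\Aa$ does not mention $R$ and $\rho$ has no negative $R$-assumes), and coherence is reduced to showing that the entailed term equalities $\congcl{\Aa \cup \Aa^R_\sto \cup \kappa(\sigma)}$ and $\congcl{\Aa \cup \Aa^R_\spo \cup \kappa(\sigma)}$ coincide on every prefix. Your direct linearization of the countermodel witnessing $t_1 \neq t_2$ is a slightly more self-contained way of establishing that equality-preservation claim than the paper's appeal to its feasibility-preservation machinery, but the underlying idea is identical.
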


\begin{proof}
We first argue about feasibility.
One direction is obvious: if $\rho$ is feasible modulo $\Aa \cup \Aa^R_\sto$, then
$\rho$ is feasible modulo $\Aa \cup \Aa^R_\spo$.
Let us consider the other direction.
Let $\Mm = (\Uu_\Mm, \sem{}_\Mm)$ be a model in which $\rho$ is feasible, such that $\Mm \models \Aa \cup \Aa^R_\spo$.
Notice that $\sem{R}_\Mm$ is a partial order on $\Uu_\Mm$.
Let $S$ be any linear extension of $\sem{R}_\Mm$.
Consider the model $\Mm' = (\Uu_{\Mm'}, \sem{}_{\Mm'})$, where $\Uu_{\Mm'} = \Uu_\Mm$
and for every constant symbol $c$, function symbol $f$ and relation symbol $Q$ different from $R$, we have $\sem{c}_{\Mm'} = \sem{c}_\Mm$, $\sem{f}_{\Mm'} = \sem{f}_\Mm$ and $\sem{Q}_{\Mm'} = \sem{Q}_\Mm$.
Finally, let $\sem{R}_{\Mm'} = S$.
First, observe that $\Mm' \models \Aa$ because no sentence in $\Aa$ mentions $R$.
Second, $\Mm' \models \Aa^R_\sto$ by construction.
Finally, $\sem{R}_\Mm \subseteq \sem{R}_{\Mm'}$ and thus $\Mm' \models \kappa(\rho)$.

Let us now note that for any two computed terms $t_1, t_2 \in \Terms(\rho)$, we have
$t_1 \congcl{\Aa \cup \Aa_\sto^R \cup \kappa(\rho)} t_2$ iff $t_1 \congcl{\Aa \cup \Aa_\spo^R \cup \kappa(\rho)} t_2$. The proof of this observation is similar to the proof of~\lemref{preservation-term-equalities-relations} and is skipped.

Now, based on the above observations, one can easily conclude that 
$\rho$ is coherent modulo $\Aa \cup \Aa^R_\sto$ iff $\rho$ is
  coherent modulo $\Aa \cup \Aa^R_\spo$.
\end{proof}

The proof of~\thmref{total} follows from~\thmref{strict-partial-order} and~\lemref{total-to-partial}.




\section{Proofs from~\secref{functions}}
\applabel{app-functions}


\subsection{Proof of~\thmref{assoc-undec}}
\applabel{assoc-undec}


\begin{figure}[h]
\rule[1mm]{6cm}{0.4pt} $P_\assoc$ \rule[1mm]{6cm}{0.4pt} \\

\begin{minipage}[H]{0.4\textwidth}

\cd{\small (* generate $u_1$ *)} \\
\cd{\small (* $u_{1,i}$ is the $i^{th}$ letter in $u_1$ *)}\\
\cd{x}$_{1,2}$ $\passign$ \cd{f}($u_{1,1}$, $u_{1,2}$);\\
\cd{x}$_{1,3}$ $\passign$ \cd{f}($x_{1,2}$, $u_{1,3}$);\\
\vdots\\
\cd{x}$_{1, |u_1|}$ $\passign$ \cd{f}($x_{1,|u_1|-1}$, $u_{1,|u_1|}$);\\
\cd{x}$_{1}$ $\passign$ \cd{x}$_{1, |u_1|}$;\\

\cd{\small (* generate $v_1$. *)}\\
\cd{y}$_{1,2}$ $\passign$ \cd{f}($v_{1,1}$, $v_{1,2}$);\\
\cd{y}$_{1,3}$ $\passign$ \cd{f}($y_{1,2}$, $v_{1,3}$);\\
\vdots\\
\cd{y}$_{1,|v_1|}$ $\passign$ \cd{f}($y_{1,|v_1|-1}$, $v_{1,|v_1|}$);\\
\cd{y}$_{1}$ $\passign$ \cd{y}$_{1, |v_1|}$;\\

\cd{\small (* assume $u_1 = v_1$. *)}\\
\passume ($\cd{x}_1 = \cd{y}_1$);\\

\cd{\small (* generate $u_2$.  *)}\\
\vdots \\
\cd{\small (* generate $v_2$.  *)}\\
\vdots\\
\cd{\small (* assume $u_2 = v_2$. *)}\\
\passume ($\cd{x}_2 = \cd{y}_2$);\\\\
\vdots\\
\end{minipage}
\begin{minipage}{0.4\textwidth}
\vdots
\cd{\small (* generate $u_N$.  *)}\\
\vdots \\
\cd{\small (* generate $v_N$.  *)}\\
\vdots\\
\cd{\small (* assume $u_N = v_N$. *)}\\
\passume ($\cd{x}_N = \cd{y}_N$);\\

\cd{\small (* generate $u_0$. *)}\\
\cd{x}$_{0,2}$ $\passign$ \cd{f}($u_{0,1}$, $u_{0,2}$);\\
\cd{x}$_{0,3}$ $\passign$ \cd{f}($x_{0,2}$, $u_{0,3}$);\\
\vdots\\
\cd{x}$_{0, |u_0|}$ $\passign$ \cd{f}($x_{0,|u_0|-1}$, $u_{0,|u_0|}$);\\
\cd{x}$_{0}$ $\passign$ \cd{x}$_{0, |u_0|}$;\\

\cd{\small (* generate $v_0$. *)}\\
\cd{y}$_{0,2}$ $\passign$ \cd{f}($v_{0,1}$, $v_{0,2}$);\\
\cd{y}$_{0,3}$ $\passign$ \cd{f}($y_{0,2}$, $v_{0,3}$);\\
\vdots\\
\cd{y}$_{0,|v_0|}$ $\passign$ \cd{f}($y_{0,|v_0|-1}$, $v_{0,|v_0|}$);\\
\cd{y}$_{0}$ $\passign$ \cd{y}$_{0, |v_0|}$;\\

\passume ($\cd{x}_0 \neq \cd{y}_0$);\\
\end{minipage}
\caption{Execution $\rho_\assoc$ for showing checking feasibility of a single coherent execution with one associative function is undecidable. 
}
\figlabel{undec-assoc-wordproblem}
\end{figure}

To prove~\thmref{assoc-undec}, we recall a classical computational problem
called the word problem for a semi-group. Recall that a semi-group is
an algebra consisting of a universe on which a single associative
binary operation (often denoted $\circ$) is defined. A semi-group $S$
is \emph{generated} from a finite set $\Delta$, every element in the
universe of $S$ can be constructed starting from $\Delta$ using the
operation $\circ$. The word problem over semi-groups is the following.\\

\noindent
\textbf{Word Problem over Semi-Groups}. Let $\Delta$ be a finite set
and $\circ$ be the concatenation operation. Given word identities $u_1
= v_1$, $u_2 = v_2$, \ldots $u_n = v_n$, and an additional identity
$u_0 = v_0$, determine if for \emph{any} semi-group $S$ generated from
$\Delta$ in which the identities $u_i = v_i$, for $1 \leq i \leq n$
hold, whether $u_0 = v_0$ holds.

This problem is knwon to be undecidable.
\begin{theorem}[Post'47~\cite{post1947}]
Word problem for finitely generated semigroups is undecidable.
\end{theorem}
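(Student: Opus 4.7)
The plan is to reduce from a known undecidable problem, namely the halting problem for Turing machines. Given a single-tape Turing machine $M$ and an input $w$, I will construct a finite alphabet $\Delta$ and a finite list of identities $u_1 = v_1, \ldots, u_n = v_n$, together with words $u_0, v_0 \in \Delta^+$, such that $u_0 = v_0$ holds in the semigroup freely generated by $\Delta$ modulo the identities if and only if $M$ halts on $w$.

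The encoding is the classical one. The alphabet $\Delta$ contains the tape alphabet of $M$, one symbol $q$ for each control state of $M$, and two endmarker symbols $\#$ and $\$$. A configuration of $M$ with tape contents $a_1 \cdots a_n$, head over position $i$, and current state $q$ is encoded as the word $\# a_1 \cdots a_{i-1} q a_i \cdots a_n \$$. Each local transition of $M$ is translated into a short identity: a right-move transition $(q,a) \mapsto (q',b,R)$ yields the identity $qa = bq'$; a left-move transition $(q,a) \mapsto (q',b,L)$ yields identities $cqa = q'cb$, one for each tape symbol $c$; further identities handle head-at-boundary cases against $\#$ and $\$$. The word $u_0$ is defined to be the encoding of the starting configuration of $M$ on $w$, and $v_0$ is taken to be a fixed canonical ``halted'' word, together with additional identities that allow any configuration whose state symbol is a halting state to collapse to this canonical word.

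The main obstacle, and the real content of the proof, is that semigroup equality is symmetric and transitive, so a derivation witnessing $u_0 = v_0$ is free to apply each identity $u_i \leftrightarrow v_i$ in either direction and to interleave forward and backward steps arbitrarily, while Turing machine computation is deterministic and directional. The key lemma is therefore a Church--Rosser (confluence) property for the constructed rewriting system, restricted to words of configuration shape: any zigzag derivation between two configurations can be rearranged into a purely forward derivation. This is proved by a local case analysis showing that whenever two rewrites apply in overlapping positions they either commute or coincide; the analysis is manageable because in each identity the state symbol occurs in a unique position on the left-hand side, so every configuration has at most one redex and no nontrivial critical pairs arise.

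With the Church--Rosser lemma in hand, both directions of the reduction are immediate. If $M$ halts on $w$, simulating the computation yields a forward derivation $u_0 \to^* v_0$, so $u_0 = v_0$ in the quotient. Conversely, if $u_0 = v_0$ is derivable, confluence provides a forward derivation from $u_0$ to $v_0$, whose projection to the head-and-tape contents is an accepting computation of $M$ on $w$. Since the halting problem is undecidable, the word problem for this family of finitely generated semigroups is undecidable, establishing the theorem.
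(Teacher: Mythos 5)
The paper does not prove this statement; it is imported verbatim as a citation to Post (1947) and used as a black box in the reduction for Theorem~\ref{thm:assoc-undec}. Your sketch reconstructs the classical Post--Markov argument (Turing configurations as words $\#\,u\,q\,v\,\$$, transitions as two-sided identities, with the symmetric-versus-directional tension resolved by a determinism/confluence lemma), which is essentially the proof behind the cited result, so there is no divergence of approach to report. One imprecision worth fixing: your claim that ``every configuration has at most one redex'' fails for the collapse rules you need in order to send an arbitrary halting configuration to the single canonical word $v_0$ --- a word $\#\,a\,q_{\mathrm{halt}}\,b\,\$$ has erasure redexes on both sides of the state symbol. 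This does not break the argument (the two erasures act at disjoint positions and commute, so local confluence survives, and the collapse phase terminates because it is length-decreasing; alternatively, one stages the collapse into a right-erasing phase followed by a left-erasing phase using distinct state symbols to restore forward determinism), but the unique-redex formulation as stated is false and should be replaced by the commutation argument you already allude to. You should also check explicitly that $v_0$ is a normal form of the forward system, since the step from $u_0 \leftrightarrow^* v_0$ to $u_0 \to^* v_0$ uses both confluence and the terminality of $v_0$.
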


Using Post's result, we prove undecidability to check the feasibility
of a single coherent execution.

We show the following reduction. Given an instance $I =
(\Delta,\circ,u_1,v_1,\ldots u_n,v_n,u_0,v_0)$ there is an execution
$\rho$ that is coherent modulo $\set{\varphi^f_\assoc}$ such that $I$
is a YES instance of the work problem iff $\rho$ is infeasible modulo
$\set{\varphi^f_\assoc}$.

The constructed execution $\rho$ is shown in
\figref{undec-assoc-wordproblem}. The signature $\Sigma$ consists of a
binary function $\cd{f}$ which obeys the associativity axiom
$\varphi^{\cd{f}}_\assoc$.  The set of variables in the program are
\[\Vv = \underbrace{\set{\cd{a}_1 \ldots \cd{a}_k}}_{
\scriptsize 
\begin{array}{c}
\text{these are}\\
\text{letters from }\Delta 
\end{array}
} \cup \bigcup\limits_{i = 0}^N \set{\cd{x}_i, \cd{x}_{i, 2}, \ldots \cd{x}_{i, {|u_i|}}} \cup \bigcup\limits_{i = 1}^N \set{\cd{y}_i, \cd{y}_{i, 2}, \ldots \cd{y}_{i, |v_i|}}
\]
The post-condition $\phi$ is $\cd{x}_0 = \cd{y}_0$.

Our reduction uses the associative function $\cd{f}$ to model
concatenation. A word $w = a_1, \ldots, a_m$ is modeled as the term
$t_w = \cd{f}(a_1, \cd{f}(a_2, \ldots, \cd{f}(a_{m-1}, a_m) \ldots
))$. Intuitively, the execution first creates the words $u_1$, $v_1$,
$u_2 \ldots u_N, v_N$ and assumes $u_1 = v_1$, $u_2 = v_2, \ldots, u_N
= v_N$.  It then creates the words $u_0, v_0$ and checks $u_0 = v_0$
in the postcondition. Proof that $\rho$ is coherent and the reduction
is correct is straightforward, but for completeness, the proof can be
found in \appref{assoc-undec}.

We prove that the execution $\rho$ shown in
\figref{undec-assoc-wordproblem} is coherent and the reduction is
correct.

Let us first argue why $\rho_\assoc$ is coherent modulo associativity
of $\cd{f}$.  This follows because all created terms are being
retained in some program variables.

Now, we argue the correctness of the reduction.\\

$(\Leftarrow)$. Assume that the given instance of the word problem is a NO instance.
Then, there is a semi group $(A, \circ)$ and a homomorphism $h : S \to A$ such that for each $1 \leq i \leq N$, $h(u_i) = h(v_i)$ and $h(u_0) \neq h(v_0)$.
Then, the model $\Mm = (U_\Mm, \sem{}_\Mm)$ with $U_\Mm = A$ and $\sem{f}_\Mm = \circ$
is the model on which $\rho_\assoc$ is feasible. Further, $\sem{f}_\Mm$ is associative
and all the assumptions in $\rho_\assoc$ hold in the model.

$(\Rightarrow)$. Assume the execution $\rho_\assoc$ is feasible modulo associativity.
That is, there is a model $\Mm = (U_\Mm, \sem{}_\Mm)$ such that $\sem{f}_\Mm$ is associative
and all the assumes in the execution are true in the model.
Then, clearly $U_\Mm$ with $\sem{f}_\Mm$ as concatenation is a semigroup.
Further, there is a homomorphism $h$ from $S$ to $A = (U_\Mm, \sem{f}_\Mm)$
given by $h(\cd{a}_i) = \sem{\cd{a}_i}_\Mm$ for every $\cd{a}_i \in \Delta$.
Since the string $u_0$ and $v_0$ are not equal in $A$, the equality $u_0 = v_0$, the given instance is a NO instance of the word problem.

\subsection{Proofs of \thmref{commutativity} and \thmref{idempotence}}
\applabel{app-comm-assoc}

\subsubsection{Homomorphism for Idempotence.}
We use an auxiliary
variable $v^* \not\in V$ and use the following homomorphism.
\[
h^f_\idem(a) = 
\begin{cases}
a \cdot \dblqt{v^*\passign f(y)} \cdot \dblqt{\passume(y = v^*)}  & \text {if } a = \dblqt{y \passign f(x)}\\
a & \text{ otherwise }
\end{cases}
\]

Proof of~\thmref{commutativity} and~\thmref{idempotence} Follows from the more general result~\thmref{combination}.


\section{Combinations}
\applabel{app-comb}

\begin{definition}[Closure of binary relations]
Let $R \subseteq S \times S$ be a binary relation on a set $S$.
Let $p \in \set{\refl,\irrefl,\sym, \trans}$ and let $\varphi^R_p$
be the axiom of reflexivity, irreflexivity, symmetry or transitivity of $R$
(depending upon what $p$ is).
Then, the $p$-closure of $R$, denoted $p(R)$ is defined as
\begin{itemize}
	\item the smallest binary relation $R'\subseteq S \times S$
	such that $R \subseteq R'$ and $R'$ satisfies $\varphi^{R'}_p$,
	if $p \in \set{\refl,\sym, \trans}$, or
	\item the largest binary relation $R' \subseteq S \times S$
	such that $R \supseteq R'$ and $R'$ satisfies $\varphi^{R'}_p$,
	if $p = \irrefl$
\end{itemize}
\end{definition}

\begin{definition}[Relational Closure Extension]
Let $\Mm = (U_\Mm, \sem{}_\Mm)$ be a first order model over the
signature $\Sigma = (\Cc, \Ff, \Rr)$.
Let $R \in \Rr$ be a binary relation and let $p \in \set{\refl,\sym,\irrefl,\trans}$. 
The $(p, R)$-closure of $\Mm$, denoted $\closureext{R}{p}(\Mm)$ 
is the model
$\Mm' = (U_{\Mm'}, \sem{}_{\Mm'})$, where
$U_{\Mm'} = U_\Mm$, 
\begin{itemize}
	\item for every $c \in \Cc$, we have $\sem{c}_{\Mm'} = \sem{c}_{\Mm}$,
	\item for every $f \in \Ff$, we have $\sem{f}_{\Mm'} = \sem{f}_{\Mm}$, and
	\item for every $Q \in \Rr \setminus \set{R}$, we have $\sem{Q}_{\Mm'} = \sem{Q}_\Mm$, and
	\item $\sem{R}_{\Mm'} = p(\sem{R}_\Mm)$.
\end{itemize} 
We define the $(p, R)$-transitive closure of $\Mm$, denoted 
$\transclosureext{R}{p}(\Mm)$ to be the model
$\Mm' = \closureext{R}{\trans}(\closureext{R}{p}(\Mm))$.
\end{definition}

\begin{proposition}
\proplabel{terms-evaluate-same-relational-closure}
Let $\Sigma$ be a FO signature, $R \in \Rr$, 
$p \in \set{\refl,\sym,\irrefl,\trans}$
and let $\Mm$ be a $\Sigma$ structure.
Let $t$ be a term over the signature $\Sigma$.
Then, $\sem{t}_{\Mm} = \sem{t}_{\closureext{R}{p}(\Mm)} = \sem{t}_{\transclosureext{R}{p}(\Mm)} \in U_\Mm$.
\end{proposition}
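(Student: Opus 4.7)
The plan is to prove this by a straightforward structural induction on the term $t$. The key observation motivating the proof is that terms in a first-order signature are built using only constants from $\Cc$ and function symbols from $\Ff$ --- relation symbols never occur inside terms. Since the closure operations $\closureext{R}{p}$ and $\transclosureext{R}{p}$ are defined so as to modify only the interpretation of $R \in \Rr$ while preserving the universe and the interpretations of all constants and function symbols, no step in the evaluation of a term can ``see'' the difference between $\Mm$, $\closureext{R}{p}(\Mm)$, and $\transclosureext{R}{p}(\Mm)$.

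In the base case, I would take $t = c$ for a constant $c \in \Cc$. By definition of $\closureext{R}{p}$ we have $\sem{c}_{\closureext{R}{p}(\Mm)} = \sem{c}_\Mm$, and applying the same clause once more to form the transitive closure extension yields $\sem{c}_{\transclosureext{R}{p}(\Mm)} = \sem{c}_\Mm$ as well. In the inductive step, I would take $t = f(t_1, \ldots, t_k)$ for some $k$-ary $f \in \Ff$ and assume the claim for each $t_i$. Using the fact that $\sem{f}$ agrees in all three models, together with the induction hypothesis applied componentwise, the evaluations $\sem{t}_\Mm$, $\sem{t}_{\closureext{R}{p}(\Mm)}$, and $\sem{t}_{\transclosureext{R}{p}(\Mm)}$ all coincide, and each lies in the common universe $U_\Mm$.

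There is no real obstacle here --- the proposition is essentially a sanity check that relational closures do not disturb term evaluation, and it will be used later to justify that term-level reasoning transfers cleanly between $\Mm$ and its closure extensions. The only thing to be a little careful about is that $\transclosureext{R}{p}$ is defined as an iterated application ($p$-closure followed by a transitive closure on $R$), so the induction argument must be applied twice (or one may simply invoke the claim for $\closureext{R}{p}(\Mm)$ and then for $\closureext{R}{\trans}$ applied to it).
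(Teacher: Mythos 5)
Your proof is correct and matches the (implicit) reasoning of the paper, which states this proposition without proof precisely because the closure extensions by definition leave the universe, constants, and functions untouched, so term evaluation is unaffected. The structural induction you outline, applied once for $\closureext{R}{p}$ and once more for the subsequent $\closureext{R}{\trans}$, is exactly the routine argument the paper leaves to the reader.
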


\begin{definition}[Invariance Under Relational Closure Extension]
Let $\Aa$ be a set of first order sentences over $\Sigma$.
Let $R \in \Rr$ be a binary relation and let $p \in \set{\refl, \irrefl, \sym, \trans}$.
$\Aa$ is said to be invariant under $(p, R)$-closure extension if
for every first order structure $\Mm$, we have
\[
\Mm \models \Aa \implies \closureext{R}{p}(\Mm) \models \Aa
\]
Similarly, $\Aa$ is said to be invariant under $(p, R)$-transitive closure extension if
for every first order structure $\Mm$, we have
\[
\Mm \models \Aa \implies \transclosureext{R}{p}(\Mm) \models \Aa
\]
\end{definition}

\begin{lemma}[Preservation of invariance under unions]
Let $\Aa_1, \Aa_2$ be two sets of first order sentences over $\Sigma$.
Let $R \in \Rr$ be a binary relation and let $p \in \set{\refl, \irrefl, \sym, \trans}$.
If both $\Aa_1$ and $\Aa_2$ are invariant under $(p, R)$-closure extension, then
so is $\Aa_1 \cup \Aa_2$.
\end{lemma}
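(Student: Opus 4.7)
The plan is to unfold the definition of invariance and observe that satisfaction of a union of sentence sets decomposes conjunctively, so the invariance hypotheses for $\Aa_1$ and $\Aa_2$ can be applied independently. Concretely, I would fix an arbitrary $\Sigma$-structure $\Mm$ with $\Mm \models \Aa_1 \cup \Aa_2$ and aim to show $\closureext{R}{p}(\Mm) \models \Aa_1 \cup \Aa_2$.

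First I would note the elementary fact that $\Mm \models \Aa_1 \cup \Aa_2$ is logically equivalent to the conjunction $\Mm \models \Aa_1$ and $\Mm \models \Aa_2$, since a model satisfies a set of sentences iff it satisfies each member. Next, I would invoke the hypothesis that $\Aa_1$ is invariant under $(p, R)$-closure extension applied to $\Mm \models \Aa_1$ to conclude $\closureext{R}{p}(\Mm) \models \Aa_1$, and symmetrically invoke the hypothesis for $\Aa_2$ to get $\closureext{R}{p}(\Mm) \models \Aa_2$. Recombining these via the same equivalence yields $\closureext{R}{p}(\Mm) \models \Aa_1 \cup \Aa_2$, which is exactly the required invariance property.

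There is no real obstacle here; the lemma is essentially a bookkeeping observation that invariance, being a universally quantified implication over models, commutes with taking unions of sentence sets on the right-hand side. The same one-line argument would also establish analogous preservation results for invariance under $(p, R)$-transitive closure extension and for arbitrary (even infinite) unions, should they be needed later.
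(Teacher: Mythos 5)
Your proof is correct and matches the paper's intent exactly: the paper dispatches this lemma with ``follows easily from definitions,'' and your unfolding of the definition (decompose $\Mm \models \Aa_1 \cup \Aa_2$ conjunctively, apply each invariance hypothesis, recombine) is precisely that easy argument. Nothing further is needed.
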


\begin{proof}
Follows easily from definitions.
\end{proof}

\begin{lemma}[Invariance Under Relational Closure Extensions]
Let $\Sigma = (\Cc, \Ff, \Rr)$ be a FO signature, let $R \in \Rr$ be a binary relation
and let $p\in \set{\refl, \irrefl, \sym}$.
Then we have the following.
\begin{enumerate}
	\item The empty set of axioms $\Aa = \emptyset$ is 
	invariant under $(p,R)$-closure extension.
	\item The singleton set $\Aa = \set{\varphi}$ 
	($\varphi$ is a FO sentence over $\Sigma$) is 
	invariant under $(p,R)$-closure extension if one of the following holds:
	\begin{enumerate}[label=\alph*.]
		\item $\phi$ does not syntactically mention the symbol $R$.
		\item $p = \refl$ and $\phi \in \set{\varphi_\refl^R, \varphi_\sym^R, \varphi_\trans^R}$
		\item $p = \irrefl$ and $\phi \in \set{\varphi_\irrefl^R, \varphi_\sym^R, \varphi_\trans^R}$
		\item $p = \sym$ and $\phi \in \set{\varphi_\irrefl^R, \varphi_\refl^R, \varphi_\sym^R}$
	\end{enumerate}
	\item The singleton set $\Aa = \set{\varphi}$ is 
	invariant under $(p,R)$-transitive-closure extension if $p = \sym$ and $\phi = \varphi_\trans^R$.
\end{enumerate}
\end{lemma}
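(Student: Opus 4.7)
The plan is to dispatch each clause of the lemma by a direct argument that inspects how the specific closure operation alters $\sem{R}_\Mm$ and then checks that the shape of the single axiom is preserved. Clause 1 is immediate: $\emptyset$ is vacuously satisfied by every structure, and in particular by any closure extension. Clause 2(a) follows from the standard observation (formalised by structural induction on $\varphi$ if necessary) that the truth of a first-order sentence depends only on the interpretations of symbols it mentions; since $\closureext{R}{p}$ modifies only $\sem{R}_\Mm$ and leaves every other interpretation equal to that of $\Mm$ (as guaranteed by the definition of the $(p,R)$-closure extension), and $\varphi$ does not mention $R$, we have $\Mm \models \varphi$ iff $\closureext{R}{p}(\Mm) \models \varphi$.

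For clauses 2(b)--2(d) I would treat the finitely many combinations of $p \in \set{\refl,\irrefl,\sym}$ and $\varphi \in \set{\varphi_\refl^R,\varphi_\irrefl^R,\varphi_\sym^R,\varphi_\trans^R}$ separately, reasoning directly on the set-theoretic description of each closure: $\refl(R) = R \cup \setpred{(u,u)}{u \in U_\Mm}$, $\sym(R) = R \cup R^{-1}$, and $\irrefl(R) = R \setminus \setpred{(u,u)}{u \in U_\Mm}$. In each case a short split on ``was the pair already in $R$ or introduced/removed by the closure?'' decides whether the target axiom still holds. For instance, if $p=\refl$ and $\varphi=\varphi_\trans^R$, then given $(a,b),(b,c) \in \refl(R)$, either both lie in $R$ (so $(a,c)\in R$ by transitivity of $R$) or at least one is a new diagonal pair, forcing $a=b$ or $b=c$, and then $(a,c)$ coincides with the other hypothesis and is already present. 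The sub-cases $p=\refl, \varphi=\varphi_\sym^R$ and $p=\irrefl, \varphi=\varphi_\sym^R$ reduce to the identity $(u,u) \in R \cup R^{-1}$ iff $(u,u) \in R$ and its dual. The three sub-cases in clause 2(d) are handled analogously, exploiting the fact that $R\cup R^{-1}$ is symmetric by construction, contains $R$ (so preserves reflexivity), and contains a diagonal pair iff $R$ does (so preserves irreflexivity). Clause 3 is immediate from the definition of $\transclosureext{R}{\sym}(\Mm) = \closureext{R}{\trans}(\closureext{R}{\sym}(\Mm))$: the resulting interpretation of $R$ is a transitive closure and hence transitive by construction, so $\varphi_\trans^R$ holds regardless of whether it held in $\Mm$.

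The step I expect to require the most care is the sub-case of clause 2(c) asserting that transitivity is preserved under $\irrefl$-closure. This is delicate because if $\sem{R}_\Mm$ is transitive and contains a two-cycle $(a,b),(b,a)$ with $a\neq b$, then transitivity forces $(a,a) \in \sem{R}_\Mm$, and the $\irrefl$-closure strips this diagonal pair while retaining $(a,b),(b,a)$, threatening to break transitivity. My plan here is to re-examine the setting in which this clause is used downstream (in the combined-axiom preservation arguments of \thmref{combination}), and to argue the invariance under the side condition that rules such two-cycles out (for instance, that any model $\Mm$ already satisfying some combination of the remaining axioms cannot produce such a cycle among the relevant witnesses), making the check go through via the same case split as above. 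Once this subtlety is isolated, the remaining verifications are routine set-theoretic computations that follow the template sketched above.
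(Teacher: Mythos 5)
The paper states this lemma without any proof, so there is nothing to compare your argument against; a direct case analysis on the set-theoretic form of each closure ($\refl(R)=R\cup\Delta$, $\irrefl(R)=R\setminus\Delta$, $\sym(R)=R\cup R^{-1}$, where $\Delta$ is the diagonal) is the natural and essentially only approach. Your treatment of clause 1, clause 2(a), clause 2(b), clause 2(d), clause 3, and the first two sub-cases of clause 2(c) is correct and complete in outline. One small slip: the identity you invoke for the $\varphi^R_\sym$ sub-cases of 2(b) and 2(c), namely $(u,u)\in R\cup R^{-1}$ iff $(u,u)\in R$, is the one relevant to preserving \emph{irreflexivity} under the \emph{symmetric} closure (clause 2(d)); what you actually need in 2(b) and 2(c) is that $\Delta$ is symmetric and that $R\setminus\Delta$ inherits symmetry from $R$. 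Both are trivial, so this is only a bookkeeping error.

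The substantive point is the sub-case you flag as delicate: $p=\irrefl$ with $\varphi=\varphi^R_\trans$. Your suspicion is correct, and you should state it more forcefully: this sub-case of the lemma is \emph{false} as written, not merely hard. Take $U=\set{a,b}$ with $\sem{R}_\Mm = U\times U$; then $\Mm\models\varphi^R_\trans$, but $\irrefl(\sem{R}_\Mm)=\set{(a,b),(b,a)}$ is not transitive, so $\closureext{R}{\irrefl}(\Mm)\not\models\varphi^R_\trans$. Consequently no proof of the statement as given exists, and your plan to ``argue the invariance under the side condition that rules such two-cycles out'' is a repair of the lemma rather than a proof of it --- you should say so explicitly and identify the weakened statement you are actually establishing (e.g., invariance restricted to models in which $\sem{R}_\Mm$ has no $2$-cycles, or invariance of the composite operation that first restricts $R$ to the pairs forced by the execution, as the paper itself does via $\execrestrict$ in the analogous $p=\sym$, $\varphi^R_\trans$ case). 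Note also that the repair cannot be recovered from the union lemma alone: even though $\set{\varphi^R_\irrefl,\varphi^R_\trans}$ \emph{is} invariant under $(\irrefl,R)$-closure (any model of both already satisfies $\irrefl(\sem{R}_\Mm)=\sem{R}_\Mm$), the union lemma would require each singleton to be invariant separately, and $\set{\varphi^R_\trans}$ is not. So whichever side condition you choose must be threaded through the downstream applications (\lemref{preservation-feasibility-relations} and \thmref{combination}) where the invariance hypothesis is discharged.
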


\begin{definition}[1-element extensions]
Let $\Mm = (U_\Mm, \sem{}_\Mm)$ be a first order model over the
signature $\Sigma = (\Cc, \Ff, \Rr)$.
The one variable extension $\oneext(\Mm)$ of $\Mm$ is another model
$\Mm' = (U_{\Mm'}, \sem{}_{\Mm'})$, where
$U_{\Mm'} = U_\Mm \uplus \set{e_*}$, where $e_* \not\in U_\Mm$ is a
fresh element, and
\begin{itemize}
	\item for every $c \in \Cc$, we have $\sem{c}_{\Mm'} = \sem{c}_{\Mm} \in U_{\Mm}$,
	\item for every $R \in \Rr$ of arity $r$, we have $\sem{R}_{\Mm'} = \sem{R}_\Mm$,
	\item for every $f \in \Ff$ of arity $r$, we have 
	\begin{align*}
		\sem{f}_{\Mm'}(e_1, \ldots, e_r) = 
		\begin{cases}
		\sem{f}_{\Mm}(e_1, \ldots, e_r) & \text{ if } e_* \not\in \set{e_1, \ldots, e_r} \\
		e_* & \text{otherwise}
		\end{cases}
	\end{align*}
\end{itemize} 
\end{definition}
The above is well defined, in that for every model $\Mm$,
there is a unique (upto first order isomorphisms\footnote{
	More precisely, for every first order structure $\Mm$ over
	$\Sigma$, for two 1 element extensions $\Mm_1$ and $\Mm_2$ of $\Mm$,
	and for every first order formula
	$\phi$ over $\Sigma$, $\Mm_1 \models \phi$ iff $\Mm_2 \models \phi$.
}) 
one element extension $\oneext(\Mm)$.

\begin{proposition}
\proplabel{terms-evaluate-same-1-element}
Let $\Sigma$ be a FO signature and let $\Mm$ be a $\Sigma$ structure.
Let $t$ be a term over the signature $\Sigma$.
Then, $\sem{t}_{\Mm} = \sem{t}_{\oneext(\Mm)} \in U_\Mm$.
\end{proposition}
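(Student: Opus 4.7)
The plan is to prove this by structural induction on the term $t$, mirroring the recursive definition of term interpretation. The key observation to exploit is that $\oneext(\Mm)$ agrees with $\Mm$ on constants and on function values whenever all arguments lie in $U_\Mm$, and the fresh element $e_*$ only appears as a value when some argument to a function is itself $e_*$. Since terms are built bottom-up from constants, the subterm values can never ``escape'' $U_\Mm$, and $e_*$ never gets introduced.

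For the base case, I would take $t = c$ for some $c \in \Cc$. The definition of $\oneext$ directly gives $\sem{c}_{\oneext(\Mm)} = \sem{c}_\Mm$, and this value is already required to lie in $U_\Mm$ by the definition of one-element extensions (the clause $\sem{c}_{\Mm'} = \sem{c}_\Mm \in U_\Mm$).

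For the inductive step, I would consider $t = f(t_1, \ldots, t_r)$ where $f \in \Ff$ has arity $r$ and assume the claim for each $t_i$. By the induction hypothesis, $\sem{t_i}_{\oneext(\Mm)} = \sem{t_i}_\Mm \in U_\Mm$ for every $i$, so in particular $e_* \notin \{\sem{t_1}_{\oneext(\Mm)}, \ldots, \sem{t_r}_{\oneext(\Mm)}\}$. Applying the first case of the piecewise definition of $\sem{f}_{\oneext(\Mm)}$, we obtain
\[
\sem{t}_{\oneext(\Mm)} = \sem{f}_{\oneext(\Mm)}(\sem{t_1}_{\oneext(\Mm)}, \ldots, \sem{t_r}_{\oneext(\Mm)}) = \sem{f}_\Mm(\sem{t_1}_\Mm, \ldots, \sem{t_r}_\Mm) = \sem{t}_\Mm,
\]
and since $\sem{f}_\Mm$ has codomain $U_\Mm$, this value also lies in $U_\Mm$, completing the induction.

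There is essentially no obstacle here: the proposition is a direct structural fact about how $\oneext$ is defined, and the only subtle point is noting that the inductive hypothesis guarantees no subterm value equals the fresh element $e_*$, which is exactly what is needed to select the ``non-$e_*$'' branch in the definition of $\sem{f}_{\oneext(\Mm)}$. The argument uses nothing beyond the definitions.
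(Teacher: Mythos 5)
Your proof is correct and is exactly the routine structural induction that the paper leaves implicit (the proposition is stated there without proof, as immediate from the definition of $\oneext$). The one point that needed care --- that the induction hypothesis places all subterm values in $U_\Mm$, hence away from $e_*$, so the non-$e_*$ branch of $\sem{f}_{\oneext(\Mm)}$ applies --- is handled correctly.
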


\begin{definition}[Invariance Under 1-element Extension]
Let $\Aa$ be a set of first order sentences over $\Sigma$.
$\Aa$ is said to be invariant under 1-element extension if
for every first order structure $\Mm$, we have
\[
\Mm \models \Aa \implies \oneext(M) \models \Aa
\]
\end{definition}

\begin{lemma}[Preservation of invariance under unions]
Let $\Aa_1, \Aa_2$ be two sets of first order sentences over $\Sigma$.
If both $\Aa_1$ and $\Aa_2$ are invariant under 1-element extensions, then
so is $\Aa_1 \cup \Aa_2$.
\end{lemma}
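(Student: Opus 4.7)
The plan is to unfold the definition of invariance under 1-element extensions and then apply it componentwise to $\Aa_1$ and $\Aa_2$. Specifically, I would fix an arbitrary first order structure $\Mm$ over $\Sigma$ and assume $\Mm \models \Aa_1 \cup \Aa_2$. Since satisfaction of a set of sentences is pointwise, this decomposes as $\Mm \models \Aa_1$ and $\Mm \models \Aa_2$.

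Next, I would invoke the hypothesis on each piece separately. Because $\Aa_1$ is invariant under 1-element extensions and $\Mm \models \Aa_1$, we obtain $\oneext(\Mm) \models \Aa_1$. Similarly, because $\Aa_2$ is invariant and $\Mm \models \Aa_2$, we get $\oneext(\Mm) \models \Aa_2$. Note that both conclusions are about the \emph{same} structure $\oneext(\Mm)$ (which is defined uniquely up to isomorphism by its construction), so there is no ambiguity to resolve.

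Finally, I would recombine: since $\oneext(\Mm)$ satisfies every sentence in $\Aa_1$ and every sentence in $\Aa_2$, it satisfies every sentence in their union, i.e., $\oneext(\Mm) \models \Aa_1 \cup \Aa_2$. Since $\Mm$ was arbitrary, $\Aa_1 \cup \Aa_2$ is invariant under 1-element extensions, as desired.

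There is no real obstacle here --- this is essentially a bookkeeping lemma that relies only on the fact that satisfaction of a set of sentences is a conjunctive condition and that the 1-element extension operator $\oneext(\cdot)$ is a single well-defined construction that does not depend on which axiom set we are considering. (The analogous statement for the relational closure extension was proved earlier by exactly the same argument, which suggests no new ideas are required.)
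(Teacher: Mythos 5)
Your proof is correct and is exactly the argument the paper intends — the paper's own proof is simply ``Follows easily from definitions,'' and your unfolding of the definition, componentwise application of invariance, and recombination is that argument made explicit.
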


\begin{proof}
Follows easily from definitions.
\end{proof}

\begin{lemma}[Invariance Under One Element Extension]
\lemlabel{invariance-one-element}
Let $\Sigma = (\Cc, \Ff, \Rr)$ be a FO signature.
Then,
\begin{enumerate}
	\item The empty set of axioms $\Aa = \emptyset$ is 
	invariant under 1-element extensions.
	\item Let $\varphi$ be a FO sentence and let $\Aa = \set{\varphi}$.
	Then $\Aa$ is invariant under 1-element extensions
	if one of the following holds:
	\begin{enumerate}[label=\alph*.]
		\item $\phi$ is quantifier free. That is $\phi$ is a boolean combination of ground equality atoms or predicate atoms.
		\item $\phi \in \set{\varphi^f_\comm, \varphi^f_\idem}$, 
		where $f \in \Ff$ is either a unary or binary function.
		\item $\phi = \varphi^R_\trans$, where $R \in \Rr$ is a binary relation.
	\end{enumerate}
\end{enumerate}
\end{lemma}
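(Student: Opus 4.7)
The plan is to handle each claim by a direct case analysis on how the fresh element $e_*$ interacts with the axiom, using \propref{terms-evaluate-same-1-element} to reduce ground-term evaluations in $\oneext(\Mm)$ to those in $\Mm$. For claim (1), there is nothing to prove: every structure models the empty theory.

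For claim (2a), since $\varphi$ is a sentence, all terms appearing in it are ground. By \propref{terms-evaluate-same-1-element}, each such term $t$ satisfies $\sem{t}_{\oneext(\Mm)} = \sem{t}_\Mm \in U_\Mm$, so atomic equalities $t_1 = t_2$ have the same truth value in $\Mm$ and $\oneext(\Mm)$. For atomic predicates $R(t_1,\ldots,t_k)$, since $\sem{R}_{\oneext(\Mm)} = \sem{R}_\Mm$ and the tuple $(\sem{t_i}_\Mm)_i$ is the same, the truth value is preserved. A straightforward induction on the structure of the Boolean combination then shows that $\Mm \models \varphi$ iff $\oneext(\Mm) \models \varphi$.

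For claim (2b), assume $\Mm \models \varphi^f_\comm$ (the binary case; the idempotence case is analogous). Pick any $e_1, e_2 \in U_{\oneext(\Mm)}$. If $e_* \notin \{e_1, e_2\}$, then $\sem{f}_{\oneext(\Mm)}(e_1,e_2) = \sem{f}_\Mm(e_1,e_2) = \sem{f}_\Mm(e_2,e_1) = \sem{f}_{\oneext(\Mm)}(e_2,e_1)$ using commutativity in $\Mm$. If $e_* \in \{e_1,e_2\}$, then by definition of $\oneext$ both sides evaluate to $e_*$. So commutativity is preserved. For idempotence $\forall x. f(x) = f(f(x))$ with $f$ unary, if $e \neq e_*$ then $\sem{f}_{\oneext(\Mm)}(e) = \sem{f}_\Mm(e) \in U_\Mm$, and applying $\sem{f}_{\oneext(\Mm)}$ again coincides with $\sem{f}_\Mm$, so idempotence in $\Mm$ gives the result; if $e = e_*$ both sides are $e_*$. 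The same case split works for idempotence of a binary function. For claim (2c), note $\sem{R}_{\oneext(\Mm)} = \sem{R}_\Mm \subseteq U_\Mm \times U_\Mm$, so any triple $(e_1,e_2,e_3)$ involving $e_*$ falsifies at least one of $R(e_i,e_j)$, making the transitivity implication vacuously true; for triples in $U_\Mm^3$, transitivity transfers directly from $\Mm$.

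I expect no serious obstacle: the construction of $\oneext(\Mm)$ was designed precisely so that function applications involving $e_*$ ``collapse'' to $e_*$ and relations leave $e_*$ entirely outside their extension, which are exactly the two properties needed to preserve equational function axioms and positive relational axioms. The most delicate point to state carefully is that $\sem{f}_{\oneext(\Mm)}(e) \in U_\Mm$ whenever $e \in U_\Mm$, which is what allows nested applications in idempotence to remain inside $\Mm$'s universe and thus inherit $\Mm$'s guarantees.
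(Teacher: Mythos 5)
Your proof is correct and is exactly the routine case analysis the paper intends (it states this lemma without proof, treating it as immediate from the definition of $\oneext$): ground terms evaluate identically by \propref{terms-evaluate-same-1-element}, function applications touching $e_*$ collapse to $e_*$ on both sides of the equational axioms, and $e_*$ lies outside every relation so the transitivity implication is vacuous off $U_\Mm$. The only nit is your aside about ``idempotence of a binary function'': the paper defines $\varphi^f_\idem$ only for unary $f$ (the ``unary or binary'' in the lemma statement pairs unary with idempotence and binary with commutativity), so that case does not arise.
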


\begin{lemma}
\lemlabel{same-terms}
Let $R\in \Rr$ be a binary relation, $f \in \Ff$ be a unary function,
$g \in \Ff$ be a binary function and
Let $\rho$ be an execution
Then for every variable $x$,
\begin{enumerate}
	\item $\comp(\rho,x) = \comp(h^R_p(\rho),x)$ for $p \in \set{\refl,\irrefl,\sym}$,
	\item $\comp(\rho,x) = \comp(h^f_\idem(\rho),x)$, and
	\item $\comp(\rho,x) = \comp(h^g_\comm(\rho),x)$
\end{enumerate}
\end{lemma}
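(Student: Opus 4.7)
The plan is to proceed by a straightforward induction on the length of $\rho$, exploiting the fact that each of the homomorphisms $h \in \set{h^R_\refl, h^R_\irrefl, h^R_\sym, h^f_\idem, h^g_\comm}$ under consideration has the following uniform structure: for every letter $a \in \Pi$, $h(a) = a \cdot w_a$, where $w_a$ is a (possibly empty) finite sequence of symbols, each of which is either (i) an $\passume$ statement, or (ii) an assignment whose left-hand side is the fresh auxiliary variable $v^* \notin V$. This is immediate from the definitions of the five homomorphisms: the reflexivity, irreflexivity, and symmetry instrumentations only append $\passume$ statements, while the commutativity and idempotence instrumentations append one assignment to $v^*$ followed by one $\passume$.

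The key observation powering the induction is that neither $\passume$ statements nor assignments to $v^*$ affect the computed term of any $x \in V$. Formally, by the ``otherwise'' clause in the definition of $\comp$, for every partial execution $\pi$ and every $x \in V$, we have $\comp(\pi \cdot \dblqt{\passume(c)}, x) = \comp(\pi, x)$, and for every assignment $\alpha$ whose left-hand side is $v^* \neq x$ (which holds because $v^* \notin V$), $\comp(\pi \cdot \alpha, x) = \comp(\pi, x)$. Iterating this over the sequence $w_a$ gives $\comp(\pi \cdot a \cdot w_a, x) = \comp(\pi \cdot a, x)$ for every $x \in V$.

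With these tools in place, I would carry out the induction as follows. The base case $\rho = \epsilon$ is immediate, since $h(\epsilon) = \epsilon$ and $\comp(\epsilon, x) = \init{x}$. For the inductive step, write $\rho = \sigma \cdot a$ and assume the claim for $\sigma$. Compute
\[
\comp(h(\rho), x) \;=\; \comp(h(\sigma) \cdot a \cdot w_a, x) \;=\; \comp(h(\sigma) \cdot a, x)
\]
using the observation above. Now do a case split on $a$: if $a$ is an $\passume$, both sides equal $\comp(h(\sigma), x)$ respectively $\comp(\sigma, x)$, which agree by the induction hypothesis. If $a = \dblqt{x \passign y}$, both sides reduce to $\comp(h(\sigma), y)$ respectively $\comp(\sigma, y)$, again equal by the IH. If $a = \dblqt{x \passign f(\vec{z})}$, both sides reduce to $f(\comp(h(\sigma), \vec{z}))$ respectively $f(\comp(\sigma, \vec{z}))$, equal componentwise by the IH. If $a$ is an assignment not modifying $x$, the ``otherwise'' clause applies on both sides and the IH finishes the case.

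There is essentially no hard step here; the only point requiring attention is to verify, once and for all, that each of the five concrete homomorphisms listed in the statement really does fit the template ``$h(a) = a \cdot w_a$ with $w_a$ consisting only of $\passume$s and assignments to variables outside $V$''. Once this bookkeeping is done, the induction is uniform across all three parts (1), (2), (3) of the lemma, and no separate argument per homomorphism is needed.
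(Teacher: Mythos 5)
Your proposal is correct and follows essentially the same route as the paper, whose proof is a one-line observation that the instrumented execution differs from $\rho$ only by inserted statements that cannot alter $\comp(\cdot,x)$ for $x \in V$. In fact you are slightly more careful than the paper's remark (which mentions only the added $\passume$s), since you explicitly note that the assignments introduced by $h^f_\idem$ and $h^g_\comm$ target the auxiliary variable $v^* \notin V$ and therefore also fall under the ``otherwise'' clause of $\comp$.
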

\begin{proof}
The only difference between $\rho$ and $h^R_p(\rho)$ is the fact that
$h^R_p(\rho)$ has additional $\passume$s. The observation therefore,
follows.
\end{proof}

\begin{corollary}
\corlabel{kappa-contain}
For $p \in \set{\refl,\irrefl,\sym}$ and any execution $\rho$,
$\kappa(\rho) \subseteq \kappa(h^R_p(\rho))$.
\end{corollary}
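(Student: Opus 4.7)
The plan is to proceed by induction on the length of $\rho$, using the fact that $h^R_p$ only \emph{adds} assume statements (never removes or alters any existing symbol) together with Lemma~\ref{lem:same-terms}, which guarantees that the terms computed by corresponding prefixes of $\rho$ and $h^R_p(\rho)$ agree.

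For the base case $\rho = \epsilon$, we have $\kappa(\epsilon) = \emptyset$, which is trivially contained in $\kappa(h^R_p(\epsilon))$. For the inductive step, write $\rho = \sigma \cdot a$ and assume $\kappa(\sigma) \subseteq \kappa(h^R_p(\sigma))$. Since $h^R_p$ is a string homomorphism, $h^R_p(\rho) = h^R_p(\sigma) \cdot h^R_p(a)$, and hence $\kappa(h^R_p(\sigma)) \subseteq \kappa(h^R_p(\rho))$ because accumulating more symbols can only enlarge $\kappa$. If $a$ is not an assume (i.e., it is an assignment or $\pskip$), then $\kappa(\rho) = \kappa(\sigma)$ and the inclusion is immediate from the induction hypothesis.

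The interesting case is when $a = \dblqt{\passume(c)}$. By inspection of the definition of $h^R_p$ for each $p \in \set{\refl, \irrefl, \sym}$, the image $h^R_p(a)$ always begins with $a$ itself (in the case of $p = \refl$ or $\irrefl$, the image of any assume is the assume unchanged; in the case of $p = \sym$, the image of $\passume(R(x,y))$ or $\passume(\neg R(x,y))$ starts with that very assume before appending a symmetric counterpart). Consequently, the new atomic predicate contributed to $\kappa$ by appending $a$ to $\sigma$ is built from $\comp(\sigma, \cdot)$, while the corresponding new atom contributed to $\kappa$ by appending $a$ to $h^R_p(\sigma)$ is built from $\comp(h^R_p(\sigma), \cdot)$. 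By (a prefix form of) Lemma~\ref{lem:same-terms}, these two sets of terms coincide, so the two atoms are \emph{syntactically identical}. Therefore $\kappa(\rho) = \kappa(\sigma) \cup \{\text{this atom}\} \subseteq \kappa(h^R_p(\sigma) \cdot a) \subseteq \kappa(h^R_p(\rho))$, completing the induction.

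There is really no hard step here: the result is essentially a bookkeeping consequence of the fact that $h^R_p$ is purely additive on assumes and term-preserving. The only subtlety worth flagging is that Lemma~\ref{lem:same-terms} is stated for entire executions, so one should either invoke it on the prefix $\sigma$ directly or observe that its proof immediately gives the prefix-level statement used above.
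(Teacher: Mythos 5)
Your proof is correct and follows the same route the paper intends: the paper states this as an immediate consequence of Lemma~\ref{lem:same-terms} together with the observation that $h^R_p$ only inserts additional $\passume$s, and your induction merely spells out that bookkeeping in detail. The one subtlety you flag (applying the term-preservation lemma to prefixes) is handled correctly, since any prefix $\sigma$ is itself an execution to which the lemma applies.
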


\begin{definition}[Execution-restriction]
Let $\rho$ be an execution, 
$R \in \Rr$ be a relation and $\Mm$ be a model over which $\rho$ is feasible.
The $(R,\rho)$-restriction of $\Mm$, denoted $\execrestrict^R(\Mm, \rho)$
is the model $\Mm' = (U_{\Mm'}, \sem{}_{\Mm'})$, where
$U_{\Mm'} = U_\Mm$, 
\begin{itemize}
	\item for every $c \in \Cc$, we have $\sem{c}_{\Mm'} = \sem{c}_{\Mm}$,
	\item for every $f \in \Ff$, we have $\sem{f}_{\Mm'} = \sem{f}_{\Mm}$, and
	\item for every $Q \in \Rr \setminus \set{R}$, we have $\sem{Q}_{\Mm'} = \sem{Q}_\Mm$, and
	\item $\sem{R}_{\Mm'} = \setpred{(\sem{t_1}_\Mm, \ldots, \sem{t_k}_\Mm)}{R(t_1, \ldots, t_k) \in \kappa(\rho)}$.
\end{itemize} 
\end{definition}

\begin{lemma}[Preservation of Feasibility]
\lemlabel{preservation-feasibility-relations}
Let $\rho$ be an execution, $R \in \Rr$ and 
$p \in \set{\refl,\irrefl,\sym}$.
Let $\Aa$ be a set of first order sentences over $\Sigma$
such that if $p = \sym$ and $\varphi^R_\trans \in \Aa$, then
the only other sentences in $\Aa$ that mention $R$ are
$\varphi^R_\refl, \varphi^R_\irrefl$ or $\varphi^R_\trans$.
Further, assume that $\Aa$ is
invariant under $(p,R)$-closure extension 
(or $(p,R)$-transitive closure extension if $\varphi^R_\trans \in \Aa$).
$\rho$ is feasible modulo $\Aa \uplus \set{\varphi^R_p}$
iff
$h^R_p(\rho)$ is feasible modulo $\Aa$.
\end{lemma}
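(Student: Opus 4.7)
The plan is to prove the two directions of the \emph{iff} separately, treating the forward direction as routine and focusing the work on the backward direction via a closure construction on models.

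For the forward direction ($\Rightarrow$), suppose $\rho$ is feasible in a model $\Mm$ with $\Mm \models \Aa \cup \set{\varphi^R_p}$. Since $\Mm$ is also an $\Aa$-model, it suffices to argue that $h^R_p(\rho)$ is feasible in $\Mm$. By Lemma~\ref{lem:same-terms}, variable values in $\Mm$ coincide along $\rho$ and along $h^R_p(\rho)$, so every assume of $\rho$ that survives in $h^R_p(\rho)$ continues to hold. The extra assumes inserted by $h^R_p$ -- namely $\passume(R(x,x))$ for $p=\refl$, $\passume(\neg R(x,x))$ for $p=\irrefl$, and $\passume(R(y,x))$ or $\passume(\neg R(y,x))$ for $p=\sym$ -- are semantically entailed by $\varphi^R_p$ together with the preceding original assume of $\rho$, and hence hold automatically in $\Mm$.

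For the backward direction ($\Leftarrow$), assume $h^R_p(\rho)$ is feasible in some $\Aa$-model $\Mm$. The plan is to build $\Mm^\ast$ from $\Mm$ by closing $\sem{R}_\Mm$ under the required property while leaving all other interpretations unchanged:
\[
\Mm^\ast \;=\; \begin{cases}
\transclosureext{R}{\sym}(\Mm), & \text{if } p=\sym \text{ and } \varphi^R_\trans \in \Aa,\\
\closureext{R}{p}(\Mm), & \text{otherwise.}
\end{cases}
\]
The invariance hypothesis yields $\Mm^\ast \models \Aa$, and $\Mm^\ast \models \varphi^R_p$ by definition of the (transitive) closure extension; when $\varphi^R_\trans \in \Aa$, the additional transitive closure step is precisely what keeps $\Mm^\ast \models \varphi^R_\trans$ after symmetrisation. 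Proposition~\ref{prop:terms-evaluate-same-relational-closure} tells us that every term evaluates identically in $\Mm$ and $\Mm^\ast$, so the equality, disequality, and non-$R$ relational assumes of $\rho$ transfer unchanged from $\Mm$ to $\Mm^\ast$. Positive $R$-assumes are preserved because $\sem{R}_\Mm \subseteq \sem{R}_{\Mm^\ast}$ for $p \in \set{\refl,\sym}$, while for $p=\irrefl$ the closure only drops reflexive pairs, which are already absent in $\Mm$ on any computed term thanks to the $\passume(\neg R(x,x))$ inserted by $h^R_\irrefl$.

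The technical crux, and the main obstacle, is to verify that the negative $R$-assumes of $\rho$ continue to hold in $\Mm^\ast$, even though the closure introduces new $R$-pairs. The instrumentation $h^R_p$ is tailored precisely for this: for $p=\refl$, any $\passume(\neg R(x,y))$ in $\rho$ with $\sem{x}_\Mm = \sem{y}_\Mm$ on computed terms would already conflict in $\Mm$ with the $\passume(R(x,x))$ inserted by $h^R_\refl$, so no reflexive pair added by closure can break a negative assume; an analogous argument handles $p=\irrefl$ using $\passume(\neg R(x,x))$ insertions, and $p=\sym$ using the $\passume(\neg R(y,x))$ that shadow each inserted $\passume(R(y,x))$. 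The $p=\sym$ case with $\varphi^R_\trans \in \Aa$ requires the most care because the additional transitive closure may create further pairs; here the reasoning mirrors the analysis in the proof of Theorem~\ref{thm:transitivity-reg}, which shows that every implied negative $R$-pair is already tracked in $\kappa(h^R_\sym(\rho))$, so transitive closure of the symmetrisation does not violate any negative assume. A case-by-case verification, one case per $p$, constitutes the bulk of the proof.
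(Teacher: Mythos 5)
Your forward direction and your handling of the cases $p=\refl$, $p=\irrefl$, and $p=\sym$ without $\varphi^R_\trans\in\Aa$ match the paper's argument. The gap is in the one case you yourself flag as the crux: $p=\sym$ with $\varphi^R_\trans\in\Aa$. There you take $\Mm^\ast=\transclosureext{R}{\sym}(\Mm)$, i.e.\ you symmetrize and transitively close the \emph{entire} interpretation $\sem{R}_\Mm$. This can break negative assumes, and the appeal to the bookkeeping of Theorem~\ref{thm:transitivity-reg} does not rescue it: that analysis tracks negative pairs implied by the \emph{assumes} in the execution, whereas your closure also chains through pairs of $\sem{R}_\Mm$ that no assume ever mentions, and for those the instrumentation $h^R_\sym$ has inserted nothing. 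Concretely, take $\Aa=\set{\varphi^R_\trans}$ and an execution $\rho$ whose only $R$-statement is $\passume(\neg R(x_a,x_c))$, and a model $\Mm$ with three elements $a,b,c$ and $\sem{R}_\Mm=\set{(b,a),(b,c)}$. This $\sem{R}_\Mm$ is transitive and $h^R_\sym(\rho)$ is feasible in $\Mm$ (it only adds $\passume(\neg R(x_c,x_a))$, which also holds). But the symmetric closure contains $(a,b)$ and $(b,c)$, so the subsequent transitive closure contains $(a,c)$, and $\rho$ is not feasible in your $\Mm^\ast$.

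The paper avoids this by first passing to the execution-restriction $\execrestrict^R(\Mm,h^R_\sym(\rho))$, which throws away every pair of $\sem{R}_\Mm$ except those of the form $(\sem{t_1}_\Mm,\sem{t_2}_\Mm)$ with $R(t_1,t_2)\in\kappa(h^R_\sym(\rho))$, and only then closes under transitivity and symmetry (with a further subcase depending on whether $\varphi^R_\refl\in\Aa$). The point of restricting first is that the set of explicitly assumed pairs is already symmetric (by the definition of $h^R_\sym$) and is contained in $\sem{R}_\Mm$, so its symmetric--transitive closure stays inside the transitive relation $\sem{R}_\Mm$; hence no negative assume of $h^R_\sym(\rho)$, which holds in $\Mm$, can be violated. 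Your construction needs to be replaced by this restrict-then-close construction (and the accompanying check that the restricted-and-closed model still satisfies $\Aa$, which is where the hypothesis that the only $R$-sentences in $\Aa$ are reflexivity/irreflexivity/transitivity is used).
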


\begin{proof}
($\Rightarrow$). Let $\rho$ be feasible modulo $\Aa \uplus \set{\varphi^R_p}$.
Then, there is a model $\Mm$ such that $\Mm \models \Aa \uplus \set{\varphi^R_p}$
and $\Mm \models \kappa(\rho)$.
Now, we first note that all the ground predicates 
$\phi \in \kappa(h^R_p(\rho)) \setminus \kappa(\rho)$, we have that
$\Mm \models \phi$ because $\Mm \models \set{\varphi^R_p}$.
This means, that $h^R_p(\rho)$ is feasible in $\Mm$, thus implying that
$h^R_p(\rho)$ is feasible modulo $\Aa$.

($\Leftarrow$).
Let $\rho' = h^R_p(\rho)$ be feasible modulo $\Aa$.
Then, there is a model $\Mm$ such that $\Mm \models \Aa$
and $\Mm \models \kappa(h^R_p(\rho))$.

Now we define a model $\Mm'$ as follows.
\begin{enumerate}
	\item If $p \neq \sym$ or $\varphi^R_\trans \not\in \Aa$, then
	$\Mm' = \closureext{R}{p}(\Mm)$.
	\item If $p = \sym$, $\varphi^R_\trans \in \Aa$ and $\varphi^R_\refl \not\in \Aa$, then
	$\Mm' = \transclosureext{R}{\sym}(\Mm'')$, 
	where $\Mm'' = \closureext{R}{\trans}(\execrestrict^R(\Mm, h^R_\sym(\rho)))$.
	\item If $p = \sym$, $\varphi^R_\trans \in \Aa$ and $\varphi^R_\refl \in \Aa$, then
	$\Mm' = \transclosureext{R}{\refl}(\Mm'')$, 
	where $\Mm'' = \transclosureext{R}{\refl}(\execrestrict^R(\Mm, h^R_\sym(\rho)))$.
\end{enumerate}

\textbf{Observations about $\Mm''$}.
We first make some simple observations about the intermediate 
model $\Mm''$ defined in the last two cases above.
\begin{enumerate}
	\item First, $\sem{R}_{\Mm''}$ is a symmetric relation.
	\item Second, $\Mm'' \models \Aa$.
	To see this, observe that $\Mm'' \models \Aa'$, where $\Aa'$ is the subset
	of $\Aa$ that do not contain sentences involving $R$.
	Further, $\Mm'$ is transitively (and also reflexively, as appropriate)
	closed and the only sentences that mention $R$ are reflexivity, irreflexivity or transitivity axioms.
	\item Third, $\sem{R}_{\Mm''} \subseteq \sem{R}_\Mm$.
	\item Fourth, $\Mm'' \models \kappa(h^R_\sym(\rho))$.
	This is because all terms evaluate to the same elements in $\Mm$ and 
	$\Mm''$ so all equality and disequality assumes in 
	$\kappa(h^R_\sym(\rho))$ hold in $\Mm''$.
	Similarly, for every relation $Q$ different from $R$, 
	all the assumptions involving $Q$ in $\kappa(h^R_\sym(\rho))$ hold.
	Let us consider the assumptions involving $R$. 
	All positive assumptions hold because of the way $\Mm''$ is defined.
	Let us consider a negative assume $\neg R(t_1, t_2) \in \kappa(h^R_\sym(\rho))$.
	Let $e_1 = \sem{t_1}_\Mm$ and $e_2 = \sem{t_2}$.
	We know that $\Mm \models \kappa(h^R_\sym(\rho))$ and thus
	$(e_1, e_2) \not\in \sem{R}_\Mm$ (also since $\sem{R}_\Mm$ is symmetric, we have
	$(e_2, e_1) \not\in \sem{R}_\Mm$)
	Assume on the contrary that $(e_1, e_2) \in \sem{R}_{\Mm''}$.
	Then, there are two possible cases.
	The first case is $e_1 = e_2$ (and thus $(e_1, e_2)$ is in $\sem{R}_{\Mm''}$
	because of reflexive closure). 
	But then $\Aa$ contains the reflexivity axiom and
	thus $(e_1, e_1) \in \sem{R}_\Mm$ giving us a contradiction.
	The second case is that $e_1 \neq e_2$. In this case, there must be
	elements $f_1, \ldots f_k$ such that $(f_i, f_{i+1}) \in \sem{R}_\execrestrict^R(\Mm, \rho)$, $e_1 = f_1$ and $e_2 = f_k$. 
	But then, since $\Mm \models \kappa(h^R_\sym(\rho))$, 
	it must be that  $(f_i, f_{i+1}) \in \sem{R}_\Mm$.
	Also, since $\Mm \models \varphi^R_\trans$, we must have $(e_1, e_2) \in \sem{R}_\Mm$ because of transitivity, giving us a contradiction.
	\item $\Mm' = \Mm''$.
\end{enumerate}

We will now argue that $\Mm' \models \Aa \uplus \set{\varphi^R_p}$
and is also feasible in $\rho$, which will imply $\rho$ is 
feasible modulo $\Aa \uplus \set{\varphi^R_p}$.
Since $\Aa$ is invariant under $(p,R)$-closure extension 
(or $(p,R)$ transitive closure extension, as appropriate),
we have that $\Mm' \models \Aa$ 
(as $\Mm'$ is either the closure extension of $\Mm$ or $\Mm''$, both of which satisfy $\Aa$).
Further, $\Mm' \models \varphi^R_p$ by definition.
Thus, $\Mm' \models \Aa \uplus \set{\varphi^R_p}$.
Now, we argue that $\Mm' \models \kappa(\rho)$.
Let $t_1 \bowtie t_2 \in \kappa(\rho)$ be an equality or disequality atom
in $\kappa(\rho)$.
First, observe that $t_1 \bowtie t_2 \in \kappa(h^R_p(\rho))$ 
as $\kappa(\rho) \subseteq \kappa(h^R_p(\rho))$, and thus
$\Mm \models t_1 \bowtie t_2$ 
(and when $\Mm''$ is defined, $\Mm'' \models t_1 \bowtie t_2$).
Further, observe that for every term $t$, 
$\sem{t}_{\Mm'} = \sem{t}_\Mm$ 
(or $\sem{t}_{\Mm'} = \sem{t}_{\Mm''}$ as appropriate)
(see \propref{terms-evaluate-same-relational-closure}).
This means that $\Mm' \models t_1 \bowtie t_2$
A similar argument ensures that
for every predicate $\psi$ of the form 
$Q(t_1, \ldots, t_k)$ or $\neg Q(t_1, \ldots, t_k)$
(where $Q$ is different from $R$) in $\kappa(\rho)$,
$\Mm' \models \psi$.
Finally, we argue about predicate atoms involving $R$.
We do a case-by-case analysis depending upon what $p$ is.
\begin{description}
\item[$p \in \set{\sym,\refl}$] 
For every positive predicate $R(t_1, t_2) \in \kappa(\rho)$, 
we have that $R(t_1, t_2) \in \kappa(h^R_p(\rho))$ 
and thus $\Mm \models R(t_1, t_2)$.
Now, notice that $\sem{R}_{\Mm} \subseteq \sem{R}_{\Mm'}$ 
(because of the way symmetric or reflexive closure is defined), 
we have that
$\Mm' \models R(t_1, t_2)$.
Let us now consider a negative predicate 
$\neg R(t_1, t_2) \in \kappa(\rho) \subseteq \kappa(h^R_p(\rho))$.
Let $e_1 = \sem{t_1}_{\Mm'}$ and $e_2 = \sem{t_2}_{\Mm'}$.
Suppose on the contrary that $(e_1, e_2) \in \sem{R}_{\Mm'}$
but $(e_1, e_2) \not\in \sem{R}_{\Mm}$.

Here we have the following subcases - 
\begin{itemize}
	\item \textbf{Case $p = \refl$}.
	By definition of $\sem{R}_{\Mm'}$, it must be that $e_1 = e_2$.
	However, by definition of $h^R_p(\rho)$, it must be that
	$R(t_1, t_1) \in \kappa(h^R_p(\rho))$ since $t_1$ is a computed term.
	Now since $\Mm \models \kappa(h^R_p(\rho))$, it must be that
	$(e_1, e_1) \in \sem{R}_{\Mm}$ giving us a contradiction.

	\item \textbf{Case $p = \sym$ and $\varphi^R_\trans \not\in \Aa$}.
	By definition of symmetric closure, we must have $(e_2, e_1) \in \sem{R}_\Mm$.
	Now, $\psi = \neg R(t_1, t_2) \in \kappa(\rho)$.
	We therefore have, by definition of $h^R_\sym$,
	$\psi' = \neg R(t_2, t_1)\in \kappa(h^P_R(\rho))$
	and thus $(e_2, e_1) \not\in \sem{R}_\Mm$, giving us a contradiction.

	\item \textbf{Case $p = \sym$, $\varphi^R_\trans \in \Aa$}.
	Argued in the paragraph titled `\textbf{Observations about $\Mm''$}' above.
	
\end{itemize}

\item[$p = \irrefl$]
For a negative predicate $\psi = \neg R(t_1, t_2) \in \kappa(\rho)$, 
it is easy to see that $\Mm' \models \psi$.
Let us consider a positive predicate $\psi = R(t_1, t_2) \in \kappa(\rho)$.
Let $e_1 = \sem{t_1}_{\Mm'}$ and $e_2 = \sem{t_2}_{\Mm'}$.
Since $\kappa(\rho) \subseteq \kappa(h^R_p(\rho))$, we have that
$(e_1, e_2) \in \sem{R}_\Mm$.
Suppose on the contrary that $\Mm' \not\models \psi$ and thus
$(e_1, e_2) \not\in \sem{R}_{\Mm'}$.
By definition of irreflexivity closure, we must have $e_1 = e_2$.
However note that by definition of $h^R_\irrefl$, $\neg R(t_1, t_1) \in \kappa(h^R_p(\rho))$
and thus $(e_1, e_2) \not\in \sem{R}_\Mm$ giving us a contradiction.
\end{description}
Thus, $\rho$ is feasible in $\Mm'$ which is a $\Aa$ model.
\end{proof}

\begin{lemma}[Preservation of Term Equalities]
\lemlabel{preservation-term-equalities-relations}
Let $\Sigma = (\Cc, \Ff, \Rr)$ be an FO signature, 
$R \in \Rr$ be a binary relation and 
$p \in \set{\refl,\irrefl,\sym}$.
Let $\Aa$ be a set of first order sentences over $\Sigma$
such that if $p = \sym$ and $\varphi^R_\trans \in \Aa$, then
the only other sentences in $\Aa$ that mention $R$ are
$\varphi^R_\refl, \varphi^R_\irrefl$ or $\varphi^R_\trans$.
Further, assume that $\Aa$ is
invariant under $(p,R)$-closure extension 
(or $(p,R)$-transitive closure extension if $\varphi^R_\trans \in \Aa$).
For any execution $\rho$, and any
two computed terms $t_1,t_2 \in \Terms(h^R_p(\rho))$,
\[
t_1 \congcl{\Aa \cup \set{\varphi^R_p} \cup \kappa(\rho)} t_2
\quad
\mbox{iff}
\quad
t_1 \congcl{\Aa \cup \kappa(h^R_p(\rho))} t_2.
\]
\end{lemma}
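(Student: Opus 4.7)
The plan is to convert the congruence question into a satisfiability question and then re-use the model-construction machinery already developed in \lemref{preservation-feasibility-relations}. The starting observation is that for any set of ground sentences $\Gamma$ one has $t_1 \congcl{\Gamma} t_2$ precisely when $\Gamma \cup \set{t_1 \neq t_2}$ has no model; so the stated biconditional reduces to showing that $\Aa \cup \set{\varphi^R_p} \cup \kappa(\rho) \cup \set{t_1 \neq t_2}$ is satisfiable if and only if $\Aa \cup \kappa(h^R_p(\rho)) \cup \set{t_1 \neq t_2}$ is.

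I would dispose of the easier ``$\Leftarrow$'' direction first: any model $\Mm$ of $\Aa \cup \set{\varphi^R_p} \cup \kappa(\rho) \cup \set{t_1 \neq t_2}$ already satisfies $\kappa(h^R_p(\rho))$, because every ground atom in $\kappa(h^R_p(\rho)) \setminus \kappa(\rho)$ --- namely the $R(t,t)$ assumes added by $h^R_\refl$, the $\neg R(t,t)$ assumes added by $h^R_\irrefl$, and the mirrored $R(t_2,t_1)$ or $\neg R(t_2,t_1)$ assumes added by $h^R_\sym$ --- is an immediate logical consequence of $\varphi^R_p$ combined with atoms already present in $\kappa(\rho)$. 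The disequality $t_1 \neq t_2$ carries over verbatim, so $\Mm$ also witnesses satisfiability on the right-hand side.

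The nontrivial direction ``$\Rightarrow$'' is handled by exactly the same closure-extension construction used for feasibility in \lemref{preservation-feasibility-relations}. Given a model $\Mm$ of $\Aa \cup \kappa(h^R_p(\rho)) \cup \set{t_1 \neq t_2}$, I would form $\Mm' = \closureext{R}{p}(\Mm)$ in the generic case, and in the special case where $p = \sym$ and $\varphi^R_\trans \in \Aa$ I would use the iterated construction based on $\execrestrict^R(\Mm, h^R_\sym(\rho))$ followed by the appropriate transitive (and possibly reflexive) closures. The invariance hypothesis on $\Aa$ gives $\Mm' \models \Aa$; by construction $\Mm' \models \varphi^R_p$; and by \propref{terms-evaluate-same-relational-closure} every term evaluates identically in $\Mm$ and $\Mm'$, so every equality, every disequality (including the newly inserted $t_1 \neq t_2$), and every atom over a relation symbol other than $R$ in $\kappa(\rho)$ carries across. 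Positive $R$-atoms carry across because closure only grows $\sem{R}$ (with the obvious exception of $p = \irrefl$, where closure only shrinks it and the positive atoms were never threatened).

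The main obstacle --- and the only place where the extra assumes added by $h^R_p$ actually earn their keep --- is preserving the negative $R$-atoms $\neg R(s_1, s_2) \in \kappa(\rho)$ under the closure operation. For this I would reuse, step for step, the case analysis already carried out in the feasibility lemma: for $p = \refl$ any pair introduced by closure must lie on the diagonal and therefore contradict one of the $R(t,t)$ atoms that $h^R_\refl$ inserted into $\kappa(h^R_p(\rho))$; for $p = \irrefl$ closure only removes pairs, so no new positive $R$-pair can arise; for $p = \sym$ a new pair arising from symmetric closure would contradict the mirrored $\neg R(s_2, s_1)$ atom placed by $h^R_\sym$; and the combined case $p = \sym$ with $\varphi^R_\trans \in \Aa$ is subsumed by the multi-step argument about the intermediate model $\Mm''$ spelled out inside \lemref{preservation-feasibility-relations}. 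Once the negative $R$-atoms are shown to survive, $\Mm'$ witnesses satisfiability of $\Aa \cup \set{\varphi^R_p} \cup \kappa(\rho) \cup \set{t_1 \neq t_2}$, closing the reduction and proving the lemma.
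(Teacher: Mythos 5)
Your proposal is correct and follows essentially the same route as the paper: the easy direction via the observation that every atom added by $h^R_p$ is entailed by $\varphi^R_p$ together with $\kappa(\rho)$, and the hard direction via the closure-extension model construction, using \propref{terms-evaluate-same-relational-closure} to carry the disequality $t_1 \neq t_2$ across. The only (cosmetic) difference is that the paper packages the hard direction by appending $\dblqt{\passume(t_1 \neq t_2)}$ to $\rho$ and invoking \lemref{preservation-feasibility-relations} as a black box, whereas you inline the identical construction.
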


\begin{proof}
First observe that for every $\psi \in \kappa(h^R_p(\rho)) \setminus
\kappa(\rho)$, we have $\Aa \cup \set{\varphi^R_p} \models
\psi$. Therefore, every $\Aa \cup \set{\varphi^R_p}
\kappa(\rho)$-model is also a $\Aa \cup
\kappa(h^R_p(\rho))$-model. 
Hence, if 
$t_1 \congcl{\Aa \cup \kappa(h^R_p(\rho))} t_2$ then 
$t_1 \congcl{\Aa \cup \set{\varphi^R_p} \cup \kappa(\rho)} t_2$.

For the other direction, suppose $t_1 \ncongcl{\Aa \cup
  \kappa(h^R_p(\rho))} t_2$. Then by definition, there is a $\Aa \cup
\kappa(h^R_p(\rho))$ model $\Mm$ such that $\sem{t_1}_{\Mm} \neq
\sem{t_2}_{\Mm}$. Consider the execution $\rho_1 =
\rho\cdot\dblqt{\passume(t_1\neq t_2)}$. Technically $\rho_1$ is not
an execution by our definition. What we mean is to copy the terms
$t_1$ an $t_2$ in fresh variables when they are computed, and assume
that those variables are not equal; we skip doing this
precisely. Observe that $h^R_p(\rho_1) =
h^R_p(\rho)\cdot\dblqt{\passume(t_1\neq t_2)}$. Based on our
assumptions, $h^R_p(\rho_1)$ is feasible in $\Mm$.
By \lemref{preservation-feasibility-relations}, we have $\rho_1$ is feasible in some
$\Aa \cup \set{\varphi^R_p}$-model $\Mm'$. Thus, $\sem{t_1}_{\Mm'}
\neq \sem{t_2}_{\Mm'}$, and so $t_1 \ncongcl{\Aa \cup
  \set{\varphi^R_p} \cup \kappa(\rho)} t_2$.
\end{proof}

\begin{lemma}[Preservation of Coherence]
\lemlabel{preservation-coherence-relations}
Let $\rho$ be an execution, $R \in \Rr$ and 
$p \in \set{\refl,\irrefl,\sym}$.
Let $\Aa$ be a set of first order sentences over $\Sigma$
such that if $p = \sym$ and $\varphi^R_\trans \in \Aa$, then
the only other sentences in $\Aa$ that mention $R$ are
$\varphi^R_\refl, \varphi^R_\irrefl$ or $\varphi^R_\trans$.
Further, assume that $\Aa$ is
invariant under $(p,R)$-closure extension 
(or $(p,R)$-transitive closure extension if $\varphi^R_\trans \in \Aa$).
$\rho$ is coherent modulo $\Aa \uplus \set{\varphi^R_p}$
iff
$h^R_p(\rho)$ is coherent modulo $\Aa$.
\end{lemma}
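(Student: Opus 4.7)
The plan is to use the two preceding preservation lemmas, \lemref{same-terms} and \lemref{preservation-term-equalities-relations}, to translate each of the two coherence conditions (memoizing and early-assumes) between $\rho$ and $h^R_p(\rho)$ in both directions. The crux is that $h^R_p$ is a homomorphism that only \emph{inserts} $\passume$ statements and never alters an assignment: for every prefix $\sigma$ of $\rho$ the computed term $\comp(\sigma,x)$ agrees with $\comp(h^R_p(\sigma),x)$ by \lemref{same-terms}, and by \lemref{preservation-term-equalities-relations} the congruence $\congcl{\Aa \cup \{\varphi^R_p\} \cup \kappa(\sigma)}$ coincides with $\congcl{\Aa \cup \kappa(h^R_p(\sigma))}$ on all computed terms. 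These two facts convert every equivalence question that arises in the coherence check for one side into the corresponding question for the other.

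I would then set up the natural bijection between assignment-ending prefixes: $\sigma \cdot \dblqt{x \passign f(\vec{z})}$ of $\rho$ pairs with $h^R_p(\sigma) \cdot \dblqt{x \passign f(\vec{z})}$ of $h^R_p(\rho)$, and vice versa. Since stored terms and the relevant congruences match, the memoizing clause transfers across the biconditional directly in both directions. For the early-assumes clause, I would split assume-ending prefixes of $h^R_p(\rho)$ into (i) prefixes of the form $h^R_p(\sigma \cdot \dblqt{\passume(c)})$ arising from an original assume of $\rho$, and (ii) prefixes ending in a freshly-inserted $\passume$. Case (i) transfers immediately: the set of dropped terms, the computed terms, and the relevant congruences all agree via the preservation lemmas, so a violation on one side produces a violation on the other.

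The hard part will be case (ii): showing that the freshly-inserted $\passume$s cannot themselves violate early-assumes in $h^R_p(\rho)$. Each inserted assume sits \emph{immediately} after the event that triggered it, with no intervening assignment, so no term live at the trigger has been dropped. For $p \in \{\refl, \irrefl\}$ the inserted $\passume(R(x,x))$ or $\passume(\neg R(x,x))$ is on a just-assigned variable and, under the hypothesis that $\Aa$ is invariant under $(p,R)$-closure extension, contributes no new equalities between distinct computed terms. For $p = \sym$ the inserted $\passume(R(y,x))$ is, by the same invariance hypothesis, already a consequence of the congruence modulo $\Aa \cup \kappa(\sigma \cdot \passume(R(x,y)))$, so it again adds no implied equalities. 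The genuinely delicate subcase is $p = \sym$ together with $\varphi^R_\trans \in \Aa$, where transitivity can propagate relational effects non-locally; here I would invoke invariance under $(\sym,R)$-transitive-closure extension and reuse the model construction from the proof of \lemref{preservation-feasibility-relations} to witness that no implicit equality between a live term and a dropped term is created by the inserted assume. Collecting these cases yields the biconditional.
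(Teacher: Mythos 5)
Your proposal is correct and follows essentially the same route as the paper, whose entire proof is the one-liner ``Follows from \lemref{same-terms} and \lemref{preservation-term-equalities-relations}'' (the citation of \lemref{preservation-coherence-relations} there is evidently a typo for the term-equality lemma). You in fact supply more detail than the paper does, in particular the only delicate point --- that the freshly inserted $\passume$s cannot themselves violate the early-assumes condition because they entail no new equalities between computed terms --- which the paper leaves implicit.
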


\begin{proof}
Follows from \lemref{same-terms} and \lemref{preservation-coherence-relations}.
\end{proof}

\begin{lemma}[Preservation of Feasibility]
\lemlabel{preservation-feasibility-coherence-functions}
Let $\Sigma = (\Cc, \Ff, \Rr)$ be a FO signature,
$f \in \Ff$ be a unary or binary function  and let
$p \in \set{\comm,\idem}$.
Let $\Aa$ be a set of first order sentences over $\Sigma$
invariant under 1-element extension.
Also assume that $\Aa$ has no sentence that mentions $f$.
Let $\rho$ be an execution.
\begin{enumerate}
\item $\rho$ is feasible modulo $\Aa \uplus \set{\varphi^f_p}$ iff $h^f_p(\rho)$ is feasible modulo $\Aa$.
\item $\rho$ is coherent modulo $\Aa \uplus \set{\varphi^f_p}$ iff $h^f_p(\rho)$ is coherent modulo $\Aa$.
\end{enumerate}
\end{lemma}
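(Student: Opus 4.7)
The plan is to mirror the structure of Lemma~\lemref{preservation-feasibility-relations} and Lemma~\lemref{preservation-coherence-relations}, treating each direction of each item separately. A key preparatory observation is that, because $v^* \notin V$ and the extra statements inserted by $h^f_p$ never modify any variable in $V$, the computed-term maps agree, i.e.\ $\comp(\rho,x) = \comp(h^f_p(\rho),x)$ for every $x \in V$ (this is the functional analogue of Lemma~\lemref{same-terms}). Consequently $\kappa(\rho) \subseteq \kappa(h^f_p(\rho))$, and indeed $\kappa(h^f_p(\rho)) \setminus \kappa(\rho)$ consists entirely of equality atoms of the form $f(t_x,t_y) = f(t_y,t_x)$ (for $p=\comm$) or $f(t) = f(f(t))$ (for $p=\idem$), each of which is $\Aa \cup \set{\varphi^f_p}$-valid.

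For part~(1), the forward direction is immediate: any $\Aa \cup \set{\varphi^f_p}$-model $\Mm$ that witnesses feasibility of $\rho$ also satisfies every atom in $\kappa(h^f_p(\rho)) \setminus \kappa(\rho)$, hence witnesses feasibility of $h^f_p(\rho)$ modulo $\Aa$. For the backward direction I would start with an $\Aa$-model $\Mm$ of $h^f_p(\rho)$ and construct a model $\Mm'$ by composing a one-element extension (adjoining the fresh element $e_*$ of $\oneext(\Mm)$) with a redefinition of $\sem{f}$: on those inputs that correspond to the evaluations in $\Mm$ of a subterm pattern computed by $h^f_p(\rho)$ (namely a pair $(e_1,e_2)$ with $f(t_1,t_2) \in \Terms(h^f_p(\rho))$ and $e_i = \sem{t_i}_\Mm$, for $\comm$; and a singleton with $f(t) \in \Terms(h^f_p(\rho))$, for $\idem$), keep $\sem{f}_{\Mm'}$ equal to $\sem{f}_\Mm$; on all other inputs, send to $e_*$. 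The inserted $\passume$s force $\sem{f}_\Mm$ to be commutative (resp.\ idempotent) on the ``computed'' inputs, and sending everything else to $e_*$ (with $\sem{f}_{\Mm'}(e_*,\cdot) = \sem{f}_{\Mm'}(\cdot,e_*) = e_*$, resp.\ $\sem{f}_{\Mm'}(e_*) = e_*$) ensures $\Mm' \models \varphi^f_p$ globally. Since $\Aa$ does not mention $f$ and is invariant under 1-element extensions (Lemma~\lemref{invariance-one-element}), $\Mm' \models \Aa$; and by Proposition~\propref{terms-evaluate-same-1-element} every computed term of $\rho$ evaluates in $\Mm'$ exactly as in $\Mm$, so $\Mm' \models \kappa(\rho)$.

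For part~(2), the plan is first to prove the term-equality preservation statement, analogous to Lemma~\lemref{preservation-term-equalities-relations}: for any computed $t_1,t_2 \in \Terms(h^f_p(\rho))$,
\[
t_1 \congcl{\Aa \cup \set{\varphi^f_p} \cup \kappa(\rho)} t_2
\quad\text{iff}\quad
t_1 \congcl{\Aa \cup \kappa(h^f_p(\rho))} t_2.
\]
The $(\Leftarrow)$ direction is immediate from $\Aa \cup \set{\varphi^f_p} \models \kappa(h^f_p(\rho)) \setminus \kappa(\rho)$. The $(\Rightarrow)$ direction proceeds contrapositively: if the right-hand side fails, there is an $\Aa$-model of $\kappa(h^f_p(\rho)) \cup \set{t_1 \neq t_2}$; extending $\rho$ with a $\passume(t_1 \neq t_2)$-gadget (computed via fresh witness variables) and applying part~(1) to this extended execution yields an $\Aa \cup \set{\varphi^f_p}$-model witnessing $t_1 \ncongcl{\Aa \cup \set{\varphi^f_p} \cup \kappa(\rho)} t_2$. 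With this equivalence in hand, both the memoizing and early-assumes conditions of Definition~\defref{coherence-def} read identically when tested on $\rho$ relative to $\Aa \cup \set{\varphi^f_p}$ and on $h^f_p(\rho)$ relative to $\Aa$, because (i) the sets of computed terms coincide up to the auxiliary variable $v^*$ (which holds the same term as some program variable at every step where coherence is tested), and (ii) the dropped/live-in-window status of any term is determined by equality classes, which are preserved.

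The main obstacle is the $(\Leftarrow)$ direction of part~(1), specifically verifying that the redefined $\sem{f}_{\Mm'}$ is consistent with every atom of $\kappa(\rho)$ and simultaneously satisfies $\varphi^f_p$ \emph{globally} (not just on computed inputs). The subtle point is that $\sem{f}$ is changed, so one must check that wherever $\kappa(\rho)$ forces an equality, disequality, or relational atom involving an $f$-headed term, the redefinition still matches $\Mm$ on precisely those inputs. The invariant is that whenever $f(\vec{z})$ appears in a computed term of $\rho$, the corresponding tuple of element-values in $\Mm$ falls into the ``kept'' case of the redefinition (this is exactly what the pattern $f(t_1,t_2) \in \Terms(h^f_p(\rho))$ captures), so $\sem{f}_{\Mm'}$ agrees with $\sem{f}_\Mm$ on every $f$-headed evaluation step of $\rho$. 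Combined with invariance of $\Aa$ under 1-element extensions and the non-mention of $f$ in $\Aa$, this delivers $\Mm' \models \Aa \cup \set{\varphi^f_p} \cup \kappa(\rho)$, completing the argument.
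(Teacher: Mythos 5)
Your proposal is correct and follows essentially the same route as the paper: the forward direction by direct satisfaction of the inserted assumes, the backward direction via a one-element extension composed with redefining $\sem{f}$ to agree with $\Mm$ on tuples arising from computed terms and to map everything else to $e_*$, and coherence via the term-equality preservation argument (contrapositive with a disequality gadget plus part~(1)) mirroring Lemma~\ref{lem:preservation-term-equalities-relations}. The only cosmetic difference is that you key the ``kept'' case of the redefinition on $\Terms(h^f_p(\rho))$ rather than $\Terms(\rho)$, which if anything makes the global verification of $\varphi^f_p$ (e.g.\ at $f(f(t))$ for idempotence) slightly cleaner.
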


\begin{proof}
Let us first argue preservation of feasibility.
($\Rightarrow$). 
Let $\rho$ be feasible modulo $\Aa \uplus \set{\varphi^f_p}$.
Then, there is a model $\Mm$ such that $\Mm \models \Aa \uplus \set{\varphi^f_p}$
and $\Mm \models \kappa(\rho)$.
Now, we first note that all the ground predicates 
$\phi \in \kappa(h^f_p(\rho)) \setminus \kappa(\rho)$, we have that
$\Mm \models \phi$ because $\Mm \models \set{\varphi^f_p}$.
This means, that $h^f_p(\rho)$ is feasible in $\Mm$, thus implying that
$h^f_p(\rho)$ is feasible modulo $\Aa$.

($\Leftarrow$).
Let $h^f_p(\rho)$ be feasible modulo $\Aa$.
Then, there is a model $\Mm$ such that $\Mm \models \Aa$
and $\Mm \models \kappa(h^f_p(\rho))$.
Let $\Mm' = \oneext(\Mm)$ and let $e_*$ 
be the extra element added in the construction of $\Mm'$.
By \lemref{invariance-one-element}, we have
$\Mm' \models \Aa$ and $\Mm' \models \kappa(h^f_p(\rho))$.
Now, consider the model $\Mm'' = (\Uu_{\Mm''}, \sem{}_{\Mm''})$ with
$\Uu_{\Mm''} = \Uu_{\Mm'}$ and $\sem{}_{\Mm''}$ is the
same as $\sem{}_{\Mm'}$ except for the interpretation of the
function $f$ defined as follows.
\begin{align*}
		\sem{f}_{\Mm''}(e_1, \ldots, e_k) = 
		\begin{cases}
			\sem{f}_{\Mm'}(e_1, \ldots, e_k) & 
			\begin{aligned}\text{ if there are terms } t_1, \ldots, t_k, t \in \Terms(\rho) \\
			\text{ such that } t = f(t_1, \ldots, t_k), \forall i \cdot \sem{t_i}_{\Mm} = e_i
			\end{aligned} \\
			e_* & \text{ otherwise }
		\end{cases}
\end{align*}
First observe that since $\Aa$ does not mention $f$, we have
$\Mm'' \models \Aa$.
Second, one can use induction (on the structure of terms)
to show that for every computed term 
$t \in \Terms(\rho)$, we have $\sem{t}_{\Mm''} = \sem{t}_{\Mm}$.
This means that $\Mm'' \models \kappa(h^f_p(\rho))$ and thus $\Mm'' \models \kappa(h^f_p(\rho))$.
Finally, $\Mm'' \models \varphi^f_p$ by construction.

The proof for preservation of coherence follows the same structure as that in relations
(\lemref{preservation-term-equalities-relations} and \lemref{preservation-coherence-relations}).
\end{proof}

We now move to the proof of~\thmref{combination}.
Our overall approach for proving this theorem is the following.
We partition the set of axioms as 
$\Aa = \Aa_\rel \uplus \Aa_\fun \uplus \Aa_\trans$,
where $\Aa_\rel$ is the set of relational axioms
except for the transitivity axioms 
(i.e., reflexivity, irreflexivity, symmetry),
$\Aa_\fun$ are the axioms of idempotence and commutativity of different functions,
and $\Aa_\trans$ is the set of transitivity axioms for different relations.
Now given an execution $\rho$, we first successively remove axioms
from $\Aa_\rel$. 
That is, let $\Aa_\rel = \set{\varphi^{R_i}_{p_i}}_{i=1}^k$ be some ordering
on the set of relational axioms.
We build the annotated execution $\rho'$ obtained by successively applying
the corresponding homomorphic transformations $h_i = h^{R_i}_{p_i}$.
That is $\rho_0 = \rho$, $\rho_{i+1} = h_i(\rho_i)$ and $\rho' = \rho_k$.
Our lemmas ensure that $\rho'$ is feasible and coherent modulo $\Aa_\fun \uplus \Aa_\trans$
iff the original execution $\rho$ was feasible and coherent modulo $\Aa$.
Further, $\rho'$ is effectively constructible.
Next, we eliminate the functional axioms using a similar strategy and get
an execution $\rho''$ which is feasible and coherent modulo $\Aa_\trans$
iff $\rho$ is feasible and coherent modulo $\Aa$.

\begin{reptheorem}{thm:combination}
Let $\Aa$ be a set of axioms where each relation symbol $R$ is either a total order or satisfies some (possibly empty) subset of properties out of reflexivity, irreflexivity, symmetry, transitivity, and each function symbol $f$ satisfies some (possibly empty) subset out of commutativity and idempotence. The verification problem for coherent programs modulo $\Aa$ is $\pspc$-complete.
\end{reptheorem}

\begin{proof}
$\pspc$-hardness follows from the $\pspc$-hardness of 
verification modulo $\emptyset$ as proved in~\cite{coherence2019}.
We focus on the $\pspc$ upper bound, for which we will show that the
set of executions that are feasible and coherent modulo $\Aa$ is regular
and accepted by an automaton of size $O(2^{\text{poly}(|V|)})$.

Let $L = \exec(s)$ be the set of executions 
of the given coherent program $s$; $L$ is regular.
Since total orders are reducible to partial orders (\lemref{total-to-partial}) (under appropriate assumptions on the trace), we will assume we only have combinations of the other axioms we consider (and not total orders).
Let $\Aa_\rel = \set{\varphi^{R_i}_{p_i}}_{i=1}^{k_\rel}$ 
be some arbitrary ordering on the set of relational axioms in $\Aa_\rel$.
We define a sequence of languages $L_0, \ldots, L_{k_\rel}$ as :
$L_0 = L$, $L_{i+1} = h^{R_{i+1}}_{p_{i+1}}(L_i)$.
Let $L_\rel = L_{k_\rel}$.
We can inductively argue that - 
\begin{enumerate}
	\item $L_{\rel}$ is regular (since regular languages are closed under homomorphism),
	\item $L_{\rel}$ is feasible modulo $\Aa \setminus \Aa_\rel = \Aa_\fun \uplus \Aa_\trans$ iff $L$ is feasible modulo $\Aa$ (using~\lemref{preservation-feasibility-relations}), and
	\item $L_{\rel}$ is coherent modulo $\Aa_\fun \uplus \Aa_\trans$ iff $L$ is coherent modulo $\Aa$ (using~\lemref{preservation-coherence-relations}). Since the given program $s$ is assumed to be coherent modulo $\Aa$, we have $L_{\rel}$ is indeed coherent modulo $\Aa_\fun \uplus \Aa_\trans$.
\end{enumerate}
Here, by feasibility (resp. coherence) of a language, we mean feasibility (resp. coherence) of each of the strings in the language.

We now analogously get rid of axioms in $\Aa_\fun$ one at a time.
Let $\Aa_\fun = \set{\varphi^{f_i}_{q_i}}_{i=1}^{k_\fun}$ 
be some arbitrary ordering on the set of functional axioms in $\Aa_\fun$.
We define a sequence of languages $K_0, \ldots, K_{k_\fun}$ as :
$K_0 = L_\rel$, $K_{i+1} = h^{f_{i+1}}_{q_{i+1}}(K_i)$.
Let $L_\fun = K_{k_\fun}$.
We can inductively argue that - 
\begin{enumerate}
	\item $L_{\fun}$ is regular (since regular languages are closed under homomorphism).
	\item $L_{\fun}$ is feasible modulo $\Aa_\trans$ iff $L_\rel$ is feasible modulo $\Aa_\fun \uplus \Aa_\trans$ (using~\lemref{preservation-feasibility-coherence-functions}).
	This implies that $L_{\fun}$ is feasible modulo $\Aa_\trans$ iff $L$ is feasible modulo $\Aa$.
	\item $L_{\fun}$ is coherent modulo $\Aa_\trans$ iff $L_\rel$ is coherent modulo $\Aa_\fun \uplus \Aa_\trans$ (using~\lemref{preservation-feasibility-coherence-functions}). Together with the previous observations, we have $L_{\fun}$ is indeed coherent modulo $\Aa_\trans$.
\end{enumerate}
\end{proof}

Thus, the verification problem reduces to checking if 
$L_\fun$ is feasible modulo $\Aa_\trans$. 
This problem is decidable as a consequence of~\thmref{transitivity-reg} 
and the fact that $L_\fun$ is coherent modulo $\Aa_\trans$.
In other words, we need to check for containment of two regular languages - 
$L_\fun \subseteq L(\Ff_\trans)$, where $\Ff_\trans$ is the automaton in~\thmref{transitivity-reg}, which is decidable.

The complexity arguments follows from the observation that 
$L$ (and thus $L_\fun$) are recognizable by NFAs of size linear in $|s|$,
$|V|$ and $|\Aa_\fun \uplus \Aa_\rel|$ and are also effectively constructible
and that the containment check can be done in $\pspc$.

\end{document}